\long\def\LongVersion#1\LongVersionEnd{#1}
\long\def\ShortVersion#1\ShortVersionEnd{}
\newcommand{\dgm}[0]{\textsf{DGM}}
\newcommand{\randdgm}[0]{\textsf{randomized DGM}}
\newcommand{\pboom}[0]{\textsf{parameterized boomerang}}
\newcommand{\pb}[0]{\textsf{PB}}
\newcommand{\qedd}{\nopagebreak \hfill $\Box$}
\newcommand{\Comment}[1]{}
\long\def\LongVersion#1\LongVersionEnd{#1}
\long\def\ShortVersion#1\ShortVersionEnd{}
\newenvironment{AvoidOverfullParagraph}[0]
{\sloppy\ignorespaces}
{\par\fussy\ignorespacesafterend}
\newtheorem{theorem}{Theorem}
\newtheorem{corollary}[theorem]{Corollary}
\newtheorem{example}[theorem]{Example}
\newtheorem{lemma}[theorem]{Lemma}
\newtheorem{definition}[theorem]{Definition}
\newtheorem{proposition}[theorem]{Proposition}
\newtheorem{claim}[theorem]{Claim}
\newenvironment{proof}{\noindent\bf{Proof.}\rm}{\hfill$\blacksquare$\bigskip}
\newenvironment{proofsketch}{\noindent\bf{Proof Sketch.}\rm}{\hfill$\blacksquare$}
\newcommand{\obsref}[1]{\mbox{Observation~\ref{#1}}}
\newcommand{\secref}[1]{\mbox{Section~\ref{#1}}}
\newcommand{\thref}[1]{\mbox{Theorem~\ref{#1}}}
\newcommand{\corref}[1]{\mbox{Corollary~\ref{#1}}}
\newcommand{\lemref}[1]{\mbox{Lemma~\ref{#1}}}
\newcommand{\propref}[1]{\mbox{Proposition~\ref{#1}}}
\newcommand{\figref}[1]{\mbox{Figure~\ref{#1}}}
\newcommand{\reals}{\mathbb{R}}
\newcommand{\bx}{\mathbf{x}}
\newcommand{\by}{\mathbf{y}}
\newcommand{\bz}{\mathbf{z}}
\newcommand{\bw}{\mathbf{w}}
\newcommand{\vecf}{\mathbf{f}}
\newcommand{\bxbar}{\mathbf{\bar{x}}}
\newcommand{\ybar}{\bar{y}}
\newcommand{\half}{\frac{1}{2}}
\newtheorem{mechanism}{Mechanism}
\begin{document}

\title{Strategyproof Location on a Network and the mini-Sum-of-Squares Objective}

\author{
Michal Feldman\thanks{Michal Feldman's work was partially supported by the Israel Science Foundation (grant number 1219/09), by the Leon Recanati
Fund of the Jerusalem School of Business Administration, the Google Inter-university center for Electronic Markets
and Auctions, and the People Programme (Marie Curie Actions) of the European Union's Seventh Framework
Programme (FP7/2007-2013) under REA grant agreement number 274919.
} \\
Hebrew University of Jerusalem and Harvard University\\
\texttt{mfeldman@huji.ac.il} \\
\and
Yoav Wilf \\
Hebrew University of Jerusalem\\
\texttt{yoavwilf@gmail.com} \\
}

\date{}
\maketitle

\begin{abstract}

We consider the problem of locating a public facility on a tree, where a set of $n$ strategic agents report their \emph{locations} and a mechanism determines, either deterministically or randomly, the location of the facility.
Game theoretic perspectives of the facility location problem have advanced in two main directions,
one focusing on the characterization of \emph{strategyproof} (SP) mechanisms and the other quantifying how well various objective functions can be approximated by SP mechanisms.
The present paper advances the theory in both directions.
First, we give a family of randomized SP mechanism for any tree network.
Quite miraculously, all of the deterministic and randomized SP mechanisms that have been previously proposed for various objective functions are special cases of our mechanism.
Thus, our mechanism unifies much of the existing literature on SP approximation for facility location problems.
Second, we use our mechanism to prove new bounds on the approximation of the minimum sum of squares (\emph{miniSOS}) objective in tree networks.
For randomized mechanisms, we propose a mechanism that gives 1.5-approximation for the miniSOS function on the line, and show that no other randomized SP mechanism can provide a better approximation.
For tree networks, we propose a mechanism that gives 1.83-approximation.
This result provides a separation between deterministic and randomized mechanisms, as we show that no deterministic mechanism can approximate the miniSOS function within a factor better than 2, even for the line.
Together, our study establishes a step toward the characterization of randomized SP facility location mechanisms on a tree, and provides a fundamental understanding of the miniSOS objective function.
\end{abstract}




\renewcommand{\thepage}{}
\clearpage
\pagenumbering{arabic}

\section{Introduction}
\label{sec:introduction}

In facility location problems, a social planner has to determine the location of a public facility that needs to serve a set of agents.
Once the facility is located, each agent incurs some cost that depends on the distance from her ideal location to the chosen location of the facility.
This class of problems is realized in many scenarios, including, for example, locating a server in telecommunication networks or locating a library or a fire station in a road network.
Facility location problems arise not only in physical settings like the ones just described, but also in more virtual settings where the agents' opinions or preferences can be represented as their locations, and a single outcome has to be chosen.
As an example, consider a set of students sitting in a classroom with an air conditioner, where every student has her most preferred temperature, and a single temperature has to be chosen.
In all of these examples, one ``location" has to be chosen, and every agent would like it to be as near as possible to her most preferred location.

What is the best method to settle the naturally conflicting preferences of the agents?
In other words, what is the best way to determine the location of the facility, given the agents' locations?
This question, and related ones, have been extensively studied in the literature, from various conceptual and algorithmic perspectives; see, e.g., Marsh and Schilling~\cite{MS94}, the book by Handler and Mirchandani~\cite{HM79}, and the body of literature concerning the performance of a \emph{Condorcet} point\footnote{A Condorcet point is a location that is preferred to any other location by more than half the agents.} \cite{Bandelt85,BL86,Labbe85}.
Starting with the work of Moulin~\cite{Moul80}, this problem was also studied from a game-theoretic perspective, where the agents are assumed to be strategic, i.e., report their locations in a way that will minimize their individual costs.
The game theoretic aspects of the facility location problem advanced in two main directions, as described below.

The first direction seeks to characterize \emph{strategyproof} (SP) mechanisms; i.e., mechanisms that induce truthful reporting as a dominant strategy.
This is a crucial attribute since in its absence, strategic agents may misreport their locations.
Moulin~\cite{Moul80} and later Schummer and Vohra~\cite{SV04} provided characterizations of \emph{deterministic} SP mechanisms on line, tree, and cycle networks.
While the work of Moulin~\cite{Moul80} concentrated on general single-peaked\footnote{With single-peaked preferences, every agent is associated with an ideal location, considered to be her \emph{peak}, and the closer the facility is to an agent's peak, the most preferred it is.} preferences~\cite{Black58}, Schummer and Vohra~\cite{SV04} considered the special case in which the cost incurred by an agent is the length of the shortest path from the facility to her location.
These results were later extended to various metric spaces (see, e.g., Border and Jordan \cite{BJ83}).

The second direction, advocated more recently by Procaccia and Tennenholtz\cite{PT09}, seeks to study the approximation ratio that can be obtained by an SP mechanism with respect to a given objective function.
This agenda, often termed ``approximate mechanism design without money", has led to extensive work on several domains, including facility location \cite{PT09,Alon09,Alon10,Lu09,Lu10,DT10}, machine learning~\cite{DFP10}, and matching~\cite{Ash10,Dg10}.
Unlike the traditional motivation for approximation, originating from computational hardness, here, approximation is used to achieve strategyproofness.

The approximation ratio of a mechanism is defined with respect to a given objective function, and the standard worst-case notion is being applied.
The two objective functions that have been prominently featured in the literature are the \emph{minisum} (i.e., minimizing the sum of agents' costs) and the \emph{minimax} (i.e., minimizing the maximum cost of any agent) functions.
While it is well known that the optimal location on a tree with respect to the minisum function can be obtained by an SP mechanism (in particular, by the {\em median}), it has been shown by Procaccia and Tennenholtz \cite{PT09} that no deterministic (respectively, randomized) SP mechanism on a line can achieve a better approximation than $2$ (resp., $1.5$) with respect to the minimax function.
Thus, the feasible approximation of SP mechanisms depends on the specified objective function.

In this work, we propose a third objective function --- minimizing the sum of squares of distances (hereafter {\em miniSOS}).
The miniSOS function is highly relevant in many economic settings, and is related to central notions in other disciplines, such as the {\em centroid} in geometry, or the {\em center of mass} in physics.
Of particular interest is the close relation to {\em regression learning}, where there is a one-to-one mapping between facility location on a line and regression learning (restricted to the class of constant functions).
Moreover, the miniSOS objective has been given a rigorous foundation by Holzman \cite{Holzman1990} using an \emph{axiomatic} approach.
In particular, Holzman formulated three axioms\footnote{
In particular, Holzman formulated three axioms, which are (roughly speaking): (i) unanimity, stating that if all the agents report the same location, then the reported location should be chosen, (ii) Lipschitz, which is a continuity requirement; and (iii) invariance, stating that an agent who moves to a location that is equidistant from the outcome from the same direction will not affect the chosen outcome.
}
for locating a facility on a line or a tree, regarded as sensible requirements, and showed that the unique objective function that satisfies the three axioms is the miniSOS objective.
Following the above motivation and the axiomatic foundation laid by Holzman, it is only natural to study the approximation that can be obtained by SP mechanisms with respect to the miniSOS function.

\subsection{Our contribution}

Our contribution spans both the characterization and the approximation agendas with respect to the line and tree networks.
The line topology is motivated by any one-dimensional decision that is made by aggregating agents' preferences.
The tree topology is particularly motivated by the telecommunication networks example, where a tree topology corresponds to any hierarchical network.
Notably, in computer networks, an agent can easily manipulate its perceived network location by generating a false IP address.

\paragraph{Characterization}

The characterization of SP mechanisms for facility location settings has been studied thus far mainly with respect to \emph{deterministic} mechanisms.
A {\em randomized} mechanism receives a location profile and returns a probability distribution over locations.
It is well known that randomization can be very powerful in social choice settings.
In cases where the cost incurred by an agent is her distance from the facility (as in the model of Schummer and Vohra~\cite{SV04}), it seems natural to define an agent's cost as her {\em expected cost} with respect to the given probability distribution.
This is the approach taken by Procaccia and Tennenholtz \cite{PT09}, and many studies thereafter \cite{Alon09,Alon10,Lu10}\footnote{These works, however, focused on approximating various objective functions by SP mechanisms and not on characterizing SP mechanisms.}.
Quite surprisingly, while there has been a surge of research on randomized SP mechanisms for facility location problems, there was very limited effort on characterizing randomized SP mechanisms\footnote{An exception is the work by Ehlers {\em et al.} \cite{EPS02}, which characterizes randomized SP mechanisms, but is different from our work in two respects. First, the characterization of Ehlers {\em et al.} applies only to a line network, while we are interested in the more general case of a tree. Second, they use a different notion of preferences over probability distribution (in particular, preference over probability distributions is defined in terms of first-order stochastic dominance), and, as a result, uses a different notion of strategyproofness. This difference has significance effects, for example, on lower bound results for approximation.}.

We make a first step in providing a characterization for randomized SP mechanisms on a {\em tree} network.
In particular, we design a family of randomized mechanisms, termed \pboom{} (\pb{}), which is SP for any tree network.
Quite miraculously, all of the mechanisms that have been devised recently within the facility location domain (and shown to give tight bounds with respect to various objective functions) can be formulated as special cases of mechanism \pb{}.
The generality and strength of mechanism \pb{} is further illustrated through the analysis of the miniSOS objective, where special cases of mechanism \pb{} are shown to achieve good approximation results on line and tree networks with respect to the miniSOS objective.
Thus, the characterization of randomized SP mechanisms is not only interesting in itself but also equips one with a large family of SP mechanisms, and can be served as a useful tool for studying the approximation ratios with respect to various objectives.
Mechanism \pb{} is presented in \secref{sec:tree-general-mechanism}.


\paragraph{Approximation}
We provide approximation results with respect to the miniSOS objective, for line and tree networks, and for deterministic and randomized mechanisms.
Notably, all the mechanisms that are devised in this paper are special cases of mechanism \pb{}.

{\em Line networks} (\secref{sec:SP-mechanisms-line}).
We show that the {\em median} gives a $2$-approximation deterministic SP mechanism, and that no deterministic SP mechanism can achieve a better approximation ratio.
As in other studies~\cite{PT09,Alon10,Lu10,Ash10}, randomized mechanisms are shown to provide better bounds.
In particular, we present a randomized SP mechanism that provides a $1.5$-approximation; the mechanism chooses the average location with probability $\frac{1}{2}$ and a random dictator with probability $\frac{1}{2}$.
In addition, we show that no randomized SP mechanism can achieve a better bound.
The proof technique used to construct the lower bound involves subtle analysis and is perhaps one of the main technical contributions of the paper.
Interestingly, while the minimax and the miniSOS functions induce different optimal solutions, and different optimal SP mechanisms, they admit the exact same approximation bounds with respect to both deterministic and randomized mechanisms.

{\em Tree networks} (\secref{sec:tree}).
First, we show that the median gives a $2$-approximation with respect to the miniSOS objective.
This result is tight with respect to deterministic mechanisms, following the lower bound established on the line.
Our main result in the approximation regime is the construction of a randomized SP mechanism that gives a $1.83$-approximation for any tree network.
This result establishes a separation between deterministic and randomized mechanisms, as no deterministic mechanism can provide a better approximation than 2.

\subsection{Open Problems}
\label{sec:open-problems}
Our study leaves many questions for future research.
The first natural challenge is to provide a full characterization for SP randomized mechanisms on a tree.
For the miniSOS objective function, closing the approximation gap for the randomized mechanisms on a tree remains open.
In addition, it would be interesting to extend the approximation results for the miniSOS function to other networks topologies, such as a cycle and general networks, similar to the studies by Alon {\em et al.}~\cite{Alon09,Alon10} with respect to the minimax and the minisum functions.
An additional direction is to consider a different individual cost function.
For example, in applications in which agents are more sensitive to distances within the range of high distances, a convex cost function seems plausible (a possible example might be the speed of an Internet connection).
Finally, the three different social functions that have been studied thus far can be considered as special cases of the $\ell$-norm distance, with minisum, miniSOS, and minimax corresponding to the $1$-norm, $2$-norm, and $\infty$-norm, respectively.
It is apparent that while for the minisum function, the optimal location can be obtained in an SP mechanism, this is not feasible for either $2$- or $\infty$-norms, and the same approximation bounds apply in both cases.
Generalizing this result to any $\ell$-norm is an additional stimulating direction.

\section{Model and Preliminaries}
\label{sec:model}

We use the model of Schummer and Vohra~\cite{SV04}, where the network is represented by a graph $G$, formalized as follows. The graph is a closed, connected subset of Euclidean space $G \subseteq \reals^k$.
The graph is composed of a finite number of closed curves of finite length, known as the edges\footnote{
Note that while this model is expanding upon the notion of an interval, it is not analyzing
full-dimensional, convex subsets of Euclidean space. Rather, travel is restricted
to a road network, where convex combinations of locations are typically not feasible.}.
The extremities of the curves are known as vertices (or nodes).
An important class of graphs, which is the focus of this paper, is {\em tree} graphs --- graphs that contain no cycles.

The path between two points $a,b \in G$ is denoted by $path_G(a,b)$.
The distance between two points $a,b \in G$, denoted $d_G(a,b)$, is the length of the (unique) path between $a$ and $b$.
We extend the definition of distance between points to distance between a point and a path as follows.
Given a point $c \in G$ and a path $path_G(a,b)$, the distance between $c$ and $path(a,b)$, denoted $d_G(path_G(a,b),c)$, is the shortest distance between $c$ and any point on $path(a,b)$; i.e., $d_G(path_G(a,b),c) = min_{l \in path_G(a,b)} d_G(l,c)$. When clear in context, we omit the subscript $G$.

Let $N=\{1,\ldots,n\}$ be a set of agents.
We sometime use $[n]$ to denote the set of agents $N$.
Each agent $i\in N$ has an (ideal) location $x_i\in G$ (agents can be located anywhere on $G$).
The collection $\bx=( x_1,\ldots,x_n ) \in G^n$ is referred to as the {\em location profile}.

A \emph{deterministic mechanism} is a function $f:G^n\rightarrow G$ that maps the agents' reported locations to the location of a {\em facility} (which can be located anywhere on $G$).
If the facility is located at $y \in G$, the cost of agent $i$ is the distance between $x_i$ and $y$; i.e., $cost(y,x_i) = d(y,x_i)$.

A \emph{randomized mechanism} is a function $f: G^n \rightarrow \Delta(G)$, which maps location profiles to probability distributions over $G$ (which randomly designate the facility location).
Let $P \in \Delta(G)$ be a probability distribution over $G$.
If $f(x)=P$, then the cost of agent $i$ is the expected distance of the facility location from $x_i$;
i.e., $cost(P,x_i) = E_{y \sim P}[cost(y,x_i)]$.
When clear in the context, we write $y \sim f(\bx)$ for ease of presentation.

A mechanism is called \emph{strategyproof} (SP), or \emph{truthful}, if no agent can benefit from misreporting her location, regardless of the reports of the other agents. Formally, in our scenario, this means that for all $\bx \in G^n$, for all $i \in N$, and for all $x'_i \in G$, it holds that $cost(f(\bx),x_i) \leq cost(f(x'_i,x_{-i}),x_i)$, where $x_{-i}= ( x_1, \ldots, x_{i-1},x_{i+1}, \ldots, x_n ) $ is the profile of all locations, excluding agent $i$'s location.

The quality of a facility location is usually evaluated with respect to some target social function.
Given a location profile $\bx=(x_1,\ldots,x_n)$ and a facility location $y$, the social cost of $y$ with respect to $\bx$ is given by a function $sc(y,\bx)$.
The social cost of a distribution $P$ with respect to $\bx$ is $sc(P,\bx)=E_{y \sim P}[sc(y,\bx)]$.

Given a social cost function, location $y \in G$ is said to be {\em optimal} with respect to a profile $\bx$ if
$sc(y,\bx) = min_{y' \in G}sc(y',\bx)$.
An optimal location is denoted by $Opt(G,\bx)$.
When clear in the context, we simply write $Opt$.
In addition, we often abuse notation and use $Opt$ to refer to the social cost of an optimal location.

A mechanism $f$ is said to provide $\alpha$-approximation with respect to a social cost function $sc$ if for every graph $G$ and every location profile $\bx$, $sc(f(\bx),\bx)/sc(Opt,\bx) \leq \alpha$;
that is, the mechanism always returns a solution that is an $\alpha$ factor of the optimal solution.

In this paper we are interested in optimizing the sum of squared distances (SOS) function; that is,
$sc(y,\bx)=\sum_{i \in N}d(y,x_i)^2$.
This objective function is extremely important from both normative and positive perspectives, as discussed in the introduction.

Given a profile $\bx$, the median of $\bx$ in a tree $G$, denoted by $\mu(G,\bx)$, is defined as follows. We start from an arbitrary node (induced by $G$) as a root. Then, as long as the current location has a subtree that contains more than half of the agents, we smoothly move down this subtree. Finally, when we reach a point where it is not possible to move closer to more than half the agents by continuing downwards, we stop and return the current location.

We continue with several graph theoretic definitions and lemmas. At this point, it is necessary to emphasis the difference between a \emph{location profile}, which was defined earlier and is tightly coupled with a set of agents, and a \emph{location vector}, which is a set of locations in the graph.

\begin{definition}
Given a tree $G$ and a point $x \in G$, let $T(G,x)$ be the set of subtrees defined as follows.
If $x$ is a tree node (with degree $d_x$), then $T(G,x)=\{T_1, \ldots, T_{|d_x|}\}$, where $T_i$ is the subtree of descendant $i$ rooted at $x$. If $x$ is not a node (i.e., it is a point on an edge), then $T(G,x)=\{T_1,T_2\}$, where $T_1$ and $T_2$ are the respective left and right subtrees rooted at $x$.
\end{definition}

\begin{definition}
\label{def:weighted-opt}
Let $G$ be a tree, $\by \in G^m$ be a location vector, and $\bw$ be a probability vector of size $m$.
The weighted average location with respect to $G,\by$ and $\bw$, denoted $wAvg(G,\by,\bw)$, is a point in $G$ which minimizes the weighted sum of squared distances from the locations in $\by$; i.e., $wAvg \in argmin_{l \in G}\sum_{j \in [m]}w_j d(l,y_j)^2$.
\end{definition}

The following lemmas will be required in the sequel. Their proofs are deferred to Appendix~\ref{sec:model-appendix}.

\begin{lemma}
\label{lem:weighted-opt-derivative}
Let $\by \in G^m$ be a location vector, and $\bw$ a probability vector of size $m$.
It holds that $a = wAvg(G,\by,\bw)$ if and only if for every $T_j \in T(G,a)$,
$$
\sum_{i \in [m]: y_i \in T_j}  w_i d(y_i,a) \leq \sum_{i \in [m]: y_i \notin T_j}  w_i d(y_i,a).
$$
\end{lemma}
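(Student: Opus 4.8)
The plan is to treat the objective $\phi(l):=\sum_{j\in[m]} w_j\, d(l,y_j)^2$ as a convex function on the tree and to read the displayed inequality as its first-order optimality condition. First I would record a one-dimensional convexity fact: fix any $b\in G$ and let $\gamma:[0,L]\to G$, with $\gamma(0)=a$ and $\gamma(L)=b$, be the unique $a$--$b$ path parametrized by arc length. Since $G$ is a tree, each $y_j$ has a unique nearest point $\gamma(s_j)$ on this path and every geodesic from $\gamma(t)$ to $y_j$ passes through it, so $d(\gamma(t),y_j)=|t-s_j|+c_j$ with $c_j:=d(\gamma(s_j),y_j)\ge 0$. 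Hence $t\mapsto d(\gamma(t),y_j)^2=(|t-s_j|+c_j)^2$ is convex (a parabola on each side of $s_j$, whose one-sided derivative jumps up by $4c_j\ge 0$ at $t=s_j$), and summing with the weights $w_j\ge 0$ (recall $\sum_j w_j=1$) shows $\psi:=\phi\circ\gamma$ is strictly convex on $[0,L]$. Because $G$ is compact and $\phi$ continuous, $\phi$ attains a minimum, and strict convexity along every path makes the minimizer unique, so $wAvg(G,\by,\bw)$ is well defined.

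Next I would compute the one-sided derivative of $\phi$ as we leave $a$ into a fixed $T_j\in T(G,a)$. By the definition of $T(G,a)$ this is movement along the single edge $e$ incident to $a$ that leads into $T_j$, and for all sufficiently small $t>0$ the point $\gamma(t)$ at arc length $t$ on $e$ satisfies $d(\gamma(t),y_i)=d(a,y_i)-t$ for every $y_i\in T_j$ and $d(\gamma(t),y_i)=d(a,y_i)+t$ for every $y_i\notin T_j$ (a location $y_i=a$ falls in the second case and contributes $0$ to both sides of the claimed inequality, so the convention is harmless). Differentiating $\phi(\gamma(t))=\sum_{y_i\in T_j} w_i(d(a,y_i)-t)^2+\sum_{y_i\notin T_j} w_i(d(a,y_i)+t)^2$ at $t=0^+$ yields
$$
\psi'(0^+)\;=\;2\Big(\textstyle\sum_{y_i\notin T_j} w_i\, d(a,y_i)\;-\;\sum_{y_i\in T_j} w_i\, d(a,y_i)\Big).
$$

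Both implications then follow. If $a=wAvg(G,\by,\bw)$ then $\phi(\gamma(t))\ge\phi(a)$ for small $t$, so the right derivative above is nonnegative, which is exactly the stated inequality for $T_j$; since $T_j$ was arbitrary, the condition holds. Conversely, if the inequality holds for every $T_j\in T(G,a)$, take any $b\ne a$: the path from $a$ to $b$ leaves $a$ into the subtree $T_j$ containing $b$, so $\psi'(0^+)\ge 0$ by hypothesis; convexity of $\psi$ makes $\psi'$ nondecreasing, hence $\psi$ is nondecreasing on $[0,L]$ and $\phi(b)=\psi(L)\ge\psi(0)=\phi(a)$. As $b$ was arbitrary, $a$ is the unique minimizer, i.e.\ $a=wAvg(G,\by,\bw)$.

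I expect the main obstacle to be the tree bookkeeping in the second step: making rigorous that ``moving into $T_j$'' is movement along one prescribed incident edge, that a small displacement brings exactly the points lying in $T_j$ closer while moving all the others (and $a$ itself) farther, and that --- since $\phi$ need not be differentiable where some $y_i$ is located --- one must argue with one-sided derivatives and the convexity of $\psi$ rather than by setting a gradient to zero. Everything else is routine once this is in place.
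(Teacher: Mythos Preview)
Your argument is correct and, in the necessity direction, essentially identical to the paper's: both compute the one-sided derivative of $\phi$ as one leaves $a$ into a subtree $T_j$ and read off the inequality from $\psi'(0^+)\ge 0$.

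The sufficiency direction is where you take a genuinely different route. The paper argues by contradiction: assuming the inequality holds at $a$ but some other point $b$ is a minimizer, it exploits the subtree inclusions $T(G,a)\setminus T_b\subset S_a$ and $T(G,b)\setminus S_a\subset T_b$ to derive the \emph{strict} reverse inequality at $b$ for the subtree $S_a$, contradicting necessity at $b$. Your approach instead verifies that $\phi$ restricted to any geodesic is (strictly) convex via the formula $d(\gamma(t),y_j)=|t-s_j|+c_j$, and then invokes the standard one-variable fact that a convex function whose right derivative at the left endpoint is nonnegative is nondecreasing. This is cleaner and more conceptual: it avoids the subtree bookkeeping entirely, and the strict convexity you record gives the paper's Corollary~\ref{cor:wAvg-is-unique} (uniqueness of $wAvg$) for free, rather than as a separate argument. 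The paper's route, on the other hand, never needs to verify convexity along the whole path; it only uses the derivative at a single point on each side and the combinatorics of how subtrees at $a$ and $b$ nest. Both arguments are short; yours is the more transparent of the two.
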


\begin{corollary}
\label{cor:wAvg-is-unique}
Let $\by \in G^m$ be a location vector, and $\bw$ a probability vector of size $m$.
The weighted average location with respect to $G,\by$ and $\bw$ is unique.
\end{corollary}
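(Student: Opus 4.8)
The plan is to argue by contradiction using the characterization in \lemref{lem:weighted-opt-derivative}. Suppose $a$ and $b$ are two \emph{distinct} points, both equal to $wAvg(G,\by,\bw)$, and let $\gamma = path_G(a,b)$, which has positive length $\ell := d(a,b) > 0$. For each index $i \in [m]$ let $q_i$ be the (unique) point of $\gamma$ closest to $y_i$; since $q_i \in \gamma$, exactly one of $q_i = a$, $q_i = b$, or ``$q_i$ in the relative interior of $\gamma$'' holds, which partitions $[m]$ into three sets $A$, $B$, and $I$. First I would record the standard tree identity $d(y_i,a) = d(y_i,q_i) + d(q_i,a)$ (and symmetrically for $b$), valid because in a tree the path from $y_i$ to $a$ meets $\gamma$ exactly at $q_i$; in particular $d(y_i,a) = d(y_i,b) + \ell$ for $i \in B$, $d(y_i,b) = d(y_i,a) + \ell$ for $i \in A$, and both $d(y_i,a)$ and $d(y_i,b)$ are strictly positive for $i \in I$.

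Next I would invoke \lemref{lem:weighted-opt-derivative} twice. Let $T^{*} \in T(G,a)$ be the subtree hanging off $a$ that contains $\gamma \setminus \{a\}$ (hence $b$); then $\{i : y_i \in T^{*}\} = I \cup B$ with complement $A$, so the lemma at $a$ yields $\sum_{i \in I \cup B} w_i\, d(y_i,a) \le \sum_{i \in A} w_i\, d(y_i,a)$. Symmetrically, taking $T^{**} \in T(G,b)$ to be the subtree of $b$ containing $a$, the lemma at $b$ yields $\sum_{i \in I \cup A} w_i\, d(y_i,b) \le \sum_{i \in B} w_i\, d(y_i,b)$. I would then substitute $d(y_i,a) = d(y_i,b)+\ell$ into the first inequality for $i \in B$, substitute $d(y_i,b) = d(y_i,a)+\ell$ into the second for $i \in A$, and add the two inequalities. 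The sums $\sum_{i \in A} w_i\, d(y_i,a)$ and $\sum_{i \in B} w_i\, d(y_i,b)$ occur on both sides and cancel, leaving
\[
\sum_{i \in I} w_i\bigl(d(y_i,a)+d(y_i,b)\bigr) + \ell \sum_{i \in A \cup B} w_i \;\le\; 0 .
\]
Every term on the left is nonnegative and $\ell > 0$, so this forces $\sum_{i \in A \cup B} w_i = 0$, and, since $d(y_i,a) > 0$ for all $i \in I$, also $w_i = 0$ for every $i \in I$. Hence $\sum_{i \in [m]} w_i = 0$, contradicting the assumption that $\bw$ is a probability vector; therefore the minimizer is unique.

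I expect the only real obstacle to be the geometric bookkeeping: feeding the correct subtrees into \lemref{lem:weighted-opt-derivative}, verifying the projection/additivity identities along $\gamma$ (a routine but load-bearing tree fact), and handling the degenerate cases in which some $y_i$ coincides with $a$ or $b$ — nothing deep, but each point must be stated with care. Conceptually the whole argument just expresses that $\sum_j w_j\, d(\cdot,y_j)^2$ is strictly convex along every geodesic of $G$, so one could alternatively derive the contradiction by evaluating this objective at the midpoint of $\gamma$, obtaining a value strictly below the common optimum of $a$ and $b$.
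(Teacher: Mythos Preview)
Your proof is correct and follows essentially the same strategy as the paper: argue by contradiction, invoke \lemref{lem:weighted-opt-derivative} at each of the two putative minimizers, and show the resulting inequalities are incompatible. The only cosmetic difference is in the bookkeeping---the paper re-runs the sufficiency argument from the lemma's proof to obtain a strict violation at $b$, whereas you add the two instances of the lemma and cancel, which is a bit more self-contained but amounts to the same thing.
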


\begin{lemma}
\label{lem:movement-of-weighted-opt}
Let $\by,\by' \in G^m$ be location vectors, and $\bw$ be a probability vector of size $m$.
For every $i \in [m]$, let $\delta_i = d(y_i,y'_i)$ .
Let $a = wAvg(G,\by,\bw)$ and $a' = wAvg(G,\by',\bw)$.
It holds that $d(a,a') \leq \sum_{i \in [m]}  w_i \delta_i$.
\end{lemma}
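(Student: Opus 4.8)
The plan is to work directly with the optimality characterization of Lemma~\ref{lem:weighted-opt-derivative}; somewhat conveniently, no reduction to a single coordinate turns out to be needed. Write $a = wAvg(G,\by,\bw)$ and $a' = wAvg(G,\by',\bw)$, and set $p = d(a,a')$; if $p = 0$ the claim is trivial since all $w_i,\delta_i \ge 0$, so assume $p > 0$. Let $\pi = path_G(a,a')$, let $T \in T(G,a)$ be the unique subtree containing $a'$, and let $T' \in T(G,a')$ be the unique subtree containing $a$. Define $\sigma^a(z) = -1$ if $z \in T$ and $\sigma^a(z) = +1$ otherwise, and $\sigma^{a'}$ analogously with respect to $T'$. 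Applying Lemma~\ref{lem:weighted-opt-derivative} to $(\by,\bw)$ at the point $a$ with the subtree $T$, and to $(\by',\bw)$ at $a'$ with the subtree $T'$, the two optimality conditions become
\[
\sum_{i\in[m]} w_i\, \sigma^a(y_i)\, d(y_i,a) \ \ge\ 0
\qquad\text{and}\qquad
\sum_{i\in[m]} w_i\, \sigma^{a'}(y_i')\, d(y_i',a') \ \ge\ 0 .
\]

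The engine of the proof is the geometric inequality
\[
\sigma^a(u)\, d(u,a) \;+\; \sigma^{a'}(u')\, d(u',a') \ \le\ d(u,u') - p
\qquad\text{for all } u,u' \in G .
\]
Granting it, I add the two displayed optimality conditions, apply this inequality termwise with $(u,u') = (y_i,y_i')$, multiply by $w_i \ge 0$, sum over $i$, and use $\sum_i w_i = 1$:
\[
0 \ \le\ \sum_{i\in[m]} w_i\bigl(\sigma^a(y_i)\, d(y_i,a) + \sigma^{a'}(y_i')\, d(y_i',a')\bigr)
\ \le\ \sum_{i\in[m]} w_i\bigl(d(y_i,y_i') - p\bigr)
\ =\ \sum_{i\in[m]} w_i\delta_i - p ,
\]
which is exactly $d(a,a') = p \le \sum_{i\in[m]} w_i\delta_i$.

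It remains to establish the geometric inequality, which I would prove by analyzing the four sign patterns of $(\sigma^a(u),\sigma^{a'}(u'))$. The basic fact is that $\sigma^a(u) = +1$ (that is, $u \notin T$) means $a$ lies on $path_G(u,a')$, hence $d(u,a') = d(u,a) + p$, and symmetrically $\sigma^{a'}(u') = +1$ gives $d(u',a) = d(u',a') + p$. When exactly one of the signs is $+1$ the inequality follows from a single triangle inequality combined with the corresponding identity; when both signs are $-1$ it follows from $p = d(a,a') \le d(a,u) + d(u,u') + d(u',a')$; and when both signs are $+1$, the separation properties ``$a$ lies between $u$ and $a'$'' and ``$a'$ lies between $u'$ and $a$'' force the (unique) path $path_G(u,u')$ to traverse $a$ and then $a'$, so that $d(u,u') = d(u,a) + p + d(u',a')$ and the inequality holds with equality. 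I expect this last case --- carefully checking that the two separation properties really pin down the route of $path_G(u,u')$ through both $a$ and $a'$ (ruling out, e.g., $u'$ sitting on the $u$-side of $a$) so that the three distances add exactly --- to be the only point requiring genuine care; everything else is routine bookkeeping.
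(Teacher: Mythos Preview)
Your argument is correct. Both you and the paper use Lemma~\ref{lem:weighted-opt-derivative} at $a$ (for $\by$) and at $a'$ (for $\by'$) together with the tree geometry of the path from $a$ to $a'$, so the underlying idea is the same. The packaging, however, is different. The paper argues by contradiction: it starts from the optimality inequality at $a$ for $\by$, perturbs it to $\by'$ at a loss of at most $\sum_i w_i\delta_i$, then uses explicit decompositions of $\sum w_i d(y'_i,\cdot)$ over the subtrees $T_0,S_0$ and their intersection to transport the inequality from $a$ to $a'$, finally violating optimality at $a'$ if $d(a,a')>\sum_i w_i\delta_i$. Your version is direct: you isolate a single pointwise inequality $\sigma^a(u)d(u,a)+\sigma^{a'}(u')d(u',a')\le d(u,u')-p$, verify it by a four-case check, and then sum. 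This buys you a cleaner, one-line finish and avoids the intermediate equations \eqref{eq:tree60}--\eqref{eq:tree62}; the paper's version, in turn, makes the ``movement of the balance condition along $path(a,a')$'' more explicit. Your worry about the $(+1,+1)$ case is well placed but resolves exactly as you sketch: $u\notin T$ forces $u'\in T'$-side implies $u'\in T$, hence $a\in path(u,u')$; symmetrically $a'\in path(u,u')$; and then $a\in path(u,a')$ gives $d(u,a')=d(u,a)+p$, so $d(u,u')=d(u,a)+d(a,u')=d(u,a)+p+d(u',a')$.
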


\begin{lemma}
\label{lem:differnce-of-costs-two-on-line}
Let $\bx \in G^n$ be a location profile. Let $a,b \in G$ be two locations, and let $T_b \in T(G,a)$ and $T_a \in T(G,b)$ such that they contain $path(a,b)$, and assume that all of the agents are either in $G \backslash T_a$, in $G \backslash T_b$, or on $path(a,b)$. Additionally, assume that $Opt$ is located in $G \backslash T_a$. Then, $sc(a,\bx) - sc(b,\bx) = -|N|d(a,b)^2 - 2d(a,b) \left( \sum_{i \in G \backslash T_b}d(x_i,a) - \sum_{i \in T_b}d(x_i,a) \right)$
\end{lemma}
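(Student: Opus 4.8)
The plan is to expand $sc(a,\bx)-sc(b,\bx)=\sum_{i\in N}\bigl(d(x_i,a)^2-d(x_i,b)^2\bigr)$ directly, partitioning the agents into the three regions permitted by the hypothesis and using the tree metric to rewrite $d(x_i,b)$ in terms of $d(x_i,a)$ and $d(a,b)$ within each region.

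First I would record the relevant tree-geometry facts. Since $T_b\in T(G,a)$ is the subtree hanging off $a$ that contains $path(a,b)$, the vertex $a$ is the unique point separating $G\setminus T_b$ from $b$; hence for an agent with $x_i\in G\setminus T_b$ the path $path(x_i,b)$ passes through $a$, giving $d(x_i,b)=d(x_i,a)+d(a,b)$. Symmetrically $b$ separates $G\setminus T_a$ from $a$, so for $x_i\in G\setminus T_a$ we get $d(x_i,a)=d(x_i,b)+d(a,b)$, i.e.\ $d(x_i,b)=d(x_i,a)-d(a,b)$; and for $x_i\in path(a,b)$ we have $d(x_i,a)+d(x_i,b)=d(a,b)$, i.e.\ $d(x_i,b)=d(a,b)-d(x_i,a)$.

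Next, squaring each identity yields
\[
d(x_i,b)^2 \;=\; d(x_i,a)^2 + d(a,b)^2 + \varepsilon_i\cdot 2\,d(x_i,a)\,d(a,b),
\]
with $\varepsilon_i=+1$ when $x_i\in G\setminus T_b$ and $\varepsilon_i=-1$ in the other two cases. The essential point is that squaring collapses the latter two cases onto the same formula, so the only distinction that survives is ``$x_i\in G\setminus T_b$'' versus ``$x_i\in T_b$'' (the agents on $path(a,b)$ together with those in $G\setminus T_a$ are exactly the agents lying in $T_b$; the lone ambiguous case, an agent sitting at $a$ or at $b$, is harmless since both formulas agree once $d(x_i,a)=0$ or $d(x_i,b)=0$). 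Summing over all $i$, and writing $|N|$ for the number of agents,
\[
\sum_{i\in N} d(x_i,b)^2 \;=\; \sum_{i\in N} d(x_i,a)^2 + |N|\,d(a,b)^2 + 2\,d(a,b)\Bigl(\sum_{i:\,x_i\in G\setminus T_b} d(x_i,a) - \sum_{i:\,x_i\in T_b} d(x_i,a)\Bigr),
\]
and since $sc(y,\bx)=\sum_{i\in N}d(x_i,y)^2$ this rearranges into exactly the claimed identity.

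I expect no genuine obstacle: the statement is an algebraic identity whose only content is the three distance relations above, and the work is just careful bookkeeping of the case split and its boundary cases. I would also note that the hypothesis ``$Opt$ is located in $G\setminus T_a$'' is not used in deriving this identity — it is presumably stated because the lemma is invoked only in that configuration — so I would not appeal to it.
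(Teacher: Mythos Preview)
Your proposal is correct and follows essentially the same approach as the paper: expand $sc(a,\bx)-sc(b,\bx)$ as a sum of differences of squares, use the tree-metric relations to rewrite $d(x_i,b)$ in terms of $d(x_i,a)$ and $d(a,b)$ on each region, and observe that the two cases inside $T_b$ collapse after squaring. Your write-up is in fact more explicit than the paper's (which compresses this into two displayed lines), and your observation that the hypothesis on $Opt$ plays no role in the identity is accurate.
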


\begin{corollary}
\label{cor:opt-flattening}
Let $\bx \in G^n$ be a location profile, and let $a,b \in G$ as described in \lemref{lem:differnce-of-costs-two-on-line}. Let $\bx'$ be constructed as follows. Let $x_j \in G \backslash T_a$ be the most far agent from $b$ in $G \backslash T_a$, and for each $i$ such that $x_i \in G \backslash T_a$, locate $x'_i$ on $path(b,x_j)$ such that $d(b,x'_i)=d(b,x_i)$. Then, $sc(a,\bx) - sc(b,\bx) = -|N|d(a,b)^2 + 2|N|d(a,b) d(a,Opt')$
\end{corollary}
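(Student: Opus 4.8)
\emph{Overview and Step 1.} The plan is to combine \lemref{lem:differnce-of-costs-two-on-line} with an explicit identification of the optimum $Opt'$ of the modified profile $\bx'$, exploiting that $\bx'$ has been flattened onto a single geodesic. Note first that, since $sc(y,\bx)=\sum_i d(y,x_i)^2$, the optimum of any profile equals its weighted average with uniform weights, so \lemref{lem:weighted-opt-derivative} applies verbatim to $Opt$ and $Opt'$. Now observe that the relocation is harmless: the only agents that move are those in $G\setminus T_a$, and each such $x_i$ is sent to $x'_i\in path(b,x_j)$ with $d(b,x'_i)=d(b,x_i)$. Since $x_j\in G\setminus T_a$ while $a\in T_a$, the path $path(b,x_j)$ avoids $a$, so $path(x'_i,a)$ still passes through $b$; hence $d(x'_i,a)=d(b,x'_i)+d(a,b)=d(x_i,a)$ and $d(x'_i,b)=d(x_i,b)$. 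Therefore $sc(a,\bx')=sc(a,\bx)$, $sc(b,\bx')=sc(b,\bx)$, and $\sum_{i\in T_b}d(x_i,a)$ and $\sum_{i\in G\setminus T_b}d(x_i,a)$ take the same value for $\bx$ and for $\bx'$. Plugging this into \lemref{lem:differnce-of-costs-two-on-line}, the corollary reduces to proving the identity
\[
\sum_{i\in T_b}d(x_i,a)-\sum_{i\in G\setminus T_b}d(x_i,a)=|N|\cdot d(a,Opt').
\]

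\emph{Step 2: locating $Opt'$.} Next I would show $Opt'\in path(b,x_j)\setminus\{b\}$, so that $Opt'\in G\setminus T_a$ and $d(a,Opt')=d(a,b)+d(b,Opt')>0$. Write $B$ for the set of agents lying in $G\setminus T_a$ and $C=N\setminus B$. Since $Opt\in G\setminus T_a$ we have $Opt\ne b$, so the weighted-average condition of \lemref{lem:weighted-opt-derivative} is violated at $b$; as the optimum lies in a violated direction, the subtree $U\in T(G,b)$ containing $Opt$ is violated, and $U\subseteq G\setminus T_a$. Hence $U$ contains only agents of $B$ while the agents of $C$ lie outside $U$, so $\sum_{i\in B}d(x_i,b)\ge\sum_{x_i\in U}d(x_i,b)>\sum_{x_i\notin U}d(x_i,b)\ge\sum_{i\in C}d(x_i,b)$. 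These two outer sums are unchanged in $\bx'$, but there the whole of $B$ sits on the single ray $path(b,x_j)$; applying \lemref{lem:weighted-opt-derivative} at $b$ in that direction shows $b$ is not optimal for $\bx'$ and that the optimum moves into the ray, and a further application (at any interior point off $path(b,x_j)$, the direction back toward $b$ would carry every agent) confines $Opt'$ to $path(b,x_j)$.

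\emph{Step 3, and the main difficulty.} Finally, with $Opt'$ on $path(a,x_j)=path(a,b)\cup path(b,x_j)$ and strictly beyond $b$, set $s_i:=d(x_i,a)$ for $i\in T_b$ and $s_i:=-d(x_i,a)$ for $i\in G\setminus T_b$, and put $c^*:=d(a,Opt')>0$. A short case check over the three agent groups (beyond $a$; on $path(a,b)$; beyond $b$) gives $d(x'_i,Opt')=|s_i-c^*|$ for every $i$. At $Opt'$ the only nonempty subtrees are the one toward $a$ (agents with $s_i<c^*$) and the one toward $x_j$ (agents with $s_i>c^*$), so applying \lemref{lem:weighted-opt-derivative} to each of them forces
\[
\sum_{i:\,s_i<c^*}(c^*-s_i)=\sum_{i:\,s_i>c^*}(s_i-c^*),
\]
and, since the agents with $s_i=c^*$ contribute $0$ to both sides, rearranging gives $|N|\,c^*=\sum_i s_i=\sum_{i\in T_b}d(x_i,a)-\sum_{i\in G\setminus T_b}d(x_i,a)$, which is the identity of Step 1; substituting into \lemref{lem:differnce-of-costs-two-on-line} yields $sc(a,\bx)-sc(b,\bx)=-|N|d(a,b)^2+2|N|d(a,b)\,d(a,Opt')$. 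I expect the crux to be Step 2: the relocation could a priori drag the optimum across $b$, and ruling this out relies on the conservation-of-pull inequality $\sum_{i\in B}d(x_i,b)>\sum_{i\in C}d(x_i,b)$ together with the elementary fact, implicit in the proof of \lemref{lem:weighted-opt-derivative}, that a violated weighted-average condition at a point places the optimum in the violated direction.
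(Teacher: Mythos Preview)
Your proof is correct and follows essentially the same route as the paper: both reduce the statement, via the invariance of the relevant sums under the flattening, to the balance identity $\sum_{i\in T_b}d(x'_i,a)-\sum_{i\in G\setminus T_b}d(x'_i,a)=|N|\,d(a,Opt')$, and both obtain this identity from the two-sided equality in \lemref{lem:weighted-opt-derivative} at $Opt'$ once the agents lie in only two directions from $Opt'$. The only real difference is that you spell out in Step~2 why $Opt'$ must land on $path(b,x_j)\setminus\{b\}$ (via the ``pull'' inequality inherited from the original $Opt$), whereas the paper simply asserts that $T(G,Opt')$ has exactly two agent-bearing subtrees; your signed-coordinate bookkeeping in Step~3 is just a tidy repackaging of the paper's explicit distance expansion.
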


\section{Randomized SP Mechanisms on a Tree}
\label{sec:tree-general-mechanism}


In this section we introduce a family of randomized SP mechanism for locating a facility on a tree.
Unless otherwise stated, the graph $G$ in this section is assumed to be a tree.


\vspace{0.1in}

The following notion of a {\em boomerang} mechanism is a key concept in our construction.

\begin{definition}
\label{def:boomerang-mechanism}
A deterministic mechanism $f$ is said to be a {\em boomerang} mechanism if for every location profile $\bx$, agent $i$, and point $x_i'$,
$cost(f(\bx'),x_i) - cost(f(\bx),x_i) = d(f(\bx'),f(\bx))$, where $\bx'=(x_i',x_{-i})$.
\end{definition}
That is, a boomerang mechanism is one in which a deviating agent fully absorbs the effect of her deviation on the facility location.
Clearly, every boomerang mechanism is SP.


Several examples of boomerang mechanisms follow:
(The proof is left to the reader.) (i) {\em dictatorship}; i.e., where there exists $i \in N$ such that for every $\bx$, $f(\bx)=x_i$. (ii) {\em median} (on a tree). (iii) {\em $k$'th-location} (on a line); also known as {\em generalized median} \cite{Moul80}.



We are now ready to introduce the family of randomized SP mechanisms for tree networks.
This family is presented as a parameterized mechanism, called ``parameterized boomerang".

\vspace{0.1in}

\noindent{\bf Mechanism \pboom{} \pb{}:}
Let $\vecf = ( f_1, \ldots, f_m )$ be a collection of boomerang mechanisms.
For every $i \in [m]$, let $y_i = f_i(\bx)$, and let $\by=( y_1, \ldots, y_m )$.
Let $\bw$ be a probability distribution supported on $m$ elements, and let $a = wAvg(G,\by,\bw)$.
The facility location is chosen according to the following probability distribution:
\begin{itemize}
\item for every $i \in [m]$, choose $f_i(\bx)$ with probability $\half w_i$.
\item choose $a$ with probability $\half$.
\end{itemize}
We refer to the two components of the probability distribution as the \emph{boomerang} component and \emph{average} component, respectively.
Note that every boomerang mechanism is a special case of \pb{}, with $m=1$.

The following theorem, whose proof is deferred to Appendix~\ref{sec:tree-general-mechanism-appendix}, establishes the strategyproofness of Mechanism \pb{}. It is followed by an immediate corollary.


\begin{theorem}
\label{thm:tree-snc-is-truthful}
Mechanism \pb{} is SP.
\end{theorem}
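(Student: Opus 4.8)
The plan is to show that for any agent $i$ and any deviation $x_i' \in G$, the change in $i$'s expected cost under Mechanism \pb{} is nonnegative. Write $\bx' = (x_i', x_{-i})$, and for each $j \in [m]$ let $y_j = f_j(\bx)$, $y_j' = f_j(\bx')$; let $a = wAvg(G,\by,\bw)$ and $a' = wAvg(G,\by',\bw)$. The expected cost decomposes linearly across the $m+1$ atoms of the distribution, so
$$
cost(\pb(\bx'),x_i) - cost(\pb(\bx),x_i) = \half\sum_{j\in[m]} w_j\bigl(d(y_j',x_i) - d(y_j,x_i)\bigr) + \half\bigl(d(a',x_i) - d(a,x_i)\bigr).
$$
The first step is to handle the boomerang component: since each $f_j$ is a boomerang mechanism, $d(y_j',x_i) - d(y_j,x_i) = d(y_j',y_j)$ for every $j$, so the boomerang term equals $\half\sum_j w_j d(y_j',y_j) \ge 0$ and, more importantly, equals $\half\sum_j w_j \delta_j$ where $\delta_j := d(y_j,y_j')$ is exactly the quantity appearing in \lemref{lem:movement-of-weighted-opt}.

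The second step is to bound the (possibly negative) change in the average component from below. By the triangle inequality, $d(a',x_i) - d(a,x_i) \ge -d(a,a')$, and by \lemref{lem:movement-of-weighted-opt} applied to the location vectors $\by,\by'$ with weights $\bw$ we get $d(a,a') \le \sum_{j\in[m]} w_j\delta_j$. Hence the average term is at least $-\half\sum_j w_j\delta_j$. Adding the two bounds gives
$$
cost(\pb(\bx'),x_i) - cost(\pb(\bx),x_i) \ge \half\sum_{j\in[m]} w_j\delta_j - \half\sum_{j\in[m]} w_j\delta_j = 0,
$$
which is precisely strategyproofness. Note that the factor $\half$ on each component is exactly what makes the two bounds cancel, which explains the choice of weights in the mechanism.

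I expect the only genuine subtlety to be verifying that \lemref{lem:movement-of-weighted-opt} applies cleanly here — in particular that the ``probability vector'' hypothesis is met (it is, since $\bw$ is a probability distribution on $m$ elements) and that the $\delta_j$ in the lemma statement matches the per-mechanism displacement $d(y_j,y_j')$ (it does, by definition). Everything else is a one-line triangle inequality plus the boomerang identity, both already available. A secondary point worth a sentence is that the boomerang identity from \defref{def:boomerang-mechanism} is stated for a single deviation $\bx' = (x_i',x_{-i})$, which is exactly the situation we are in, so no composition or iteration of boomerang moves is needed. Thus the argument is short and the main ``obstacle'' is really just bookkeeping: keeping the roles of $\by$ versus $\by'$ and of $a$ versus $a'$ straight so that the signs line up.
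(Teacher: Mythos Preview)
Your proof is correct and follows essentially the same approach as the paper: decompose the cost change into the boomerang and average components, use the boomerang identity to make the first term equal $\half\sum_j w_j\delta_j$, and use \lemref{lem:movement-of-weighted-opt} (together with the triangle inequality, which the paper leaves implicit) to bound the second term below by $-\half\sum_j w_j\delta_j$. The only cosmetic differences are that the paper phrases it as a contradiction argument and suppresses the $\half$ factors (which cancel anyway).
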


\begin{corollary}
\label{cor:snc-dist-is-truthful}
Every fixed probability distribution over \pb{} mechanisms is SP.
\end{corollary}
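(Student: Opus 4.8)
The plan is to exploit the fact that the agents' preferences over lotteries are, by the model, given by \emph{expected} distance, so that the cost functional is linear in the facility-location distribution; combined with the already-established strategyproofness of each individual \pb{} mechanism (\thref{thm:tree-snc-is-truthful}), strategyproofness of any mixture then follows by monotonicity of integration. Concretely, let $\{g_t\}_{t \in T}$ be a collection of \pb{} mechanisms and let $\lambda$ be a probability distribution over the index set $T$; the composite mechanism $g$ maps a profile $\bx$ to the mixture distribution $g(\bx) = \int_T g_t(\bx)\,d\lambda(t)$ (a finite convex combination $\sum_t \lambda_t\, g_t(\bx)$ in the discrete case).

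First I would record the cost decomposition. Since $cost(P,x_i)=E_{y\sim P}[d(y,x_i)]$ is an expectation and expectation is linear, for every profile $\bx$ and every agent $i$ one has $cost(g(\bx),x_i)=\int_T cost(g_t(\bx),x_i)\,d\lambda(t)$. Then, for an arbitrary deviation $x_i'\in G$ with $\bx'=(x_i',x_{-i})$, I would write the same decomposition for both $\bx$ and $\bx'$ and compare pointwise in $t$: by \thref{thm:tree-snc-is-truthful} each $g_t$ is SP, so $cost(g_t(\bx'),x_i)\ge cost(g_t(\bx),x_i)$ for every $t\in T$. Integrating this inequality against the nonnegative measure $\lambda$ preserves the direction of the inequality, yielding $cost(g(\bx'),x_i)\ge cost(g(\bx),x_i)$, which is exactly the strategyproofness condition for $g$. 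Since $\bx$, $i$, and $x_i'$ were arbitrary, $g$ is SP.

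There is essentially no obstacle here; the only point requiring care is the very first step — that the relevant notion of an agent's cost under a mixed mechanism is the expectation of the distance, which is what makes the functional linear in the mixing weights. Once that is in hand, the corollary is a one-line consequence of linearity of expectation plus monotonicity of the integral. (The identical argument in fact shows more: any probability distribution over arbitrary SP randomized mechanisms on $G$ is itself SP, and in particular any mixture of boomerang mechanisms is SP.)
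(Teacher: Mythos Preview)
Your argument is correct and is precisely the standard justification the paper has in mind: the corollary is presented there as ``immediate'' with no separate proof, and the reason it is immediate is exactly the linearity-of-expectation observation you spell out. Your remark that the same argument applies to any mixture of SP randomized mechanisms (not just \pb{} mechanisms) is also correct.
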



In recent years, various SP mechanisms have been proposed in the literature for the facility location problem on the line with the objective of approximating different social objectives, such as the minisum and the minimax functions.
The following proposition shows that all of the mechanisms that have been proposed in this context are special cases of Mechanism \pb{} (or a probability distribution over \pb{} mechanisms).
The proof is deferred to Appendix~\ref{sec:tree-general-mechanism-appendix}.

\begin{proposition}
\label{prop:exmaples}
The following mechanisms on the line are special cases of Mechanism \pb{} (or a probability distribution over \pb{} mechanisms).
\begin{enumerate}
\vspace{-3mm}
\item {\em $k$-location}. Examples of this mechanism are the median mechanism, which is known to minimize the sum of distances, and the leftmost agent mechanism, which provides a $2$-approximation for the minimax objective \cite{PT09} (which is tight with respect to deterministic mechanisms).
\vspace{-3mm}
\item {\em left-right-middle} (LRM) \cite{PT09}. LRM chooses the leftmost agent with probability $\frac{1}{4}$, the rightmost agent with probability $\frac{1}{4}$, and their middle point with probability $\frac{1}{2}$. It provides a (tight) $1.5$-approximation for the minimax objective.
\vspace{-3mm}
\item {\em random dictator} (RD). RD chooses every agent with probability $\frac{1}{n}$. It provides a $2-\frac{2}{n}$-approximation for the minisum objective \cite{Alon09}.
    We will later establish that RD gives a $2$-approximation for the miniSOS objective (see Theorem~\ref{thm:random-2-approx}).
\vspace{-3mm}
\end{enumerate}
\end{proposition}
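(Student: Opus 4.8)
The plan is to treat the three families in turn, in each case either pointing to the $m=1$ reduction already recorded for \pb{}, or exhibiting an explicit choice of boomerang mechanisms $\vecf$ and probability vector $\bw$ (or an explicit convex combination of \pb{} mechanisms) that reproduces the stated distribution exactly. The approximation-ratio statements in the proposition are cited from prior work (or proved later), so the only thing to establish here is the structural claim that each mechanism is an instance of \pb{}.

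For item 1 there is essentially nothing to do. It was already observed that dictatorship and $k$'th-location on a line are boomerang mechanisms, and the description of Mechanism \pb{} records that every boomerang mechanism is the special case $m=1$: there $\bw=(1)$, so $a=wAvg(G,(y_1),(1))=y_1$ and the boomerang and average components coincide, each returning $f_1(\bx)$. Hence the median mechanism and the leftmost-agent mechanism are instances of \pb{}, and I would simply invoke this.

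For item 2 (LRM) I would instantiate \pb{} with $m=2$, taking $f_1$ to be the leftmost-agent mechanism and $f_2$ the rightmost-agent mechanism --- both of which are $k$-location mechanisms, hence boomerang --- and setting $\bw=(\half,\half)$. The boomerang component then selects the leftmost agent with probability $\half w_1=\tfrac14$ and the rightmost agent with probability $\half w_2=\tfrac14$, matching the first two pieces of LRM. It remains to identify the average component $a=wAvg(G,(y_1,y_2),(\half,\half))$ with the midpoint of $y_1$ and $y_2$. This follows from \lemref{lem:weighted-opt-derivative}: for the subtree $T_j\in T(G,a)$ on the side of $y_1$ the inequality reads $\half d(y_1,a)\le\half d(y_2,a)$, and the symmetric subtree gives the reverse inequality, so $d(y_1,a)=d(y_2,a)$; the same lemma also rules out $a$ lying outside $path(y_1,y_2)$ (that would leave one subtree empty and force $\half d(y_1,a)+\half d(y_2,a)\le 0$), so on the line $a$ is exactly the midpoint. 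Thus $a$ is chosen with probability $\half$ and \pb{} reproduces LRM.

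For item 3 (RD) the point is that RD is a \emph{probability distribution over} \pb{} mechanisms rather than a single one: I would write RD as $\sum_{i\in[n]}\tfrac1n\,g_i$, where $g_i$ is the dictatorship of agent $i$ --- a boomerang mechanism, hence the $m=1$ case of \pb{}. A single \pb{} mechanism cannot equal RD, since \pb{} always places probability $\half$ on its average component, which is precisely why the statement is phrased with the ``or a probability distribution'' clause; by \corref{cor:snc-dist-is-truthful} the mixture is SP. I expect no serious obstacle: the only real content is the midpoint computation in item 2 and the observation in item 3 that one must pass to a mixture; the later fact that RD gives a $2$-approximation for miniSOS (\thref{thm:random-2-approx}) is a separate statement not needed here.
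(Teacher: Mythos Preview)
Your proposal is correct and follows essentially the same approach as the paper: item~1 via the $m=1$ reduction for boomerang mechanisms, item~2 via $m=2$, $\bw=(\tfrac12,\tfrac12)$ with leftmost/rightmost as the boomerang pair, and item~3 as the uniform mixture over dictatorships together with \corref{cor:snc-dist-is-truthful}. You supply more detail than the paper does (the midpoint computation via \lemref{lem:weighted-opt-derivative} and the parenthetical remark that RD is not a single \pb{} instance), but this is harmless and the argument is the same.
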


While it is evident from the last proposition that Mechanism \pb{} is very powerful, it imposes a sufficient condition for strategyproofness, but not a necessary one. This is established by Example~\ref{exm:counter-necessary} that could be found in Appendix~\ref{sec:tree-general-mechanism-appendix}.

\section{SP Mechanisms on a Line}
\label{sec:SP-mechanisms-line}
In this section, we study how well SP mechanisms can approximate the miniSOS objective --- minimizing the sum of squared distanced --- on a line.
In the deterministic case, we present a mechanism that provides $2$-approximation, and show that no SP deterministic mechanism can achieve a better ratio.
In the randomized case, we construct a mechanism that provides $1.5$-approximation, and show that this result is tight.
All missing proofs in this section are deferred to Appendix~\ref{sec:SP-mechanisms-line-appendix}.

In this section, the graph is essentially the {\em real} line, $\reals$.
It is easy to verify that an optimal location in this case is simply the average.

\begin{claim}
\label{lem:optimal-point}
Given a location profile $\bx$, the optimal facility location with respect to the miniSOS objective is the average location; i.e., $Opt = argmin_y sc(y,\bx) = \frac{\sum_{i \in N} x_i}{n}$.
\end{claim}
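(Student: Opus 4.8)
The claim asserts that on the real line, the point minimizing $\sum_{i\in N} d(y,x_i)^2 = \sum_{i\in N}(y-x_i)^2$ is the arithmetic mean $\bar x = \frac{1}{n}\sum_{i\in N}x_i$. This is the classical least-squares fact, and the natural plan is a direct one-variable calculus (or completion-of-squares) argument.

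The plan is to treat $sc(y,\bx) = \sum_{i\in N}(y-x_i)^2$ as a real function of the single real variable $y$. First I would expand the square: $sc(y,\bx) = \sum_{i\in N}(y^2 - 2x_i y + x_i^2) = n y^2 - 2y\sum_{i\in N}x_i + \sum_{i\in N}x_i^2$. This is a quadratic in $y$ with positive leading coefficient $n>0$, hence strictly convex, so it has a unique global minimizer, found by setting the derivative to zero: $\frac{d}{dy}sc(y,\bx) = 2ny - 2\sum_{i\in N}x_i = 0$, giving $y = \frac{1}{n}\sum_{i\in N}x_i = \bar x$. Alternatively, and perhaps cleaner for the write-up, I would complete the square: $n y^2 - 2y\sum x_i + \sum x_i^2 = n\left(y - \frac{\sum x_i}{n}\right)^2 + \left(\sum x_i^2 - \frac{(\sum x_i)^2}{n}\right)$, where the second term is a constant independent of $y$ and the first term is minimized (and equals zero) exactly when $y = \frac{\sum x_i}{n}$.

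One small point worth noting is that the graph here is literally $\reals$, so $d(y,x_i) = |y - x_i|$ and $d(y,x_i)^2 = (y-x_i)^2$ with no path-constraint subtleties; the minimization is genuinely unconstrained over all of $\reals$, so the interior critical point is the global optimum. I would also remark that this is consistent with \lemref{lem:weighted-opt-derivative} / \corref{cor:wAvg-is-unique} specialized to the line with uniform weights $w_i = 1/n$, which already guarantees uniqueness of the optimal location; here we additionally identify it explicitly as the average.

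There is essentially no obstacle in this proof — it is a routine computation. The only thing to be careful about is presentation: stating that the quadratic is strictly convex (so that the stationary point is the unique minimizer rather than a maximum or saddle), and being explicit that the minimization is over all of $\reals$. I expect the write-up to be two or three lines.
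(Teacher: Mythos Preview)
Your proposal is correct and matches the paper's own proof, which simply differentiates $\sum_{i\in N}|y-x_i|^2$ to obtain $\sum_{i\in N}(2y-2x_i)$ and solves for $y=\frac{\sum_{i\in N}x_i}{n}$. Your additional remarks about strict convexity and completing the square only make the argument more complete.
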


The following lemma proves extremely useful in establishing the lower bounds throughout this section.
In particular, it helps us relate joint deviations (i.e., coordinated deviations by a subset of agents) to unilateral deviations (i.e., deviations by a single agent).
\begin{lemma}
\label{lem:immigrants}
Let $a,b,c \in \reals$ be three locations such that $a \leq b \leq c$, with at least one strict inequality, and for every $m \in [n]$, let $\bx^0$ (respectively, $\bx^m$) be a location profile in which $n-m$ agents are located at $a$, and $m$ agents are located at $c$ (resp., $b$).
Let $f$ be a randomized mechanism.
If $f$ is an SP mechanism, then $E[|c-y^0|] \leq E[|c-y^m|]$ and $E[|b-y^m|] \leq E[|b-y^0|]$, where $y^0 \sim f(\bx^0)$ and $y^m \sim f(\bx^m)$.
\end{lemma}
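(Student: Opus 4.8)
The plan is to prove the two inequalities by a chain of $n$ unilateral deviations, moving agents one at a time from $c$ to $b$, and to apply strategyproofness at each step to the single agent who moves. Concretely, for $m \in \{0,1,\ldots,n\}$ let $\bx^m$ be the profile with $n-m$ agents at $a$ and $m$ agents at $b$ (with $\bx^0$ having all of the ``movable'' agents at $c$ instead), and note that $\bx^m$ and $\bx^{m-1}$ differ only in the location of one agent: in $\bx^{m-1}$ that agent sits at $c$, in $\bx^m$ she sits at $b$. I would first handle the statement $E[|c-y^0|] \le E[|c-y^m|]$: consider an agent whose true location is $c$. Moving from $\bx^{k}$ to $\bx^{k-1}$ (for $k$ running from $m$ down to $1$) corresponds to that agent deviating from reporting $b$ to reporting her true location $c$, so SP (applied to this agent, whose true peak is $c$) gives $E[|c-y^{k-1}|] \le E[|c-y^{k}|]$; wait — I need to be careful about the direction, so let me instead phrase it as: the agent truly at $c$ reporting truthfully yields cost $E[|c - y^{k-1}|]$ and reporting the lie ``$b$'' yields $E[|c-y^{k}|]$, and SP forbids the lie from helping, i.e. $E[|c-y^{k-1}|] \le E[|c-y^{k}|]$. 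Chaining from $k=m$ down to $k=1$ gives $E[|c-y^0|] \le E[|c-y^m|]$.

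Symmetrically, for $E[|b-y^m|] \le E[|b-y^0|]$, I would consider an agent whose true location is $b$, sitting in profile $\bx^{k}$ among the $k$ agents at $b$. This agent could deviate and report $c$ instead, which transforms $\bx^{k}$ into $\bx^{k-1}$; SP applied to her true peak $b$ gives $E[|b - y^{k}|] \le E[|b - y^{k-1}|]$, and chaining from $k=m$ down to $k=1$ yields $E[|b-y^m|] \le E[|b-y^0|]$. The only subtlety is that in applying SP we need an agent physically located at the claimed true peak in the relevant profile; in $\bx^{k}$ there are $k \ge 1$ agents at $b$ (for the second inequality) and in $\bx^{k}$ there are $n-k$... no — here I need an agent at $c$, and $\bx^{k}$ for $k < m$... actually $\bx^0$ is the only profile with agents at $c$, so for the first inequality I should run the deviation the other way: start from $\bx^0$ where some agent truly at $c$ is telling the truth, and...

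Let me restate the mechanics cleanly. Label a distinguished set $S$ of $m$ agents and for $j \in \{0,\ldots,m\}$ let $\bx^{(j)}$ be the profile where the first $j$ agents of $S$ are at $b$, the remaining $m-j$ agents of $S$ are at $c$, and the other $n-m$ agents are at $a$; thus $\bx^{(0)} = \bx^0$ and $\bx^{(m)} = \bx^m$. For the first inequality, apply SP to the $(j{+}1)$-st agent of $S$ in profile $\bx^{(j)}$, whose true location is $c$ (she is at $c$ in $\bx^{(j)}$), deviating to report $b$, which produces $\bx^{(j+1)}$: SP gives $E[|c-y^{(j)}|] \le E[|c-y^{(j+1)}|]$, and summing over $j = 0,\ldots,m-1$ gives the claim. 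For the second inequality, apply SP to the $(j{+}1)$-st agent of $S$ in profile $\bx^{(j+1)}$, whose true location is $b$ (she is at $b$ in $\bx^{(j+1)}$), deviating to report $c$, which produces $\bx^{(j)}$: SP gives $E[|b-y^{(j+1)}|] \le E[|b-y^{(j)}|]$, and summing over $j$ again gives the claim. The main obstacle is purely bookkeeping — making sure that at each application of SP the agent we invoke truthfulness for is actually located at the point whose distance we are tracking, in the correct one of the two adjacent profiles — and this is resolved by the two-sided choice of which profile to treat as ``truthful'' in each of the two inequalities, as above; the inequalities $a \le b \le c$ with a strict one are never actually needed beyond guaranteeing the three points are genuinely three locations (the argument is purely about distances to $b$ and to $c$).
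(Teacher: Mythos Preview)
Your clean restatement at the end is correct and is exactly the paper's argument: define intermediate profiles $\bx^{(j)}$ with $j$ of the $m$ distinguished agents at $b$ and $m-j$ at $c$, apply SP to the single moving agent (with true peak $c$ for the first chain, true peak $b$ for the second), and telescope. The only cosmetic difference is that the paper writes the chain without the initial false starts; your final paragraph matches it step for step.
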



\subsection{Deterministic mechanisms}

\begin{theorem}
\label{thm:median-two-approx}
Given a location profile $\bx$, the mechanism that chooses the median location in $\bx$ is an SP $2$-approximation mechanism for the miniSOS objective.
\end{theorem}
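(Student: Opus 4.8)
The plan is to split the claim into two independent parts: strategyproofness, which is essentially free, and the approximation ratio, which is where the real work lies. First I would observe that on the line the median is the $k$-location mechanism for the appropriate $k$ (roughly $k = \lceil n/2 \rceil$), hence a boomerang mechanism by the examples listed after Definition~\ref{def:boomerang-mechanism}, and therefore SP. Alternatively, one can invoke the classical single-peakedness argument directly: moving one's report to the same side as the current median either leaves the median fixed or moves it farther away, and moving it to the opposite side cannot pull the median past the reporting agent. Either way this part is a one-line citation to material already in the excerpt.

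For the approximation bound, let $\bx$ be a location profile, let $\mu = \mu(\reals,\bx)$ be the median, and let $\bar{x} = \frac{1}{n}\sum_i x_i = \Opt$ be the optimal location by Claim~\ref{lem:optimal-point}. I want to show $sc(\mu,\bx) \le 2\, sc(\bar x, \bx)$. The key algebraic identity is the standard bias–variance decomposition: for any point $y$,
\[
sc(y,\bx) = \sum_{i} (x_i - y)^2 = \sum_i (x_i - \bar x)^2 + n(\bar x - y)^2 = sc(\bar x,\bx) + n(\bar x - y)^2 .
\]
Applying this with $y = \mu$, the theorem reduces to proving the single inequality
\[
n\,(\bar x - \mu)^2 \le sc(\bar x, \bx) = \sum_i (x_i - \bar x)^2 .
\]
So everything comes down to bounding the squared distance between the mean and the median by the variance (times $n$). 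This is a classical fact — $|\text{mean} - \text{median}| \le \sigma$ where $\sigma$ is the standard deviation — but I would give a self-contained proof rather than cite it.

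The cleanest self-contained argument I would use: without loss of generality assume $\mu \le \bar x$ (the other case is symmetric), and set $t = \bar x - \mu \ge 0$. By the defining property of the median, at least $n/2$ of the $x_i$ satisfy $x_i \le \mu$ — wait, one must be careful about the tie-breaking convention in the excerpt's definition of $\mu$; more robustly, at least $\lceil n/2\rceil$ agents lie at or to the left of $\mu$, hence at distance at least $t$ to the left of $\bar x$. Actually the slickest route avoids counting subtleties: apply Jensen / power-mean to the deviations. Since $\mu$ is a median, $\sum_i |x_i - \mu| \le \sum_i |x_i - \bar x|$ is false in general; instead I would use that the median minimizes $\sum_i |x_i - y|$, giving $\sum_i |x_i - \mu| \le \sum_i |x_i - \bar x|$. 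Then by Cauchy–Schwarz, $\sum_i |x_i - \bar x| \le \sqrt{n}\,\sqrt{\sum_i (x_i-\bar x)^2} = \sqrt{n\, sc(\bar x,\bx)}$, while $\sum_i |x_i - \mu| \ge n\,|\bar x - \mu|$ is again false — the median need not be that far. Let me reconsider: the correct chain is $n\,|\bar x - \mu| = |\sum_i (x_i - \mu) - \sum_i(x_i - \bar x)| = |\sum_i (x_i - \mu)| \le \sum_i |x_i - \mu| \le \sum_i |x_i - \bar x| \le \sqrt{n\,sc(\bar x,\bx)}$, where the middle inequality is the $\ell_1$-optimality of the median and the last is Cauchy–Schwarz. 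Squaring gives $n^2(\bar x - \mu)^2 \le n\, sc(\bar x,\bx)$, i.e. $n(\bar x-\mu)^2 \le sc(\bar x,\bx)$, which is exactly what the decomposition above needs. Combining, $sc(\mu,\bx) = sc(\bar x,\bx) + n(\bar x-\mu)^2 \le 2\, sc(\bar x,\bx)$.

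The main obstacle is purely presentational: getting the inequality $n|\bar x - \mu| \le \sum_i |x_i - \mu|$ and the $\ell_1$-optimality of the median stated cleanly, including the edge cases of even $n$ and coincident points where the median set is an interval — but since the excerpt's $\mu$ is a specific point in that set, $\ell_1$-optimality still holds for it, so no genuine difficulty arises. I would also remark that the bound $2$ is tight (matching the lower bound proved later in this section), e.g. via a profile of the form used in Lemma~\ref{lem:immigrants} with points clustered near two locations, but the tightness is not part of this theorem's statement.
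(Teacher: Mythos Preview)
Your final argument is correct, and it takes a genuinely different route from the paper's. The paper proceeds by direct algebraic expansion: writing $\sum_i (x_i-\mu)^2 = \sum_i\bigl((x_i-Opt)^2 + 2(Opt-\mu)(x_i-\tfrac{Opt+\mu}{2})\bigr)$, then pairing agent $i$ with agent $n+1-i$ and verifying the resulting inequality piecewise for each pair, reducing each pair to the trivial $(2Opt-\mu-x_{n+1-i})^2\ge 0$. Your approach instead uses the bias--variance identity $sc(y,\bx)=sc(\bar x,\bx)+n(\bar x-y)^2$ to reduce the theorem to the single scalar inequality $n(\bar x-\mu)^2\le sc(\bar x,\bx)$, which is exactly the classical ``$|\text{mean}-\text{median}|\le\sigma$'' bound, and you prove it cleanly via the chain $n|\bar x-\mu|=|\sum_i(x_i-\mu)|\le\sum_i|x_i-\mu|\le\sum_i|x_i-\bar x|\le\sqrt{n\,sc(\bar x,\bx)}$ (triangle inequality, $\ell_1$-optimality of the median, Cauchy--Schwarz). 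Your argument is more conceptual and shorter; the paper's pairing argument is more elementary in that it avoids invoking the $\ell_1$-optimality of the median and Cauchy--Schwarz, but at the cost of more bookkeeping.

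One presentational remark: your write-up contains some thinking-out-loud where you first assert that $\sum_i|x_i-\mu|\le\sum_i|x_i-\bar x|$ and $\sum_i|x_i-\mu|\ge n|\bar x-\mu|$ are ``false'' before immediately re-deriving them correctly. Both are in fact true (the first by $\ell_1$-optimality of the median, the second by the triangle inequality applied to $\sum_i(x_i-\mu)=n(\bar x-\mu)$), and you do use them correctly in the final chain; just prune the false starts when you write it up.
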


Notably, this mechanism is a special case of mechanism \pb{}.


The following theorem shows that factor $2$ is tight with respect to deterministic SP mechanisms.
\begin{theorem}
\label{thm:deterministic-bound}
Any deterministic truthful mechanism has an approximation ratio of at least $2$ for the miniSOS objective.
\end{theorem}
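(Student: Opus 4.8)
The plan is to exhibit a single pair of location profiles that differ in one agent's report and show that no deterministic SP mechanism can do well on both. The standard approach (following Procaccia and Tennenholtz~\cite{PT09} for the minimax objective) is to start with a two-agent instance. Consider agents at $0$ and $1$. By symmetry considerations one expects the mechanism to be forced toward the boundary; the key is to make this precise using strategyproofness together with the fact that the optimal location for the miniSOS objective is the average (\clref{lem:optimal-point}).

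First I would analyze the profile $\bx=(0,1)$. Let $y=f(0,1)$; without loss of generality assume $y\le\half$ (the case $y\ge\half$ is symmetric, swapping the roles of the two agents). Now move agent~$1$: consider the profile $\bx'=(0,y)$, and let $y'=f(0,y)$. I would argue, using strategyproofness applied to the agent whose true location is $y$ but who in $\bx$ reports $1$, together with strategyproofness in the other direction, that $f(0,y)=y$. The intuition is that an agent located at $y$ reporting $y$ gets the facility essentially at her location, and the boomerang-type reasoning (or a short direct argument: if $f(0,y)\neq y$ then one of the two agents can profitably misreport between $\bx$ and $\bx'$) pins the outcome. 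Then on the profile $(0,y)$, the optimal location is the average $y/2$, so the miniSOS cost of $Opt$ is $2\,(y/2)^2 = y^2/2$, while the mechanism's cost is $sc(y,(0,y)) = y^2 + 0 = y^2$, giving a ratio of exactly $2$. Pushing $y$ as close to $\half$ as the worst case allows — or, more cleanly, observing the ratio is exactly $2$ regardless of where $y\in(0,\half]$ sits once $f(0,y)=y$ — yields the bound; to handle the degenerate possibility $y=0$ one notes that then already on $\bx=(0,1)$ the cost is $0^2+1^2=1$ versus $Opt$-cost $2\cdot(\half)^2=\half$, again a ratio of $2$.

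The step I expect to be the main obstacle is establishing $f(0,y)=y$ (equivalently, characterizing the mechanism's behavior on the sub-profile enough to force the ratio). One must be careful that the argument is fully driven by the two SP inequalities — the agent at $y$ not wanting to report $1$, and the agent at $1$ in profile $\bx$ not wanting to report $y$ — and does not smuggle in any continuity or anonymity assumption, since \thref{thm:deterministic-bound} is about arbitrary deterministic SP mechanisms. A clean way around any remaining slack is to iterate: if $f(0,y)$ lands strictly between $0$ and $y$, repeat the argument on the smaller interval, and take a limit, concluding that arbitrarily close to $2$ is unavoidable; since the claimed bound is "at least $2$" (an infimum/worst-case statement), this limiting form suffices. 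I would also double-check the boundary case where the mechanism's output coincides with an agent's location to make sure the ratio computation $sc(y,\bx')/Opt = 2$ is attained and not merely approached.
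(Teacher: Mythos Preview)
Your approach is correct and is essentially the paper's argument specialized to $n=2$: the paper places $n/2$ agents at each of $0$ and $2$, observes that a better-than-$2$ approximation forces $f(\bx)=p\in(0,2)$, and then uses \lemref{lem:immigrants} to conclude that $f$ cannot be SP once the agents at $2$ move to $p$. Your worry about the step $f(0,y)=y$ is unfounded and no iteration or limit is needed: in the profile $(0,y)$ the agent at $y$ can misreport $1$ and obtain the facility exactly at $f(0,1)=y$, i.e., cost zero, so SP forces $f(0,y)=y$ in one line.
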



\subsection{Randomized mechanisms}
\label{subsec:SP-mechanisms-line-random}

A natural candidate of a randomized mechanism  to be considered in our context is the \emph{random dictator} (RD) mechanism, which chooses each agent's location with probability $\frac{1}{n}$. This mechanism is SP and is known to provide a $\left( 2-\frac{2}{n} \right)$-approximation with respect to the minisum objective function (See \cite{Alon09} and \cite{Alon10}).
The following theorem shows that the RD mechanism provides a $2$-approximation for the miniSOS objective.
More precisely, for \emph{every} location profile, the RD mechanism yields an SOS cost that is exactly twice the cost of the optimal location.
This is established in the following theorem.

\begin{theorem}
\label{thm:random-2-approx}
For every location profile, the RD mechanism yields an SOS cost that is exactly twice the optimal SOS cost.
\end{theorem}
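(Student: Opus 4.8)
The plan is to compute the expected SOS cost of the RD mechanism directly and compare it to $sc(Opt,\bx)$, where by Claim~\ref{lem:optimal-point} we have $Opt = \bar{x} := \frac{1}{n}\sum_{i} x_i$. The RD mechanism places the facility at $x_j$ with probability $\frac{1}{n}$, so its expected social cost is $sc(RD(\bx),\bx) = \frac{1}{n}\sum_{j\in N} sc(x_j,\bx) = \frac{1}{n}\sum_{j\in N}\sum_{i\in N}(x_j-x_i)^2$.

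The key step is the classical identity relating pairwise squared distances to the variance: for any reals $x_1,\ldots,x_n$,
\[
\sum_{i\in N}\sum_{j\in N}(x_i-x_j)^2 = 2n\sum_{i\in N}(x_i-\bar{x})^2.
\]
I would prove this by expanding $(x_i-x_j)^2 = (x_i-\bar{x})^2 + (\bar{x}-x_j)^2 + 2(x_i-\bar{x})(\bar{x}-x_j)$, summing over all ordered pairs $(i,j)$, and observing that the cross term vanishes because $\sum_{i}(x_i-\bar{x}) = 0$. Dividing by $n$ gives $sc(RD(\bx),\bx) = \frac{1}{n}\cdot 2n\sum_i(x_i-\bar{x})^2 = 2\sum_i(x_i-\bar{x})^2 = 2\,sc(\bar{x},\bx) = 2\,sc(Opt,\bx)$, which is exactly the claimed statement (and in particular holds profile-by-profile, not merely in the worst case).

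There is no real obstacle here — the only thing to be slightly careful about is bookkeeping with ordered versus unordered pairs (the diagonal terms $i=j$ contribute zero, so it does not matter), and noting that strategyproofness of RD is not needed for this statement (it is asserted elsewhere). The whole argument is the variance decomposition identity plus one line of algebra.
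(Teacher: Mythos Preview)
Your proposal is correct and takes essentially the same approach as the paper: both compute $sc(RD(\bx),\bx)=\frac{1}{n}\sum_{i,j}(x_i-x_j)^2$ and reduce it to $2\sum_i(x_i-\bar x)^2$ via the standard variance identity. The only cosmetic difference is that the paper expands $(x_i-x_j)^2=x_i^2+x_j^2-2x_ix_j$ and then regroups, whereas you decompose around $\bar x$ from the start; the content is identical.
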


As shown in \propref{prop:exmaples}, the RD mechanism is a probability distribution over \pb{} mechanisms.

Apparently, the RD mechanism does not perform better the deterministic median mechanism.
Yet, this mechanism turns out to be useful when integrated within a more sophisticated mechanism, as shown below.

\begin{mechanism}
\label{mech:half_opt_half_RD}
Given $\bx\in R^n$, choose the average point with probability $\frac{1}{2}$, and apply the RD mechanism with probability $\frac{1}{2}$ (i.e., for every $i \in N$, $x_i$ is chosen with probability $\frac{1}{2n}$).
\end{mechanism}


\begin{theorem}
\label{thm:random-1.5-approx}
Mechanism~\ref{mech:half_opt_half_RD} is an SP $1.5$-approximation for the SOS objective.
\end{theorem}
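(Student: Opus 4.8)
The plan is to dispatch the two requirements of the theorem --- strategyproofness and the $1.5$ guarantee --- separately, and neither needs machinery beyond what is already available.

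For strategyproofness, I would observe that Mechanism~\ref{mech:half_opt_half_RD} is literally an instance of Mechanism \pb{}. Take $m=n$, let each $f_i$ be the dictatorship $f_i(\bx)=x_i$ (a boomerang mechanism, as noted after \defref{def:boomerang-mechanism}), and let $\bw$ be the uniform probability vector, $w_i=\tfrac1n$. Then $y_i=x_i$ for all $i$, and on the line $\R$ the point $a=wAvg(\R,\by,\bw)$ that minimizes $\sum_i \tfrac1n d(l,x_i)^2$ is exactly the average $\tfrac1n\sum_i x_i$ (this is Claim~\ref{lem:optimal-point} applied with uniform weights). Mechanism \pb{} then selects $x_i$ with probability $\half w_i=\tfrac1{2n}$ (the boomerang component) and the average with probability $\half$ (the average component), which is precisely Mechanism~\ref{mech:half_opt_half_RD}. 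Strategyproofness then follows immediately from \thref{thm:tree-snc-is-truthful}.

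For the approximation ratio, fix an arbitrary profile $\bx$ and let $Opt=sc(Opt,\bx)$ denote the optimal SOS cost. Since the social cost of a distribution is the expectation of the pointwise social cost, linearity of expectation gives
$$
sc(f(\bx),\bx)=\half\,sc\!\left(\tfrac1n\textstyle\sum_{i}x_i,\ \bx\right)+\half\,sc(RD,\bx).
$$
The first term is $\half\,Opt$, because by Claim~\ref{lem:optimal-point} the average is the optimal location on the line. The second term is $\half\cdot 2\,Opt = Opt$ by \thref{thm:random-2-approx}, which says the RD mechanism has SOS cost exactly twice the optimum on every profile. Summing, $sc(f(\bx),\bx)=\tfrac32\,Opt$, so the mechanism is a $1.5$-approximation (in fact it meets the bound with equality on every profile).

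There is essentially no hard step here once \thref{thm:random-2-approx} and Claim~\ref{lem:optimal-point} are granted; the two points requiring a little care are (i) verifying that the "average $+$ RD" mixture really is the \pb{} distribution with the stated $f_i$ and $\bw$, so that strategyproofness transfers cleanly, and (ii) noting that since both component bounds hold pointwise rather than only in the worst case, the factor $1.5$ is attained with equality on every profile, which makes the worst-case ratio exactly $1.5$.
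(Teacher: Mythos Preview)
Your proposal is correct and mirrors the paper's own proof almost exactly: the paper also shows the $1.5$ ratio by writing the expected cost as $\half\cdot sc(avg(\bx),\bx)+\half\cdot sc(RD,\bx)$ and invoking Claim~\ref{lem:optimal-point} together with \thref{thm:random-2-approx}, and it establishes strategyproofness by identifying the mechanism as the instance of \pb{} with $m=n$, uniform $\bw$, and $f_i(\bx)=x_i$. Your additional remark that the ratio is met with equality on every profile is also the paper's observation.
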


Notably, while approximation in its usual sense looks at the worst-case ratio between the expected cost of the mechanism's solution and the cost of the optimal solution, in this case the $1.5$-approximation applies not only in the worst-case notion; rather, this is the exact approximation achieved for every location profile.

Surprisingly, Mechanism~\ref{mech:half_opt_half_RD} provides the best possible approximation; that is, no SP mechanism, randomized or not, can achieve a better approximation ratio than $1.5$.
This bound is established in the next theorem. Due to lack of space, we give an overview of the proof, and defer the full proof to Appendix~\ref{sec:SP-mechanisms-line-appendix}.
%

\begin{theorem}
\label{thm:random-bound}
Any randomized SP mechanism has an approximation ratio of at least $1.5$ for the miniSOS objective.
\end{theorem}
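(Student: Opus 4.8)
The plan is to exhibit, for any $\varepsilon>0$, a family of location profiles on which every SP randomized mechanism incurs SOS cost at least $(3/2-\varepsilon)$ times the optimum. The natural starting point is the two-point configuration: $n-m$ agents at location $0$ and $m$ agents at location $1$, for varying $m$. Write $p(m)=E[y^m]$ for the expected facility location under the mechanism on this profile (by Claim~\ref{lem:optimal-point} the optimum is at $m/n$, and by symmetry one may restrict attention to the expected location, since $sc(P,\bx)= n\,\mathrm{Var}(P) + sc(E[y],\bx)$ for two-point-supported $\bx$, so pushing the analysis through the first moment is the right normalization). The SP constraints relating these profiles are exactly what Lemma~\ref{lem:immigrants} supplies: moving a block of agents between $0$, an intermediate point, and $1$ can only move the expected facility monotonically in the corresponding direction. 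First I would use Lemma~\ref{lem:immigrants} (with $a=0$, $c=1$, and $b$ ranging over intermediate points, and also with the roles of the endpoints reversed) to derive a system of inequalities constraining how $p(m)$ — and more refined quantities such as $E[|y^m|]$ and $E[|1-y^m|]$ — can vary as $m$ increases from $0$ to $n$.

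The core of the argument is a tension between two regimes. When $m$ is small, optimality forces the facility to sit near $0$, so $E[|1-y^m|]$ is close to $1$; when $m$ is large it must sit near $1$. The SP (monotonicity) constraints force the transition between these regimes to be "spread out" across the values of $m$, and a mechanism that is cheap on the extreme profiles (few agents at one end) is necessarily expensive on the balanced ones, and vice versa. Concretely, I would track the cost $\mathrm{ratio}(m) = sc(f(\bx^m),\bx^m)/\,Opt(\bx^m)$ and set up a weighted average (or an averaging/telescoping argument over a carefully chosen sequence of values of $m$, or a limiting continuous version where $m/n\to t\in(0,1)$) showing that the average of these ratios, constrained by the Lemma~\ref{lem:immigrants} inequalities, is at least $3/2-o(1)$. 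Since the maximum is at least the average, some profile achieves ratio $\ge 3/2-\varepsilon$. Matching Mechanism~\ref{mech:half_opt_half_RD}, which hits exactly $3/2$, confirms the target constant is correct and tells us the extremal mechanism is a mix of the average and a random dictator — this should guide the choice of test profiles (two-point profiles are exactly the ones on which RD and the mean are the relevant competitors).

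The main obstacle I anticipate is that the raw monotonicity inequalities from Lemma~\ref{lem:immigrants} are too weak by themselves: they control only how $E[|c-y|]$ and $E[|b-y|]$ move, not the full distribution, and in particular a mechanism could "hide" variance to satisfy them while still having small SOS cost on balanced profiles. Overcoming this requires (i) choosing the intermediate point $b$ adaptively — most likely $b$ should be placed near where the mechanism's expected facility location sits on a nearby profile, turning the joint deviation into an informative single-agent constraint — and (ii) combining several instances of the Lemma across many values of $m$ simultaneously and possibly across several choices of $b$, so that the variance terms are pinned down from both sides. I expect the bookkeeping to involve introducing auxiliary functions like $g(m)=E[|1-y^m|] + E[|y^m|] - 1 \ge 0$ (a proxy for spread) and showing that the SP constraints prevent $g$ from being uniformly small while the expected location traverses $[0,1]$. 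This is the step the paper itself flags as "subtle analysis"; the rest (Claim~\ref{lem:optimal-point} for the optimum, the variance decomposition, and the final averaging) is routine once that structural lemma is in place.
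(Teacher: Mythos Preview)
Your proposal has the right raw materials (two-point profiles, Lemma~\ref{lem:immigrants}, and the realization that the hard part is controlling variance), but the organizing principle---vary the count $m$ on a fixed pair $\{0,1\}$ and average the ratios over $m$---does not yield the bound, and you have not supplied the missing structural step. Concretely: restricted to the family $\bx^m$ (with $n-m$ agents at $0$ and $m$ at $1$), the rule ``output the point $m/n$ deterministically'' satisfies every SP constraint you can write between profiles in this family (both monotonicity inequalities $E[|y^m|]\le E[|y^{m+1}|]$ and $E[|1-y^{m+1}|]\le E[|1-y^m|]$ hold) and has ratio $1$ on every $\bx^m$. So no averaging over $m$, however weighted, can produce a nontrivial lower bound from these constraints alone. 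Your variance decomposition $sc(P,\bx)=n\,\mathrm{Var}(P)+sc(E[y],\bx)$ is correct but works against you here: it shows the entire content of the lower bound is a \emph{lower} bound on $\mathrm{Var}(P)$, and nothing in your outline pins that down. Mentioning ``adaptive $b$'' and ``combining across many $m$'' does not fill the gap; you need a concrete mechanism that forces spread.

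The paper does something quite different. It fixes $m=n/2$ throughout and instead \emph{shifts the positions}. Start with the profile $\{0,4\}$ and look at the spread proxy $E[|y-1|+|y-3|]$, where $1$ and $3$ are interior points. If this is at most $3-2\epsilon$, then (say) $E[|y-3|]\le\tfrac32-\epsilon$; by Lemma~\ref{lem:immigrants} the same holds after moving the right block from $4$ to $3$, hence on $\{0,3\}$ one has $E[|y-0|]\ge\tfrac32+\epsilon$; another application of the lemma (moving the left block from $0$ to $-1$) carries this to the profile $\{-1,3\}$. Iterating, each pass through a profile of width $4$ tightens the bound by $2\epsilon$, and after $O(1/\epsilon)$ shifts the inequality becomes impossible. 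This yields a width-$4$ profile $\{a,a+4\}$ on which $E[|y-(a+1)|+|y-(a+3)|]>3-2\epsilon$. The second step converts this first-moment spread into SOS: condition on $|y-(a+2)|\le 1$ versus $>1$, use Jensen on the tail piece, and minimize over the tail mean to get $E[sc(y,\bx)]>6n-4n\epsilon$ while $Opt=4n$. The two ideas you are missing are (i) the iterative position-shift that \emph{accumulates} the $\epsilon$ losses (your scheme varies $m$ and has no such accumulation), and (ii) the inside/outside conditioning on an interval strictly contained between the two agent clusters, which is what makes the Jensen step bite.
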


\begin{proofsketch}
Assume on the contrary that there exist an SP mechanism $f(\bx)$ and $\epsilon>0$ such that $f(\bx)$ yields an approximation ratio of $1.5-\epsilon$, and let $y$ be a random variable that is distributed according to $f(x)$. Starting at the original profile, we construct a series of profiles, on which we repeatedly apply \lemref{lem:immigrants}.
Using this technique, we prove that there exist some $a \in \reals$ and a location profile $\bx$ in which $\frac{n}{2}$ agents are located at $a$ and $\frac{n}{2}$ agents are located at $a+4$, such that $E[|y-(a+1)| + |y-(a+3)|]>3-2\epsilon$.
We then use the last inequality to show that $E[sc(y,\bx)] > 6n - 4n\epsilon$.
By observing that the optimal facility location for this profile is at $a+2$ (which yields an SOS cost of $4n$),
we get $\frac{E[sc(y,\bx)]}{Opt} > \frac{6n - 4n\epsilon}{4n} = 1.5 - \epsilon$, and a contradiction is reached.
\end{proofsketch}

\section{SP Mechanisms on a Tree}
\label{sec:tree}
In this section, we study the miniSOS objective with respect to locating the facility on a tree.
In the deterministic case, we show that the median of a tree provides a $2$-approximation for the miniSOS objective, and show that no SP deterministic mechanism can achieve a better ratio with respect to miniSOS.
In the randomized case, we construct an instance of Mechanism \pb{}, which obtains a $1.83$-approximation.

\begin{figure}
\label{fig:tree-median}
\begin{center}
\begin{tikzpicture}[scale=1]

\tikzstyle{dot}=[circle,draw=black,fill=white,thin,inner sep=0pt,minimum size=3mm]
\tikzstyle{point}=[circle,draw=black,fill=white,thin,inner sep=0pt,minimum size=0mm]
\tikzstyle{blackdot}=[circle,draw=black,fill=black,thin,
inner sep=0pt,minimum size=1.5mm]


\node (mu) at (0,0) [dot] {};
\node (mu) at (0,0) [blackdot] {};
\node at (0.2,-0.4) {\small{$\mu$}};

\node (opt) at (2,0) [blackdot] {};
\node at (1.8,-0.4) {\small{$opt$}};

\path (mu) edge node [above] {} (opt);

\node (i1) at (0.7,0) [point] {};
\node (i2) at (1,0) [point] {};
\path (i2) edge node [above] {$a$} (opt);

\node (xk) at (0.8,-1.4) [dot] {};
\node (xj) at (1,1.5) [dot] {\small{$x_j$}};

\path (i1) edge node [above] {} (xk);
\path (i2) edge node [right] {b} (xj);

\node (l0) at (-0.2,1.5) [dot] {};
\node (l1) at (-1.4,1.4) [dot] {};
\node (l2) at (-1,0) [dot] {};
\node (l3) at (-0.7,-0.7) [dot] {};
\node (l4) at (-0.2,-1.5) [dot] {};
\path (mu) edge node [above] {} (l0);
\path (mu) edge node [above] {} (l1);
\path (mu) edge node [above] {} (l2);
\path (mu) edge node [above] {} (l3);
\path (mu) edge node [above] {} (l4);

\node (r1) at (3.4,1.4) [dot] {};
\node (r2) at (4,0) [dot] {};
\node (r3) at (3.4,-1.4) [dot] {};
\path (opt) edge node [above] {} (r1);
\path (opt) edge node [above] {} (r2);
\path (opt) edge node [above] {} (r3);

\end{tikzpicture}
\begin{tikzpicture}[scale=1]
\tikzstyle{dot}=[circle,draw=black,fill=white,thin,inner sep=0pt,minimum size=3mm]
\tikzstyle{point}=[circle,draw=black,fill=white,thin,inner sep=0pt,minimum size=0mm]
\tikzstyle{blackdot}=[circle,draw=black,fill=black,thin,
inner sep=0pt,minimum size=1.5mm]

\node (mu) at (0,0) [dot] {};
\node (mu) at (0,0) [blackdot] {};
\node at (0.2,-0.4) {\small{$\mu$}};

\node (opt) at (2,0) [blackdot] {};
\node at (1.8,-0.4) {\small{$opt$}};

\path (mu) edge node [above] {} (opt);

\node (i1) at (0.7,0) [point] {};

\node (xk) at (0.8,-1.4) [dot] {};
\node (xj) at (1.6,2) [dot] {\small{$x_j$}};

\path (i1) edge node [above] {} (xk);
\path (opt) edge [dashed] node [right] {a+b} (xj);

\node (l0) at (-0.2,1.5) [dot] {};
\node (l1) at (-1.4,1.4) [dot] {};
\node (l2) at (-1,0) [dot] {};
\node (l3) at (-0.7,-0.7) [dot] {};
\node (l4) at (-0.2,-1.5) [dot] {};
\path (mu) edge node [above] {} (l0);
\path (mu) edge node [above] {} (l1);
\path (mu) edge node [above] {} (l2);
\path (mu) edge node [above] {} (l3);
\path (mu) edge node [above] {} (l4);

\node (r1) at (3.4,1.4) [dot] {};
\node (r2) at (4,0) [dot] {};
\node (r3) at (3.4,-1.4) [dot] {};
\path (opt) edge node [above] {} (r1);
\path (opt) edge node [above] {} (r2);
\path (opt) edge node [above] {} (r3);

\end{tikzpicture}
\end{center}
\caption{An illustration of the iterative process described in the proof of \thref{thm:tree-det-mechanism}. $G^{j-1}$ is illustrated on the left, and $G^j$ is illustrated on the right, where the dashed line represents the new edge.}
\label{fig:median-process}
\end{figure}
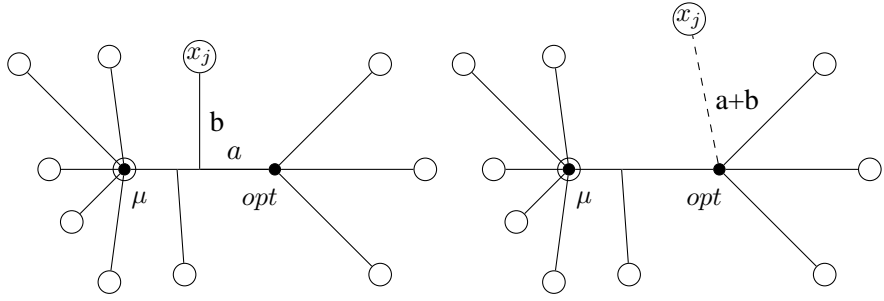


\subsection{Deterministic mechanisms}

It is well known that the mechanism that chooses a median of a tree is SP (see, e.g., \cite{Alon09}).
The following theorem establishes that a median also gives a $2$-approximation for the miniSOS objective, which is tight, according to \thref{thm:deterministic-bound}.
The theorem is followed by a sketch of the proof.
The full proof is deferred to Appendix~\ref{sec:tree-appendix}.

\begin{theorem}
\label{thm:tree-det-mechanism}
The median of a tree is an SP $2$-approximation mechanism for the miniSOS objective.
\end{theorem}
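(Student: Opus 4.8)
The plan is to establish two things: first, that the median of a tree is SP (which the paper already grants us, since the median is cited as a boomerang mechanism and hence a special case of \pb{}); and second, that $sc(\mu,\bx) \le 2\cdot sc(Opt,\bx)$ for every tree $G$ and profile $\bx$. For the approximation bound I would reduce the tree case to a one-dimensional comparison along $path(\mu,Opt)$. Let $a=\mu(G,\bx)$ and let $Opt$ be the miniSOS optimum (the weighted average with uniform weights, by Definition~\ref{def:weighted-opt}). If $a=Opt$ there is nothing to prove, so assume they differ and consider the path between them.

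The first key step is an iterative ``flattening'' argument, illustrated in Figure~\ref{fig:median-process}, in which I transform $(G,\bx)$ into a simpler instance without increasing the ratio $sc(\mu,\bx)/sc(Opt,\bx)$. Concretely, I want to push all agents onto the line $path(\mu,Opt)$ (suitably extended) while preserving each agent's distance to the relevant endpoint, so that both $sc(\mu,\cdot)$ is unaffected in the direction that matters and $Opt$ can only move in a controlled way. This is exactly the content of Corollary~\ref{cor:opt-flattening} together with Lemma~\ref{lem:differnce-of-costs-two-on-line}: those give, for $a=\mu$ and $b=Opt$, the clean identity
$$
sc(\mu,\bx) - sc(Opt,\bx) = -|N|\,d(\mu,Opt)^2 + 2|N|\,d(\mu,Opt)\,d(\mu,Opt'),
$$
where $Opt'$ is the optimum after the flattening on the $\mu$-side subtree. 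Here I must be careful to apply the lemmas with the correct subtrees $T_a, T_b$ containing $path(\mu,Opt)$ and to check that the hypothesis ``$Opt$ is located in $G\setminus T_a$'' holds by the definition of $Opt$ relative to $\mu$; I would also iterate the flattening from the root toward $\mu$ along the construction in Figure~\ref{fig:median-process}, replacing $G^{j-1}$ by $G^j$ at each step, arguing that the ratio is non-decreasing under each replacement.

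Once everything is reduced to the line $\reals$, the second key step is a direct computation. On the line, the median property says that at least half the agents are weakly on each side of $\mu$, and $Opt$ is the average; writing $t=d(\mu,Opt)$ and using Claim~\ref{lem:optimal-point}, I get $sc(y,\bx)=sc(Opt,\bx)+|N|\,d(y,Opt)^2$ for any $y$ (the standard parallel-axis identity), so $sc(\mu,\bx)=sc(Opt,\bx)+|N|t^2$. It then suffices to show $|N|t^2 \le sc(Opt,\bx)$, i.e. $t^2 \le \frac1{|N|}\sum_i d(x_i,Opt)^2$. This follows because the median lies between $Opt$ and at least $\lceil |N|/2\rceil$ of the agents: each such agent $x_i$ satisfies $d(x_i,Opt)\ge t$ (it is on the far side of $\mu$ from $Opt$, hence at distance $\ge d(\mu,Opt)=t$ from $Opt$), so $\sum_i d(x_i,Opt)^2 \ge \frac{|N|}{2}\,t^2$... wait — that only gives factor $2$ in the wrong direction, so instead I combine it with the flattened identity above, which after rearrangement yields $sc(\mu,\bx) = 2\,sc(Opt,\bx) - |N|d(\mu,Opt)^2 + (\text{lower-order slack}) + 2|N|d(\mu,Opt)(d(\mu,Opt')-d(\mu,Opt))$, and $d(\mu,Opt')\le 2d(\mu,Opt)$ by Lemma~\ref{lem:movement-of-weighted-opt} applied to the flattening moves, giving $sc(\mu,\bx)\le 2\,sc(Opt,\bx)$.

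The main obstacle I anticipate is making the flattening reduction fully rigorous: one must verify that each application of Corollary~\ref{cor:opt-flattening} genuinely does not decrease the approximation ratio, that the subtree hypotheses of Lemma~\ref{lem:differnce-of-costs-two-on-line} remain satisfied after each step, and that the iteration (Figure~\ref{fig:median-process}) terminates with a line instance; the bound $d(\mu,Opt')\le 2 d(\mu,Opt)$ via Lemma~\ref{lem:movement-of-weighted-opt} is the quantitative heart, and tracking the constant exactly to land on $2$ (rather than something larger) is where the care is needed. The final line computation is routine by comparison.
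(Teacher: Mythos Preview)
Your high-level plan---flatten the tree instance along $path(\mu,Opt)$ and then argue on the line---is indeed the paper's strategy, but the execution has real gaps. Your invocation of Lemma~\ref{lem:differnce-of-costs-two-on-line} and Corollary~\ref{cor:opt-flattening} with $a=\mu$, $b=Opt$ is illegitimate as stated: those results assume every agent lies in $G\setminus T_a$, in $G\setminus T_b$, or on $path(a,b)$, i.e.\ that no agent hangs off the \emph{open} interior of the path. Removing such agents is precisely what the flattening in Figure~\ref{fig:median-process} accomplishes, and it is \emph{not} the flattening of Corollary~\ref{cor:opt-flattening}; the paper's transformation takes each agent $j$ whose branch meets the open interior and re-attaches it on a new edge rooted at $Opt$ of length $d(x_j,Opt)$. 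The crucial fact---proved via Lemma~\ref{lem:weighted-opt-derivative}, not Lemma~\ref{lem:movement-of-weighted-opt}---is that this keeps $Opt$ and the optimal cost \emph{exactly} fixed while $\mu$ remains the median and $sc(\mu,\cdot)$ can only increase. You never need a bound like $d(\mu,Opt')\le 2\,d(\mu,Opt)$; in fact $Opt'=Opt$ throughout the iteration.

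Second, the displayed identity you write with a leading ``$2\,sc(Opt,\bx)$'' and unspecified ``lower-order slack'' does not follow from anything you cite; from Corollary~\ref{cor:opt-flattening} (once its hypotheses hold) one gets only $sc(\mu,\bx)=sc(Opt,\bx)-|N|t^2+2|N|t\,d(\mu,Opt')$ with $t=d(\mu,Opt)$, and there is no algebraic route from that alone to a $2\,sc(Opt,\bx)$ term. Once the paper's flattening is done and the instance is unfolded to $\reals$ (preserving each agent's distance to both $\mu$ and $Opt$), you should simply invoke Theorem~\ref{thm:median-two-approx}: on the line the median $\mu_{\bz}\le 0$ and the average $Opt_{\bz}\ge b:=d(\mu,Opt)$, so by convexity $sc(0,\bz)\le sc(\mu_{\bz},\bz)\le 2\,sc(Opt_{\bz},\bz)\le 2\,sc(b,\bz)$, which is exactly $sc(\mu,\bx)\le 2\,sc(Opt,\bx)$. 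If you prefer a direct line computation, note that your parallel-axis step reduces the question to $|N|t^2\le sc(Opt,\bx)$, which is the classical inequality $|\text{mean}-\text{median}|\le \text{standard deviation}$; that is what you were missing when you stopped at $(|N|/2)t^2\le sc(Opt,\bx)$.
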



\begin{proofsketch}
Given an instance $(G,\bx)$, we iteratively transform it, in a way that can only make the approximation ratio obtained by the median worse, and eventually prove the desired approximation ratio on the final instance.
Let $\mu$ and $Opt$ denote the respective median and optimal location in the original profile.
The iterative process proceeds as follows.
As long as there exists an agent $j$ such that $x_j$'s subtree is rooted at the open interval $path(\mu,Opt)$, pick such an agent $j$, create a new edge of length $d(x_j,Opt)$, rooted at $Opt$, and locate $x_j$ at its tip
(see illustration in \figref{fig:median-process}).
It can be proved that the median and the optimal location did not change as a result of this transformation, and that the optimal cost did not change either.
The cost of the median, however, can only increase.
Therefore, the approximation ratio can only get worse by each transformation.
Upon termination of this process, we prove the desired approximation ratio on the final instance by reducing it to the deterministic scenario on a line and applying \thref{thm:median-two-approx}.
\end{proofsketch}


\subsection{Randomized mechanisms}

In this section we present an instance of mechanism \pb{}, which obtains $1.83$-approximation for trees, with respect to the miniSOS objective.
Our randomized mechanism uses the following deterministic mechanism as a building block. It is followed by a proposition, whose proof is deferred to Appendix~\ref{sec:tree-appendix}.

{\bf Mechanism \verb"dictatorial-generalized-median"} (\dgm{}):
Mechanism \dgm{} receives as parameters an index $i \in [n]$, and a fraction $q \in (1/2,1]$.
The facility location is chosen deterministically, with respect to $i$ and $q$, as follows.
Fix the point $x_i$ as the root of the tree, and denote the current location $a$.
Then, as long as there exists a subtree in $T(G,a)$ that contains at least fraction $q$ of the agents (this is well defined, since $q>1/2$), smoothly move down this subtree.
Finally, when we reach a point where it is not possible to move closer to at least fraction $q$ of the agents by continuing downwards, we stop and return the current location.

\begin{proposition}
\label{prop:PC-is-boomerang}
Mechanism \dgm{} is a boomerang mechanism.
\end{proposition}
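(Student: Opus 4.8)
The goal is to show that Mechanism \dgm{} satisfies the boomerang identity: for any profile $\bx$, any agent $k$, and any deviation $x_k'$ giving $\bx' = (x_k', x_{-k})$, we have $cost(f(\bx'), x_k) - cost(f(\bx), x_k) = d(f(\bx'), f(\bx))$. Writing $y = f(\bx)$ and $y' = f(\bx')$, this is equivalent (by the triangle inequality and the fact that in a tree the path between any two points is unique) to showing that $y$ lies on $path(x_k, y')$ — i.e. that when the deviating agent $k$ "pulls" the facility from $y$ to $y'$, the new location $y'$ is never strictly on the far side of $y$ relative to $x_k$; equivalently, $x_k, y, y'$ are collinear in the tree with $y$ between $x_k$ and $y'$ (allowing degeneracies $y=y'$ or $y=x_k$). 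This is the standard way the boomerang property is verified for the examples listed after Definition~\ref{def:boomerang-mechanism}, and I would follow the same route here.

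First I would set up the descent process carefully. Recall \dgm{} roots the tree at $x_i$ (note $i$ is a \emph{fixed} parameter, which may or may not equal the deviator $k$) and walks down, at each point $a$ moving into the unique subtree of $T(G,a)$ holding $\ge q$ of the agents, stopping when no such subtree exists. The key structural observation is that the output $y$ is characterized by a "weak majority" condition: $y$ is the unique point such that (a) every subtree hanging off $y$ contains strictly less than a fraction $q$ of the agents, and (b) $y$ is the point with this property closest to the root $x_i$ — more precisely, along the path from $x_i$ to $y$, every proper prefix point has a subtree (the one continuing toward $y$) with $\ge q$ agents. I would first argue this characterization is well-defined (using $q > 1/2$ so the "heavy" subtree, when it exists, is unique) and that the process indeed terminates at this point.

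Next, the core case analysis. Suppose agent $k$ moves from $x_k$ to $x_k'$. I would split on whether $k = i$ (the root changes) or $k \neq i$ (only one agent's count-location changes, root fixed). For $k \neq i$: moving $x_k$ only changes, for each point $a$ on the tree, the agent-counts of those subtrees of $T(G,a)$ that separate $x_k$ from $x_k'$ — and it moves one unit of "mass" from the $x_k$-side to the $x_k'$-side. Tracing the descent on $\bx'$ versus $\bx$, at every branch point the decision can only change in favor of moving toward $x_k'$ (a subtree toward $x_k'$ gained mass, one toward $x_k$ lost it, and subtrees on neither side are unaffected). A monotonicity argument then shows the new stopping point $y'$ satisfies: $y$ lies on $path(x_k, y')$. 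The delicate sub-point is when the old output $y$ itself becomes "too heavy" on the $x_k'$-side and the descent continues past $y$ toward $x_k'$ — here I must check it proceeds exactly along $path(y, x_k')$ and stops at a point $y'$ with $y$ still between $x_k$ and $y'$. For $k = i$: here the root moves, but the walk is to the \emph{same} median-like target because the descent characterization (a)–(b) above is actually independent of which agent is the root, provided the root is an agent's location; I would observe that changing the root from $x_i$ to $x_i'$ (with the count also shifting) still lands at the point satisfying the intrinsic "all subtrees $< q$" condition reachable from the new root, and then reuse the same monotonicity/collinearity argument.

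I expect the main obstacle to be the $k \neq i$ case at the branch points along $path(x_k, x_k')$ when the old output $y$ is an interior point of this path or coincides with $x_i$: one must verify the invariant "$y \in path(x_k, y')$" survives every step, including the possibility that the descent on $\bx'$ both shortens (stops before $y$, which cannot happen when $y$ is strictly between $x_k$ and $x_k'$ since the toward-$x_k'$ subtree only gained mass) and lengthens past $y$. Handling the degenerate cases ($x_k = x_k'$ trivial; $y = x_k$; $y$ a tree node of high degree where the "heavy" subtree switches identity) cleanly is where the bookkeeping is heaviest, but none of it is conceptually hard — it is the same "the deviator absorbs the full shift" phenomenon that makes dictatorship and the generalized median boomerang, and \dgm{} interpolates between exactly those two extremes as $q$ ranges over $(1/2, 1]$.
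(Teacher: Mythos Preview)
Your reduction to showing $y \in path(x_k, y')$ and the monotonicity argument for $k \neq i$ are correct and match the paper's approach; indeed, the paper's own proof is just a two-line observation that ``an agent can modify the outcome only by declaring herself on the opposite side, pushing the facility away from her true location,'' with no case split at all.

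Your treatment of the $k = i$ case, however, rests on an incorrect claim: the characterization (a)--(b) is \emph{not} independent of the root, even when the root is at an agent's location. For instance, with agents at $0,1,2,3$ on a line and $q = 2/3$, rooting at $0$ yields output $1$ while rooting at $3$ yields output $2$. So when agent $i$ deviates, both the root and the counts shift simultaneously, and you cannot decouple the two by appealing to a root-independent target; you must argue directly that this joint change still pushes the facility away from the true $x_i$. The conclusion you aim for is true, and your fallback ``reuse the monotonicity/collinearity argument'' is the right instinct, but the sentence claiming root-independence should be dropped---it is false and not needed.
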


Note that mechanism \dgm{} is not a boomerang mechanism for $q \leq 1/2$, as a beneficial misreport can exist, depending on the tie-breaking rule.

With this we are ready to introduce our randomized mechanism.

\vspace{0.1in}

{\bf Mechanism \verb"randomized DGM"}:
Mechanism \randdgm{} receives as a parameter a fraction $q \in (1/2, 2/3]$, and applies mechanism
\pb{} with the following parameters: $m=n$, $w_i=1/n$ for every $i \in [n]$, and for every $i \in [n]$, $f_i(\bx)$ is mechanism \dgm{} with parameters $i$ and $q$.

Interestingly, the following property holds. Its proof is deferred to the appendix.

\begin{lemma}
\label{lem:projected-path}
For every tree $G$, the points $f_1(\bx), \ldots, f_n(\bx)$, calculated by Mechanism \randdgm{}, are located on a single path.
\end{lemma}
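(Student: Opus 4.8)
The plan is to show that all the points $f_i(\bx) = \dgm(i,q)$ lie on the path between two extremal ones, and the natural candidates for the endpoints are the two points obtained when the "dictator" $x_i$ is chosen as far as possible in each of two directions. More precisely, I would first establish the key structural fact about a single run of \dgm{}: the output $f_i(\bx)$ depends on the root $x_i$ only through the *direction* from which one enters the "core" region, where the core is the set of points $a$ such that no subtree in $T(G,a)$ contains a fraction $\geq q$ of the agents. Because $q > 1/2$, at most one subtree at any point can hold a $q$-fraction, so the descent is monotone and deterministic; the stopping point is the unique point of the core first reached along the path from $x_i$. Thus $f_i(\bx) = \pi(x_i)$, the "projection" of $x_i$ onto the core via the nearest-point map in the tree.

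Next I would argue that the core is itself a subtree (a connected subset) of $G$. This follows from \lemref{lem:weighted-opt-derivative}-type reasoning or directly: if $a$ and $b$ are both in the core, then every point on $path(a,b)$ is in the core, because moving from $a$ toward $b$ only decreases the mass on the single heavy-side subtree (if there were one) — one shows the fraction on any subtree at an intermediate point is sandwiched between the corresponding fractions at $a$ and at $b$. Given that the core $C$ is a subtree, the nearest-point projection $\pi \colon G \to C$ is well-defined, and its image $\{\pi(x_i) : i \in [n]\}$ is contained in $C$. Now here is the crucial step: since each $x_i$ is a leaf of the original configuration's relevant structure — or rather, since the $x_i$ are themselves the roots — the projections $\pi(x_i)$ are exactly the points where the paths from the $x_i$ first hit $C$. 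I claim these all lie on a single path within $C$. To see this, note that $C$ being a subtree of a tree is itself a tree; I would show $C$ is in fact a path (a subtree with no branching), by arguing that a branch vertex $v$ of $C$ of degree $\geq 3$ in $C$ would have three subtrees hanging off it inside $C$, and combined with the complementary side this forces some subtree at $v$ to carry too little mass to be consistent with all of $v$'s $C$-neighbors being in the core — a counting contradiction using $q \leq 2/3$.

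The cleanest route, and the one I would actually write, is: (1) show $f_i(\bx)$ is the nearest point of the core $C$ to $x_i$; (2) show $C$ is a path — call its endpoints $u$ and $v$; (3) observe that for any $x_i$, the nearest point of the path $C$ to $x_i$ lies on $C = path(u,v)$ by definition, hence all $f_i(\bx)$ lie on $path(u,v)$, a single path. The main obstacle I anticipate is step (2): proving that the core has no branching. The subtle point is that the mass fractions are computed with respect to the *subtrees at a point* $T(G,a)$, and one must carefully handle the case where $a$ is a vertex of high degree versus an interior edge point; the condition $q \leq 2/3$ is exactly what is needed (three subtrees each below $q$ can coexist only if $3(1-q) \geq 1 - \text{(something)}$, forcing $q \leq 2/3$ at the threshold). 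I would isolate this as a small internal claim: if $v$ has three distinct neighbors $w_1, w_2, w_3$ in $C$, then the three subtrees $T_{w_1}, T_{w_2}, T_{w_3}$ of $T(G,v)$ each contain fewer than a $q$-fraction, so together at most $3(1-q) \cdot$—more precisely, the complement of $T_{w_j}$ contains the other two, each below $q$ in its own right but that does not immediately bound the union; the correct contradiction comes from the fact that for $w_1 \in C$, moving into $T_{w_1}$ stays in $C$, meaning the mass of $T(G,v) \setminus T_{w_1} = T_{w_2} \cup T_{w_3} \cup \cdots$ is $\geq q$ while simultaneously being $< q$ once we also know $w_2, w_3 \in C$ — unpacking this with $q \le 2/3$ gives the contradiction. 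Everything else is routine once the core is understood as a path and $f_i$ as projection onto it.
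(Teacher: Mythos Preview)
Your approach is correct and yields a slightly stronger conclusion than the paper needs, but the route is different from the paper's. The paper never introduces the core $C$; instead it argues directly by contradiction on the Steiner hull $T_y$ of the points $y_i = f_i(\bx)$. It first records that, by the DGM stopping rule (and since the upward subtree at the stopping point also has fewer than $qn$ agents), every subtree in $T(G,y_j)$ minus $y_j$ has fewer than $qn$ agents. If the $y_i$'s are not on one path, $T_y$ has three leaves $y_1,y_2,y_3$; for each leaf the unique subtree $T^j \in T(G,y_j)$ containing $T_y$ has fewer than $qn$ agents, so its complement holds more than $(1-q)n$ agents. The complements for $y_1$ and $y_2$ are disjoint and both sit inside $T^3\setminus\{y_3\}$, giving more than $2(1-q)n \ge \tfrac{2}{3}n \ge qn$ agents there, contradicting $|T^3\setminus\{y_3\}|<qn$. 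That is the whole proof.

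Your plan recovers the same counting inequality but packages it as ``the core is a path''; this is fine, and conceptually cleaner, though your step~(1) is stronger than required: you only need $f_i(\bx)\in C$, which is immediate from the stopping rule, not that it is the nearest point of $C$ to $x_i$. Your sketch of step~(2) is the weak spot as written. The sentence about ``the mass of $T(G,v)\setminus T_{w_1}$ is $\ge q$ while simultaneously being $<q$'' is not the right contradiction. The clean version is: if $C$ extends from $v$ into three subtrees $T_{w_1},T_{w_2},T_{w_3}\in T(G,v)$, then a point of $C$ just inside $T_{w_j}$ witnesses that the subtree back toward $v$ has fewer than $qn$ agents, hence $|T_{w_j}\setminus\{v\}|>(1-q)n$; summing over $j=1,2,3$ gives more than $3(1-q)n\ge n$ agents in disjoint sets, which is impossible. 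Once you write step~(2) that way, your proof and the paper's are really the same counting argument viewed through two lenses.
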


In the remainder of this section, we shall use the notation suggested in \lemref{lem:projected-path}, i.e., for every $i \in [n]$, let $y_i = f_i(\bx)$. This is followed by another notation - $avg(\by)$, which represents the average location on $path(y_1,y_n)$, given the locations in $\by$.


The main result of this section establishes that Mechanism \randdgm{}, with $q=2/3$, obtains an approximation ratio of $1.83$ for trees.

%

\begin{theorem}
\label{thm:rndom-tree-1.82}
Let $G$ and $\bx$ be a tree and a location profile, respectively. Mechanism \randdgm{} with $q=2/3$ obtains an approximation ratio of $1.83$ with respect to the miniSOS objective.
\end{theorem}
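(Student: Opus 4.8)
The plan is to exploit the two-component structure of Mechanism \pb{} together with the geometry of the \dgm{} points. The starting point is that, for $q\in(1/2,2/3]$, the set $C$ of points $p$ at which every subtree in $T(G,p)$ carries strictly less than fraction $q$ of the agents is a \emph{single path} (this is what really underlies \lemref{lem:projected-path}: a branching of $C$ at a vertex would force three subtrees there, each with more than $(1-q)n\ge n/3$ agents, whose fractions would sum to more than $1$), that $f_i(\bx)$ equals the projection $\pi_C(x_i)$ of $x_i$ onto $C$, and hence $d(x_j,p)=h_j+d(\pi_C(x_j),p)$ for every $p\in C$, where $h_j:=d(x_j,C)$. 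Parametrize $C$ as an interval, write $\eta_j$ for the coordinate of $y_j=\pi_C(x_j)$ and $\bar\eta:=\frac1n\sum_i\eta_i$; then $a=wAvg(G,\by,\bw)=\bar\eta$ (the SOS-minimizer over $G$ of a set of points all lying in $C$ lies in $C$, hence is their arc-length mean), and $Opt$ lies in $C$ as well, since $sc(p,\bx)\ge sc(\pi_C(p),\bx)$ for every $p\in G$.

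Feeding the decomposition $d(x_j,p)^2=h_j^2+2h_j|p-\eta_j|+(p-\eta_j)^2$ (for $p\in C$) and the parallel-axis identity $\sum_j(p-\eta_j)^2=V+n(p-\bar\eta)^2$ with $V:=\sum_i(\eta_i-\bar\eta)^2$ into the cost formulas yields
$$sc(\randdgm{}(\bx),\bx)=H+\tfrac32V+\sum_j h_j\Bigl(\tfrac1n\sum_i|\eta_i-\eta_j|+|\bar\eta-\eta_j|\Bigr),\qquad H:=\sum_j h_j^2,$$
$$sc(Opt,\bx)=\min_{t}\Bigl(H+V+n(t-\bar\eta)^2+2\sum_j h_j|t-\eta_j|\Bigr).$$
With all $h_j=0$ this already gives ratio $\tfrac{3V/2}{V}=\tfrac32$, reproducing \thref{thm:random-1.5-approx}; the whole problem is to control the height terms, subject to the constraints the core geometry puts on the $(\eta_j,h_j)$: fewer than $\tfrac23 n$ agents project to either endpoint of $C$, at most $\tfrac13 n$ lie strictly off $C$ on each side, and the in-core agents are spread so that no interior point of $C$ has a side with fraction $\ge q$.

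The heart of the argument is then a \emph{flattening reduction}, in the spirit of \corref{cor:opt-flattening} and the proof of \thref{thm:tree-det-mechanism}: apply a sequence of transformations of $(G,\bx)$ that (i) leave $C$, and hence every $y_i$ and the line term $\tfrac32V$, unchanged, while (ii) not decreasing $\sum_j h_j(\cdots)$ and (iii) not increasing $sc(Opt,\bx)$. Concretely, slide every out-of-core agent to hang directly off the nearer endpoint of $C$ (checking via \lemref{lem:differnce-of-costs-two-on-line}/\corref{cor:opt-flattening} that this does not help), and then coalesce the heights and the in-core agents into a constant number of clusters. After the reduction the instance depends on finitely many parameters --- a left and a right cluster at the ends of $C$, each of mass at most $\tfrac13 n$ at some height, plus one or two in-core clusters --- on which numerator and denominator above are explicit functions, and one finishes by maximizing the ratio; the supremum is $1.83$, attained (since $q=2/3$ is tuned for exactly this) at the extreme configuration.

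The step I expect to be the main obstacle is the flattening reduction. Unlike the random-dictator boomerang component on the line, the \dgm{} component is governed by the combinatorial ``$\ge q$ on one side'' condition, so the admissible moves are far more constrained: each move must \emph{simultaneously} preserve that condition (hence $C$ and all $y_i$), weakly raise the mechanism's cost, and weakly lower $sc(Opt,\bx)$, and verifying all three requires careful bookkeeping with the tree cost-difference lemmas. A secondary nuisance is the treatment of boundary and tie-breaking cases in the definitions of $C$ and \dgm{} (agents sitting exactly at an endpoint of $C$), which I would dispatch by continuity/limiting arguments.
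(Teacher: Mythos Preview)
Your setup is largely sound --- the identification of the core $C$ as a path and of $y_i=f_i(\bx)$ as the tree-projection $\pi_C(x_i)$ is correct and is exactly what drives \lemref{lem:projected-path} --- but the proposal contains a genuine error that breaks the argument: the claim that $Opt\in C$ is false. The set $C$ is defined by a \emph{count} condition (every subtree carries fewer than $qn$ agents), whereas the miniSOS optimum is governed by a \emph{weighted-distance} condition (\lemref{lem:weighted-opt-derivative}); these do not coincide. A concrete counterexample with $q=2/3$: take the path $[0,10]$ with a branch at $5$ to a leaf $v$ at distance $100$, put four agents at $0$, four at $10$, and one at $v$. Then every point of $[0,10]$ has all subtrees of size at most $5<6=qn$, so $C=[0,10]$ and the $y_i$ span it; but the SOS derivative at $5$ in the direction of $v$ is $80-2\cdot100<0$, so $Opt$ sits on the branch, strictly outside $C$. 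More generally, a few agents hanging far off an interior point of $C$ will pull $Opt$ off the core without disturbing any of the $y_i$.

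This is not a cosmetic issue. Your displayed formula for $sc(Opt,\bx)$ is really $\min_{t\in C} sc(t,\bx)$, which \emph{overstates} the optimum whenever $Opt\notin C$; bounding the ratio against this inflated denominator by $1.83$ tells you nothing about the true ratio. The same blind spot shows up in your reduction: you propose to slide every off-core agent to hang off the nearer endpoint of $C$, but the paper's flattening lemma (\lemref{lem:flattening}) crucially allows a third destination, namely $\bar{Opt}:=\pi_C(Opt)$, and only guarantees that \emph{one} of the two resulting instances (endpoint or $\bar{Opt}$) worsens the ratio. The subsequent lemmata (\lemref{lem:M-edge-lemma}, \corref{cor:M-edge-open-lemma}, \lemref{lem:ratio-1.82}) then carry an explicit ``middle'' cluster rooted at $\bar{Opt}$ and split the final optimization into the cases $Opt\in\mathrm{path}(y_1,y_n)$ versus $Opt\notin\mathrm{path}(y_1,y_n)$; the second case is where the $1.83$ actually bites. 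Your finite-parameter family (left/right clusters plus in-core agents) omits this middle cluster and the off-core position of $Opt$, so the maximization you describe would not see the worst instances.
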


Before presenting the proof sketch of the theorem, we observe that it can be assumed w.l.o.g that $y_1 \neq y_n$, as the following lemma establishes.

\begin{lemma}
\label{lem:y-is-one-location}
If $y_1=y_2=\ldots=y_n$, then the approximation ratio obtained by Mechanism \randdgm{} is at most $1.5$.
\end{lemma}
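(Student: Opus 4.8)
The plan is to reduce to a deterministic estimate and then carry out an exact second‑moment computation along the path from $Opt$ to the common output point. First, note that if $y_1=\cdots=y_n=:y$, then the average component of \pb{} collapses: $wAvg(G,\by,\bw)$ minimizes $l\mapsto\sum_j w_j\,d(l,y)^2$, which by \corref{cor:wAvg-is-unique} is $y$ itself, so Mechanism \randdgm{} outputs $y$ with probability $1$. Hence it suffices to prove the deterministic inequality $sc(y,\bx)\le\frac32\,sc(Opt,\bx)$, equivalently $3\,sc(Opt,\bx)-2\,sc(y,\bx)\ge 0$.

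The second step is the structural fact that makes $q=2/3$ work: under the hypothesis, every $T\in T(G,y)$ contains at most $n/3$ agents in its interior. Indeed, fix $T_1\in T(G,y)$ with $k\ge 1$ agents in its interior and pick an agent $i$ with $x_i$ in that interior. Mechanism \dgm{} moves monotonically (it only ever descends into a subtree holding at least fraction $q$ of the agents, and never reverses), and by hypothesis $f_i(\bx)=y\notin T_1$, so its trajectory must traverse all of $path(x_i,y)$ lying inside $T_1$ and exit at $y$. As a point $a$ on this trajectory approaches $y$, the agent count of the forward subtree (the member of $T(G,a)$ containing $y$) tends to $n-k$; since the mechanism kept moving forward, we must have $n-k\ge qn=\frac23 n$, i.e.\ $k\le n/3$. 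I expect this bookkeeping --- handling ties at exactly $qn$, agents sitting on the trajectory, and degenerate trees --- to be the most delicate (if shallow) part of the argument.

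For the third step, assume $Opt\ne y$ (otherwise the claim is immediate) and let $T_1\in T(G,y)$ be the unique subtree whose interior contains $Opt$. Put $S_1=\{i:x_i\in T_1\}$, $S_0=[n]\setminus S_1$, $m_1=|S_1|$, $m_0=|S_0|$; by the structural fact $m_1\le n/3$ and $m_0\ge 2n/3$. Parametrize $path(Opt,y)$ by arc length with $\gamma(0)=Opt$, $\gamma(L)=y$, $L=d(Opt,y)$. Every $i\in S_0$ has its geodesic to any $\gamma(t)$ passing through $y$, so $d(\gamma(t),x_i)=(L-t)+h_i$ with $h_i:=d(x_i,y)$; every $i\in S_1$ has $d(\gamma(t),x_i)=|t-\sigma_i|+h_i$, where $\gamma(\sigma_i)$ ($\sigma_i\in[0,L]$) is $x_i$'s projection onto the path and $h_i$ its distance to that projection. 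Substituting $t=0$ and $t=L$ and simplifying yields the exact identity
\begin{align*}
3\,sc(Opt,\bx)-2\,sc(y,\bx) ={}& (3m_0-2m_1)L^2 + 6L\sum_{i\in S_0} h_i + 4L\sum_{i\in S_1}\sigma_i + 10\sum_{i\in S_1}\sigma_i h_i \\
&{}+ \sum_{i\in S_0} h_i^2 + \sum_{i\in S_1}\sigma_i^2 - 4L\sum_{i\in S_1} h_i + \sum_{i\in S_1} h_i^2 .
\end{align*}

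All terms are visibly nonnegative except $(3m_0-2m_1)L^2=(5m_0-2n)L^2$ and $-4L\sum_{i\in S_1}h_i$. For the first, $m_0\ge 2n/3$ gives $(3m_0-2m_1)L^2\ge\frac{4n}{3}L^2$. For the second, set $\beta:=\sum_{i\in S_1}h_i$; Cauchy--Schwarz together with $m_1\le n/3$ gives $\sum_{i\in S_1}h_i^2\ge\beta^2/m_1\ge 3\beta^2/n$, and minimizing the scalar parabola $-4L\beta+3\beta^2/n$ over $\beta$ yields $-\frac{4n}{3}L^2$; hence $-4L\sum_{i\in S_1}h_i+\sum_{i\in S_1}h_i^2\ge-\frac{4n}{3}L^2$. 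Adding everything, the two $\frac{4n}{3}L^2$ contributions cancel and $3\,sc(Opt,\bx)-2\,sc(y,\bx)\ge 0$, as required. The exact cancellation is precisely what pins down $q=2/3$: repeating the computation with a general $q$ leaves a residual proportional to $(9q-6)nL^2$, nonnegative exactly when $q\ge 2/3$. Beyond the structural fact, the only real risk is an arithmetic slip in the displayed identity, which I would guard against by expanding $sc(\gamma(0),\bx)$ and $sc(\gamma(L),\bx)$ term by term.
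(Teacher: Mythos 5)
Your proof is correct, but it takes a genuinely different route from the paper's. Both arguments begin identically: the mechanism outputs the common point $y$ with probability one, so the task reduces to the deterministic inequality $sc(y,\bx)\le\tfrac{3}{2}\,sc(Opt,\bx)$, and both hinge on the same structural fact that every subtree in $T(G,y)$ contains at most $n/3$ agents in its interior (the paper asserts this ``by design'' without proof; your limiting argument along the \dgm{} trajectory is essentially the right justification for it). From there the paper performs graph surgery: it flattens the branch containing $Opt$ onto a single new edge rooted at $y$, argues the mechanism's cost is unchanged while the optimal cost can only decrease, then uses a variant of \corref{cor:opt-flattening} to write $sc(y)=sc(Opt')+n\delta^2$ and lower-bounds $sc(Opt')\ge 2n\delta^2$ via the balance condition of \lemref{lem:weighted-opt-derivative} together with Jensen, yielding the ratio $1.5$. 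You avoid any transformation: you expand $3\,sc(Opt,\bx)-2\,sc(y,\bx)$ exactly along $path(Opt,y)$ in terms of each agent's projection coordinate $\sigma_i$ and offset $h_i$ (your displayed identity checks out), and close with $m_0\ge 2n/3$, Cauchy--Schwarz, and a one-variable parabola minimization. What your route buys is self-containedness and rigor precisely where the paper is terse (the claim that flattening cannot improve the ratio); what the paper's route buys is reuse of its general lemmas and the flattening viewpoint that recurs in the subsequent tree lemmas, without needing a per-agent expansion. Two small wording fixes: define $S_1$ as the agents in the \emph{interior} of $T_1$ (agents located exactly at $y$ should be put in $S_0$, otherwise $m_1\le n/3$ does not follow from the structural fact; since such an agent contributes $3L^2$ under either classification, nothing else changes), and in the structural argument say that $y$ is not in the interior of $T_1$ rather than $y\notin T_1$.
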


We are now ready to present an overview of the proof of Theorem~\ref{thm:rndom-tree-1.82}.

\begin{proofsketch}
The proof of this theorem requires several lemmata, which correspond to various transformations that are needed in order to prove the desired approximation ratio.
We shall describe the proof schematically along with a graphical illustration, but defer the formal statements of the lemmata and their proofs to Appendix~\ref{sec:tree-appendix}.
Following \lemref{lem:y-is-one-location}, we assume that $y_1 \neq y_n$.
Additionally, we assume, without loss of generality, that $d(y_1,y_n)=1$, and scale the graph accordingly.
We let $\bar{Opt}$ denote the closest location on $path(y_1,y_n)$ to $Opt$, i.e., $\bar{Opt} = min_{l \in path(y_1,y_n)} d(Opt,l)$. For ease of presentation, we refer to Mechanism \randdgm{} with $q=2/3$ as ``the mechanism''.

The proof is established as follows.
Given a graph $G$ and a location profile $\bx$, we run them through various transformations, and show that the approximation ratio obtained by the mechanism could only get worse in every transformation.
Eventually, we prove that the mechanism provides a $1.83$-approximation ratio on the final graph and location profile, which implies an upper bound of $1.83$ on the original instance.

More specifically, given an instance $(G,\bx)$, we proceed as follows:
\vspace{-1mm}
\begin{itemize}
\vspace{-3mm}
\item We transform the graph repeatedly according to \lemref{lem:flattening}, which results in a graph in which all of the agents that are not located on $path(y_1,y_n)$ are rooted at either $y_1$, $y_n$, or $\bar{Opt}$ (see Figure 2).
\vspace{-3mm}
\item \begin{AvoidOverfullParagraph}Assuming that $avg(\by)$ is on $path(y_1,\bar{Opt})$ (the proof follows analogously if $avg(\by)$ is on $path(\bar{Opt},y_n)$), we then transform the graph according to \lemref{lem:left-edge-lemma} and \lemref{lem:right-edge-lemma} such that all the agents rooted at $y_1$ (i.e., all agents $i$ such that $y_i=y_1$) are located at $y_1$, and all the agents rooted at $y_n$ (i.e., all agents $i$ such that $y_i=y_n$) are located on their own new edges, rooted at $y_n$, and equally distanced from it (see Figures 3 and 4).\end{AvoidOverfullParagraph}
\end{itemize}
\vspace{-3mm}
At this point, the transformed graph can be shown to have the following properties: At least $|N|/3$ agents are located at $y_1$; at least $|N|/3$ agents are located on new edges, rooted at $y_n$ and equally distanced from it; some agents are scattered along $path(y_1,y_n)$; and the rest are located on edges that are rooted at the same location on $path(y_1,y_n)$. Denote this location by $p$. The location $p$ is essentially the original location of $\bar{Opt}$ (note that after transforming the graph using \lemref{lem:left-edge-lemma}, $\bar{Opt}$ might be relocated).
We now distinguish between several cases, depending on $p$'s location:
\vspace{-1mm}
\begin{itemize}
\item If $p$ is located at $y_n$, then $Opt$ could be located either at $y_n$ as well, or in a subtree rooted at $p$.
 In the former case, a $1 \frac{1}{2}$-approximation is obtained, by \lemref{lem:pushing-M}.
 In the latter case, we transform the graph using \lemref{lem:M-edge-lemma} (see Figure 5), and by \lemref{lem:ratio-1.82} we achieve an approximation ratio of $1.83$.
\vspace{-3mm}
\item If $p$ is located on the open interval $path(y_1,y_n)$, we transform the graph using \corref{cor:M-edge-open-lemma} (see Figure 5), and again, by \lemref{lem:ratio-1.82}, we achieve an approximation ratio of $1.83$.
\end{itemize}
\vspace{-3mm}
By observing that $p$ cannot be located at $y_1$, we conclude that these two cases exhaust all the possibilities for $p$'s location.
In conclusion, since all the transformations are shown to only worsen the approximation ratio, the $1.83$-approximation that is obtained for the final instance, imposes the same upper bound on the original one, and the assertion of the theorem follows.


\end{proofsketch}

\begin{figure}[!ht]
\begin{center}$
\begin{array}{lr}
\begin{tikzpicture}[scale=1]
\tikzstyle{dot}=[circle,draw=black,fill=white,thin,inner sep=0pt,minimum size=3mm]
\tikzstyle{point}=[circle,draw=black,fill=white,thin,inner sep=0pt,minimum size=0mm]
\tikzstyle{blackdot}=[circle,draw=black,fill=black,thin,inner sep=0pt,minimum size=1.5mm]

\node (y1) at (0,0) [blackdot] {};
\node at (0.2,-0.4) {\small{$y_1$}};

\node (yn) at (2,0) [blackdot] {};
\node at (2,0.4) {\small{$y_n$}};

\path (y1) edge node [above] {} (yn);

\node (ya) at (1.3,0) [point] {};
\node (xa) at (1.1,-0.5) [dot] {};
\path (xa) edge node [above] {} (ya);

\node (yi) at (0.5,0) [blackdot] {};
\node at (0.6,-0.4) {\small{$y_i$}};
\node (xi) at (0.5,1.2) [dot] {\small{$x_i$}};
\path (yi) edge node [above] {} (xi);


\node (optbar) at (1.5,0) [blackdot] {};
\node at (1.5,-0.4) {\small{$\bar{Opt}$}};
\node (agent2) at (1.5,2) [dot] {};
\node (opt) at (1.5,1) [blackdot] {};
\node at (1.9,1) {\small{$Opt$}};
\path (optbar) edge node [above] {} (agent2);

\node (agent1) at (1.2,1.6) [dot] {};
\node (agent3) at (1.8,1.6) [dot] {};
\node (split) at (1.5,1.5) [point] {};
\path (split) edge node [above] {} (agent1);
\path (split) edge node [above] {} (agent3);


\node (l0) at (-0.2,1.5) [dot] {};
\node (l1) at (-1.2,1.2) [dot] {};
\node (l2) at (-1,-0.3) [dot] {};
\node (l3) at (-0.7,-0.7) [dot] {};
\node (l4) at (-0.2,-1.0) [dot] {};
\path (y1) edge node [above] {} (l0);
\path (y1) edge node [above] {} (l1);
\path (y1) edge node [above] {} (l2);
\path (y1) edge node [above] {} (l3);
\path (y1) edge node [above] {} (l4);

\node (r1) at (2.6,1.2) [dot] {};
\node (r2) at (2.9,0) [dot] {};
\node (r3) at (2.3,-0.5) [dot] {};
\path (yn) edge node [above] {} (r1);
\path (yn) edge node [above] {} (r2);
\path (yn) edge node [above] {} (r3);

\draw[line width=5pt,->] (3.5,0) -- (4.5,0);

\end{tikzpicture} &
\begin{tikzpicture}[scale=1]
\tikzstyle{dot}=[circle,draw=black,fill=white,thin,inner sep=0pt,minimum size=3mm]
\tikzstyle{point}=[circle,draw=black,fill=white,thin,inner sep=0pt,minimum size=0mm]
\tikzstyle{blackdot}=[circle,draw=black,fill=black,thin,inner sep=0pt,minimum size=1.5mm]

\node (y1) at (0,0) [blackdot] {};
\node at (0.2,-0.4) {\small{$y_1$}};

\node (yn) at (2,0) [blackdot] {};
\node at (2,0.4) {\small{$y_n$}};

\path (y1) edge node [above] {} (yn);

\node (xa) at (0.9,0) [dot] {};

\node (xi) at (0.8,2.08) [dot] {\small{$x_i$}};
\path (optbar) edge node [above] {} (xi);


\node (optbar) at (1.5,0) [blackdot] {};
\node at (1.5,-0.4) {\small{$\bar{Opt}$}};
\node (agent2) at (1.5,2) [dot] {};
\node (opt) at (1.5,1) [blackdot] {};
\node at (1.9,1) {\small{$Opt$}};
\path (optbar) edge node [above] {} (agent2);

\node (agent1) at (1.2,1.6) [dot] {};
\node (agent3) at (1.8,1.6) [dot] {};
\node (split) at (1.5,1.5) [point] {};
\path (split) edge node [above] {} (agent1);
\path (split) edge node [above] {} (agent3);


\node (l0) at (-0.2,1.5) [dot] {};
\node (l1) at (-1.2,1.2) [dot] {};
\node (l2) at (-1,-0.3) [dot] {};
\node (l3) at (-0.7,-0.7) [dot] {};
\node (l4) at (-0.2,-1.0) [dot] {};
\path (y1) edge node [above] {} (l0);
\path (y1) edge node [above] {} (l1);
\path (y1) edge node [above] {} (l2);
\path (y1) edge node [above] {} (l3);
\path (y1) edge node [above] {} (l4);

\node (r1) at (2.6,1.2) [dot] {};
\node (r2) at (2.9,0) [dot] {};
\node (r3) at (2.3,-0.5) [dot] {};
\path (yn) edge node [above] {} (r1);
\path (yn) edge node [above] {} (r2);
\path (yn) edge node [above] {} (r3);

\end{tikzpicture}
\end{array}$
\end{center}
\label{lemma1-figure}
\caption{\lemref{lem:flattening}'s transformation assures that all the agents that are not located on $path(y_1,y_n)$ are rooted at either $y_1$, $y_n$, or $\bar{Opt}$.}
\begin{center}$
\begin{array}{cc}
\begin{tikzpicture}[scale=1]
\tikzstyle{dot}=[circle,draw=black,fill=white,thin,inner sep=0pt,minimum size=3mm]
\tikzstyle{point}=[circle,draw=black,fill=white,thin,inner sep=0pt,minimum size=0mm]
\tikzstyle{blackdot}=[circle,draw=black,fill=black,thin,inner sep=0pt,minimum size=1.5mm]

\node (y1) at (0,0) [blackdot] {};
\node at (0.2,-0.4) {\small{$y_1$}};

\node (yn) at (2,0) [blackdot] {};
\node at (2,0.4) {\small{$y_n$}};

\path (y1) edge node [above] {} (yn);

\node (xa) at (0.9,0) [dot] {};

\node (xi) at (0.8,2.08) [dot] {\small{$x_i$}};
\path (optbar) edge node [above] {} (xi);


\node (optbar) at (1.5,0) [blackdot] {};
\node at (1.5,-0.4) {\small{$\bar{Opt}$}};
\node (agent2) at (1.5,2) [dot] {};
\node (opt) at (1.5,1) [blackdot] {};
\node at (1.9,1) {\small{$Opt$}};
\path (optbar) edge node [above] {} (agent2);

\node (agent1) at (1.2,1.6) [dot] {};
\node (agent3) at (1.8,1.6) [dot] {};
\node (split) at (1.5,1.5) [point] {};
\path (split) edge node [above] {} (agent1);
\path (split) edge node [above] {} (agent3);


\node (l0) at (-0.2,1.5) [dot] {};
\node (l1) at (-1.2,1.2) [dot] {};
\node (l2) at (-1,-0.3) [dot] {};
\node (l3) at (-0.7,-0.7) [dot] {};
\node (l4) at (-0.2,-1.0) [dot] {};
\path (y1) edge node [above] {} (l0);
\path (y1) edge node [above] {} (l1);
\path (y1) edge node [above] {} (l2);
\path (y1) edge node [above] {} (l3);
\path (y1) edge node [above] {} (l4);

\node (r1) at (2.6,1.2) [dot] {};
\node (r2) at (2.9,0) [dot] {};
\node (r3) at (2.3,-0.5) [dot] {};
\path (yn) edge node [above] {} (r1);
\path (yn) edge node [above] {} (r2);
\path (yn) edge node [above] {} (r3);

\draw[line width=5pt,->] (3.5,0) -- (4.5,0);

\end{tikzpicture} &
\begin{tikzpicture}[scale=1]
\tikzstyle{dot}=[circle,draw=black,fill=white,thin,inner sep=0pt,minimum size=3mm]
\tikzstyle{point}=[circle,draw=black,fill=white,thin,inner sep=0pt,minimum size=0mm]
\tikzstyle{blackdot}=[circle,draw=black,fill=black,thin,inner sep=0pt,minimum size=1.5mm]

\node (y5) at (-0.2,0.2) [dot] {};
\node (y4) at (-0.15,0.15) [dot] {};
\node (y3) at (-0.1,0.1) [dot] {};
\node (y2) at (-0.05,0.05) [dot] {};
\node (y1) at (0,0) [dot] {};

\node at (0.2,-0.4) {\small{$y_1$}};

\node (yn) at (2,0) [blackdot] {};
\node at (2,0.4) {\small{$y_n$}};

\path (y1) edge node [above] {} (yn);

\node (xa) at (0.9,0) [dot] {};

\node (xi) at (0.8,2.08) [dot] {\small{$x_i$}};
\path (optbar) edge node [above] {} (xi);


\node (optbar) at (1.5,0) [blackdot] {};
\node at (1.5,-0.4) {\small{$\bar{Opt}$}};
\node (agent2) at (1.5,2) [dot] {};
\node (opt) at (1.5,1) [blackdot] {};
\node at (1.9,1) {\small{$Opt$}};
\path (optbar) edge node [above] {} (agent2);

\node (agent1) at (1.2,1.6) [dot] {};
\node (agent3) at (1.8,1.6) [dot] {};
\node (split) at (1.5,1.5) [point] {};
\path (split) edge node [above] {} (agent1);
\path (split) edge node [above] {} (agent3);


\node (r1) at (2.6,1.2) [dot] {};
\node (r2) at (2.9,0) [dot] {};
\node (r3) at (2.3,-0.5) [dot] {};
\path (yn) edge node [above] {} (r1);
\path (yn) edge node [above] {} (r2);
\path (yn) edge node [above] {} (r3);

\end{tikzpicture}
\end{array}$
\end{center}
\caption{\lemref{lem:left-edge-lemma}'s transformation assures that all agents that are rooted at $y_1$ are in fact located at $y_1$.}
\end{figure}

\begin{figure}[!ht]
\begin{center}$
\begin{array}{cc}
\begin{tikzpicture}[scale=1]
\tikzstyle{dot}=[circle,draw=black,fill=white,thin,inner sep=0pt,minimum size=3mm]
\tikzstyle{point}=[circle,draw=black,fill=white,thin,inner sep=0pt,minimum size=0mm]
\tikzstyle{blackdot}=[circle,draw=black,fill=black,thin,inner sep=0pt,minimum size=1.5mm]

\node (y5) at (-0.2,0.2) [dot] {};
\node (y4) at (-0.15,0.15) [dot] {};
\node (y3) at (-0.1,0.1) [dot] {};
\node (y2) at (-0.05,0.05) [dot] {};
\node (y1) at (0,0) [dot] {};

\node at (0.2,-0.4) {\small{$y_1$}};

\node (yn) at (2,0) [blackdot] {};
\node at (2,0.4) {\small{$y_n$}};

\path (y1) edge node [above] {} (yn);

\node (xa) at (0.9,0) [dot] {};

\node (xi) at (0.8,2.08) [dot] {\small{$x_i$}};
\path (optbar) edge node [above] {} (xi);


\node (optbar) at (1.5,0) [blackdot] {};
\node at (1.5,-0.4) {\small{$\bar{Opt}$}};
\node (agent2) at (1.5,2) [dot] {};
\node (opt) at (1.5,1) [blackdot] {};
\node at (1.9,1) {\small{$Opt$}};
\path (optbar) edge node [above] {} (agent2);

\node (agent1) at (1.2,1.6) [dot] {};
\node (agent3) at (1.8,1.6) [dot] {};
\node (split) at (1.5,1.5) [point] {};
\path (split) edge node [above] {} (agent1);
\path (split) edge node [above] {} (agent3);


\node (r1) at (2.6,1.2) [dot] {};
\node (r2) at (2.9,0) [dot] {};
\node (r3) at (2.3,-0.5) [dot] {};
\path (yn) edge node [above] {} (r1);
\path (yn) edge node [above] {} (r2);
\path (yn) edge node [above] {} (r3);

\draw[line width=5pt,->] (3.5,0) -- (4.5,0);

\end{tikzpicture} &
\begin{tikzpicture}[scale=1]
\tikzstyle{dot}=[circle,draw=black,fill=white,thin,inner sep=0pt,minimum size=3mm]
\tikzstyle{point}=[circle,draw=black,fill=white,thin,inner sep=0pt,minimum size=0mm]
\tikzstyle{blackdot}=[circle,draw=black,fill=black,thin,inner sep=0pt,minimum size=1.5mm]

\node (y5) at (-0.2,0.2) [dot] {};
\node (y4) at (-0.15,0.15) [dot] {};
\node (y3) at (-0.1,0.1) [dot] {};
\node (y2) at (-0.05,0.05) [dot] {};
\node (y1) at (0,0) [dot] {};

\node at (0.2,-0.4) {\small{$y_1$}};

\node (yn) at (2,0) [blackdot] {};
\node at (2,0.4) {\small{$y_n$}};

\path (y1) edge node [above] {} (yn);

\node (xa) at (0.9,0) [dot] {};

\node (xi) at (0.8,2.08) [dot] {\small{$x_i$}};
\path (optbar) edge node [above] {} (xi);


\node (optbar) at (1.5,0) [blackdot] {};
\node at (1.5,-0.4) {\small{$\bar{Opt}$}};
\node (agent2) at (1.5,2) [dot] {};
\node (opt) at (1.5,1) [blackdot] {};
\node at (1.9,1) {\small{$Opt$}};
\path (optbar) edge node [above] {} (agent2);

\node (agent1) at (1.2,1.6) [dot] {};
\node (agent3) at (1.8,1.6) [dot] {};
\node (split) at (1.5,1.5) [point] {};
\path (split) edge node [above] {} (agent1);
\path (split) edge node [above] {} (agent3);


\node (r1) at (2.5,0.8) [dot] {};
\node (r2) at (2.9,0) [dot] {};
\node (r3) at (2.5,-0.7) [dot] {};
\path (yn) edge node [above] {} (r1);
\path (yn) edge node [above] {} (r2);
\path (yn) edge node [above] {} (r3);

\end{tikzpicture}
\end{array}$
\end{center}
\caption{\lemref{lem:right-edge-lemma} averages the distance of the agents that are rooted at $y_n$.}
\begin{center}$
\begin{array}{cc}
\begin{tikzpicture}[scale=1]
\tikzstyle{dot}=[circle,draw=black,fill=white,thin,inner sep=0pt,minimum size=3mm]
\tikzstyle{point}=[circle,draw=black,fill=white,thin,inner sep=0pt,minimum size=0mm]
\tikzstyle{blackdot}=[circle,draw=black,fill=black,thin,inner sep=0pt,minimum size=1.5mm]

\node (y5) at (-0.2,0.2) [dot] {};
\node (y4) at (-0.15,0.15) [dot] {};
\node (y3) at (-0.1,0.1) [dot] {};
\node (y2) at (-0.05,0.05) [dot] {};
\node (y1) at (0,0) [dot] {};

\node at (0.2,-0.4) {\small{$y_1$}};

\node (yn) at (2,0) [blackdot] {};
\node at (2,0.4) {\small{$y_n$}};

\path (y1) edge node [above] {} (yn);

\node (xa) at (0.9,0) [dot] {};

\node (xi) at (0.8,2.08) [dot] {\small{$x_i$}};
\path (optbar) edge node [above] {} (xi);


\node (optbar) at (1.5,0) [blackdot] {};
\node at (1.5,-0.4) {\small{$\bar{Opt}$}};
\node (agent2) at (1.5,2) [dot] {};
\node (opt) at (1.5,1) [blackdot] {};
\node at (1.9,1) {\small{$Opt$}};
\path (optbar) edge node [above] {} (agent2);

\node (agent1) at (1.2,1.6) [dot] {};
\node (agent3) at (1.8,1.6) [dot] {};
\node (split) at (1.5,1.5) [point] {};
\path (split) edge node [above] {} (agent1);
\path (split) edge node [above] {} (agent3);


\node (r1) at (2.5,0.8) [dot] {};
\node (r2) at (2.9,0) [dot] {};
\node (r3) at (2.5,-0.7) [dot] {};
\path (yn) edge node [above] {} (r1);
\path (yn) edge node [above] {} (r2);
\path (yn) edge node [above] {} (r3);

\draw[line width=5pt,->] (3.5,0) -- (4.5,0);

\end{tikzpicture} &
\begin{tikzpicture}[scale=1]
\tikzstyle{dot}=[circle,draw=black,fill=white,thin,inner sep=0pt,minimum size=3mm]
\tikzstyle{point}=[circle,draw=black,fill=white,thin,inner sep=0pt,minimum size=0mm]
\tikzstyle{blackdot}=[circle,draw=black,fill=black,thin,inner sep=0pt,minimum size=1.5mm]

\node (y5) at (-0.2,0.2) [dot] {};
\node (y4) at (-0.15,0.15) [dot] {};
\node (y3) at (-0.1,0.1) [dot] {};
\node (y2) at (-0.05,0.05) [dot] {};
\node (y1) at (0,0) [dot] {};

\node at (0.2,-0.4) {\small{$y_1$}};

\node (yn) at (2,0) [blackdot] {};
\node at (2,0.4) {\small{$y_n$}};

\path (y1) edge node [above] {} (yn);

\node (xa) at (0.9,0) [dot] {};


\node (optbar) at (1.5,0) [blackdot] {};
\node at (1.5,-0.4) {\small{$\bar{Opt}$}};

\node (agent1) at (1.35,1.95) [dot] {\small{$x_i$}};
\node (agent3) at (1.4,1.9) [dot] {};
\node (agent4) at (1.45,1.85) [dot] {};
\node (agent2) at (1.5,1.8) [dot] {};

\node (opt) at (1.5,1) [blackdot] {};
\node at (1.9,1) {\small{$Opt$}};
\path (optbar) edge node [above] {} (agent2);


\node (r1) at (2.5,0.8) [dot] {};
\node (r2) at (2.9,0) [dot] {};
\node (r3) at (2.5,-0.7) [dot] {};
\path (yn) edge node [above] {} (r1);
\path (yn) edge node [above] {} (r2);
\path (yn) edge node [above] {} (r3);

\end{tikzpicture}
\end{array}$
\end{center}
\caption{\lemref{lem:M-edge-lemma} and \corref{cor:M-edge-open-lemma} average the distance of the agents that are rooted at $\bar{Opt}$, while locating them on the same subtree.}
\label{fig:flattening-process}
\end{figure}

\bibliographystyle{plain}
\bibliography{bibfile}


\clearpage
\pagenumbering{roman}
\appendix

\begin{figure}[t]
\begin{center}
\textbf{\large{APPENDIX}}
\end{center}
\end{figure}

\parindent 0mm


\section{Missing proofs in \secref{sec:model}}
\label{sec:model-appendix}

\begin{proof}
of \lemref{lem:weighted-opt-derivative}:
Suppose by way of contradiction that there exists $T_j \in T(G,a)$, such that
\begin{equation}
\sum_{i \in [m]: y_i \in T_j} w_i d(y_i,a)> \sum_{i \in [m]: y_i \notin T_j} w_i d(y_i,a).
\label{eq:tree-10}
\end{equation}

Let $b$ be the distance between $a$ and the closest vertex or location in $\by$ in the subtree $T_j$ (such that $d(a,b)>0$).
Let $h(y)$ be a function $h:[0,b) \rightarrow \reals$ that yields the weighted sum of squared distances for the location that is distanced $y$ from $a$ on the subtree $T_j$.
It holds that
$$
h'(0^{+})  = \sum_{i \in [m]: y_i \notin T_j} w_i 2d(y_i,a) - \sum_{i \in [m]: y_i \in T_j} w_i 2d(y_i,a) < 0,
$$
where the last inequality follows by Equation~\eqref{eq:tree-10}. It follows that there exists a location for which the weighted SOS distances is lower than in $a$; hence we reach a contradiction.

{\em Sufficiency:}
Assume that for every $T_j \in T(G,a)$, it holds that
$$
\sum_{i \in [m]: y_i \in T_j}  w_i d(y_i,a) \leq \sum_{i \in [m]: y_i \notin T_j}  w_i d(y_i,a),
$$
and suppose by way of contradiction that $a$ is not the weighted average location.
Let $b$ be a weighted average location.
Let $T_{b} \in T(G,a)$ be the subtree such that $b \in T_{b}$.
Let $S_a \in T(G,b)$ be the subtree such that $a \in S_a$.

Since $T(G,b) \backslash S_a \subset T_{b}$ , it holds that
\begin{equation}
\sum_{i \in [m]: y_i \notin S_a} w_i d(y_i,b) < \sum_{i \in [m]: y_i \notin S_a} w_i d(y_i,a) \leq \sum_{i \in [m]: y_i \in T_{b}} w_i d(y_i,a).
\label{eq:tree-70}
\end{equation}
In a similar manner, since $T(G,a) \backslash T_{b} \subset S_a$, it holds that
\begin{equation}
\sum_{i \in [m]: y_i \notin T_{b}} w_i d(y_i,a) < \sum_{i \in [m]: y_i \notin T_{b}} w_i d(y_i,b) \leq \sum_{i \in [m]: y_i \in S_a} w_i d(y_i,b).
\label{eq:tree-71}
\end{equation}
Following the fact that
$$
\sum_{i \in [m]: y_i \in T_{b}} w_i d(y_i,a) \leq \sum_{i \in [m]: y_i \notin T_{b}} w_i d(y_i,a),
$$
and using Equations \eqref{eq:tree-70} and \eqref{eq:tree-71}, it follows that
$$
\sum_{i \in [m]: y_i \notin S_a} w_i d(y_i,b) < \sum_{i \in [m]: y_i \in S_a} w_i d(y_i,b),
$$
which contradicts necessity; the assertion follows.
\end{proof}

\begin{proof}
of Corollary~\ref{cor:wAvg-is-unique}:
Assume by way of contradiction that there exist two different weighted average locations, denoted by $a$ and $b$.
Following \lemref{lem:weighted-opt-derivative}, for every $T_j \in T(G,a)$, it holds that
$$
\sum_{i \in [m]: y_i \in T_j}  w_i d(y_i,a) \leq \sum_{i \in [m]: y_i \notin T_j}  w_i d(y_i,a).
$$
Following the same notation and reasoning as in the proof of \lemref{lem:weighted-opt-derivative} (sufficiency), it follows that
$$
\sum_{i \in [m]: y_i \notin S_a} w_i d(y_i,b) < \sum_{i \in [m]: y_i \in S_a} w_i d(y_i,b),
$$
which, according to \lemref{lem:weighted-opt-derivative}, implies that $b$ is not a weighted average location. The assertion follows.
\end{proof}

\begin{proof}
of \lemref{lem:movement-of-weighted-opt}:
Assume on the contrary that $d(a,a') > \sum_{i \in [m]}  w_i \delta_i$.
Let $T_0 \in T(G,a)$ be the subtree such that $a' \in T_0$,
and let $S_0 \in T(G,a')$ be the subtree such that $a \in S_0$.
By \lemref{lem:weighted-opt-derivative} we know that

$$
0 \leq \sum_{i \in [m]: y_i \notin T_0} w_i d(y_i,a) - \sum_{i \in [m]: y_i \in T_0} w_i d(y_i,a).
$$
It follows that
\begin{equation}
- \sum_{i \in [m]} w_i \delta_i \leq \sum_{i \in [m]: y'_i \notin T_0} w_i d(y'_i,a) - \sum_{i \in [m]: y'_i \in T_0} w_i d(y'_i,a).
\label{eq:tree11}
\end{equation}

We note that $T(G,a) \backslash T_0 \subset S_0$ and that $T(G,a') \backslash S_0 \subset T_0$. The next equations follow.
\begin{align}
\sum_{i \in [m]: y'_i \in S_0} w_i d(y'_i,a') = \sum_{i \in [m]: y'_i \notin T_0} w_i \left( d(y'_i,a) + d(a,a') \right) +
\sum_{i \in [m]: y'_i \in T_0 \cap S_0} w_i d(y'_i,a'), \label{eq:tree60} \\
\sum_{i \in [m]: y'_i \in T_0}  w_i d(y'_i,a) = \sum_{i \in [m]: y'_i \notin S_0} w_i \left( d(y'_i,a') + d(a,a') \right) +
\sum_{i \in [m]: y'_i \in T_0 \cap S_0} w_i d(y'_i,a). \label{eq:tree61}
\end{align}

Considering that
$$
\sum_{i \in [m]: y'_i \in T_0 \cap S_0} w_i d(y'_i,a') + \sum_{i \in [m]: y'_i \in T_0 \cap S_0} w_i d(y'_i,a) \geq \sum_{i \in [m]: y'_i \in T_0 \cap S_0}  w_i d(a',a),
$$
and combining Equations \eqref{eq:tree60} and \eqref{eq:tree61}, it follows that
\begin{align}
&\sum_{i \in [m]: y'_i \in S_0} w_i d(y'_i,a') - \sum_{i \in [m]: y'_i \notin S_0} w_i d(y'_i,a') &\geq \nonumber \\
&\sum_{i \in [m]: y'_i \notin T_0} w_i d(y'_i,a) - \sum_{i \in [m]: y'_i \in T_0}  w_i d(y'_i,a)  +
\sum_{i \in [m]} w_i d(a,a').&
\label{eq:tree62}
\end{align}
Since $d(a,a') > \sum_{i \in [m]}  w_i \delta_i$, it follows from Equations \eqref{eq:tree11} and \eqref{eq:tree62} that
$$
\sum_{i \in [m]: y'_i \in S_0} w_i d(y'_i,a') - \sum_{i \in [m]: y'_i \notin S_0} w_i d(y'_i,a') > 0.
$$
It then follows by \lemref{lem:weighted-opt-derivative} that $a'$ cannot be the weighted average location with respect to $G,\by'$ and $\bw$; hence we reach a contradiction.
\end{proof}

\begin{proof}
of \lemref{lem:differnce-of-costs-two-on-line}
For ease of presentation, we denote $\delta = d(a,b)$. When taking $sc(a,\bx)$ as a basis, the increase of cost when examining $sc(b,\bx)$ is as follows
\begin{align*}
&\sum_{i \in G \backslash T_b}(2d(x_i,a)\delta + \delta^2) + \sum_{i \in T_b}(2d(x_i,a)\delta - \delta^2) = \\
&|N|\delta^2 + 2\delta(\sum_{i \in G \backslash T_b}d(x_i,a) - \sum_{i \in T_b}d(x_i,a))
\end{align*}
The assertion follows.
\end{proof}

\begin{proof}
of \corref{cor:opt-flattening}
It is easy to see that $sc(a,\bx) = sc(a,\bx')$, that $sc(b,\bx) = sc(b,\bx')$, and that $\sum_{i \in G \backslash T_b}d(x_i,a) - \sum_{i \in T_b}d(x_i,a) = \sum_{i \in G \backslash T_b}d(x'_i,a) - \sum_{i \in T_b}d(x'_i,a)$. It is left to show that $\sum_{i \in G \backslash T_b}d(x'_i,a) - \sum_{i \in T_b}d(x'_i,a) = -|N|d(a,Opt')$.

It is clear that there are exactly two subtrees in $T(G,Opt')$ which contain agents (given $\bx'$). We denote them by $T_{Opt1}$, which contain $a$, and $T_{Opt2}$. Followed by \lemref{lem:weighted-opt-derivative}, it is easy to see that $\sum{x'_i \in T_{Opt1}}d(x'_i,Opt') = \sum{x'_i \in T_{Opt2}}d(x'_i,Opt')$.
It follows that
\begin{align*}
\sum_{i \in G \backslash T_b}(d(x'_i,a) + d(a,Opt')) + \sum_{i \in T_b \cap T_{Opt1}}(d(a,Opt') - d(x'_i,a)) = \sum_{i \in T_{Opt2}} (d(x'_i,a) - d(a,Opt')) \Leftrightarrow \\
\sum_{i \in G \backslash T_b}d(x'_i,a) - \sum_{i \in T_b}d(x'_i,a) = -|N|d(a,Opt').
\end{align*}
The assertion follows.
\end{proof}


\section{Missing proofs in \secref{sec:tree-general-mechanism}}
\label{sec:tree-general-mechanism-appendix}

\begin{proof}
of \thref{thm:tree-snc-is-truthful}:
Assume by way of contradiction that there exists an agent $i$ that can benefit by misreporting her location as $x_i'$, inducing a location profile $\bx' = (x'_i, x_{-i})$.
We quantify the effect of the deviation on the boomerang component and the average component of Mechanism \pb{}.
We begin with the boomerang component.
For every $j \in [m]$, let $\delta_j = d(f_j(\bx'),x_i) - d(f_j(\bx),x_i)$ be the additional cost incurred by $i$ due to the deviation, when $f_j$ is chosen.
Since $f_j$ is a boomerang mechanism, it holds that $d(f_j(\bx'),x_i) - d(f_j(\bx),x_i) = d(f_j(\bx'),f_j(\bx)) \geq 0$.
Therefore, the additional cost incurred by $i$ due to the boomerang component is $\sum_{j \in [m]}  w_j \delta_j = \sum_{j \in [m]} w_j d(f_j(\bx'),f_j(\bx))$.
By \lemref{lem:movement-of-weighted-opt}, the average component reduces agent $i$'s cost by at most $\sum_{j \in [m]} w_j d(f_j(\bx'),f_j(\bx))$.
The assertion of the theorem follows.
\end{proof}

\begin{proof}
of \propref{prop:exmaples}:
In what follows we establish the correctness of the assertions of the proposition.
\begin{enumerate}
\item Any mechanism that chooses the $k$-location, is an instance of Mechanism \pb{}, as it is a boomerang mechanism.
\item The left-right-middle (LRM) mechanism is an instance of Mechanism \pb{}, with $m=2$, $\bw=(1/2, 1/2)$, and $f_1$ and $f_2$ being the left-most and right-most mechanisms, respectively (which are boomerang mechanisms). The weighted average location is the center of the left-most and right-most locations, and is chosen with probability $1/2$, while each of the extreme points is chosen with probability $1/4$.
\item The random dictator (RD) mechanism is the uniform distribution over dictator mechanisms, which are boomerang mechanisms. The assertion follows by \corref{cor:snc-dist-is-truthful}.
\end{enumerate}
\end{proof}

\begin{example}
\label{exm:counter-necessary}
Consider the following mechanism on a line. Choose each of the leftmost and rightmost agents with probability $\frac{1}{2n}$, and the center of every two consecutive locations with probability $\frac{1}{n}$. It is easy to verify that this is an SP mechanism, yet it cannot be formulated as a special case of Mechanism \pb{} or a probability distribution over \pb{} mechanisms.
\end{example}


\section{Missing proofs in \secref{sec:SP-mechanisms-line}}
\label{sec:SP-mechanisms-line-appendix}

\begin{proof}
of Claim~\ref{lem:optimal-point}:
The optimal location can be derived by taking the derivative of the SOS cost function.
As the derivative of $\sum_{i \in N} |y-x_i|^2$ is $\sum_{i \in N} 2y-2x_i$, the point that minimizes the SOS cost is $\frac{\sum_{i \in N} x_i}{n}$.
\end{proof}

\begin{proof}
of \lemref{lem:immigrants}:
In the profile $\bx^0$, $m$ agents are located at $c$.
For $j = 0, \ldots, m$, let $\bx^j$ be the profile in which $n-m$ agents are located at $a$, $j$ agents are located at $b$, and $m-j$ agents are located at $c$; and let $y^j \sim f(\bx^j)$.
Since $f$ is SP, it must hold that $E[|c-y^0|] \leq E[|c-y^1|]$; otherwise, in the profile $\bx^0$, every agent can decrease her cost by misreporting her location to be at $b$ instead of $c$.
Similarly, for every $j = 2, \ldots, m$, it holds that $E[|c-y^{j-1}|] \leq E[|c-y^j|]$.
It follows that $E[|c-y^0|] \leq E[|c-y^m|]$.
The same line of arguments is used to prove that $E[|b-y^m|] \leq E[|b-y^0|]$, which concludes the proof.
\end{proof}

\begin{proof}
of Theorem~\ref{thm:median-two-approx}
Let $\mu$ be a median location in $\bx$. It is well known that the median mechanism on a line is truthful (see, e.g., \cite{PT09}).
Therefore, it remains to show that it provides a $2$-approximation to the miniSOS objective; formally, we need to show that
\begin{equation}
\sum_{i \in N} |x_i-\mu|^2 \leq 2\sum_{i \in N} |Opt-x_i|^2.
\label{eq:eq1}
\end{equation}

Assume without loss of generality that $x_1 \leq x_2 \leq \ldots \leq x_n$.
Assume additionally that $\mu \leq Opt$ (the proof works analogously for the case in which $\mu \geq Opt$).
It is easy to verify that
\begin{align*}
\sum_{i \in N} |x_i-\mu|^2 &= \sum_{i \in N} |(x_i-Opt)+(Opt-\mu)|^2 \\
&= \sum_{i \in N} \left( (x_i-Opt)^2 + 2(Opt - \mu)(x_i - \frac{Opt + \mu}{2}) \right).
\end{align*}
Therefore, by subtracting $\sum_{i \in N} |Opt-x_i|^2$ from both sides, it remains to prove that
$$
\sum_{i \in N}  2(Opt - \mu)(x_i - \frac{Opt + \mu}{2}) \leq \sum_{i \in N} |Opt-x_i|^2,
$$
which is equivalent to showing that
\begin{align*}
&\sum_{i \leq \lceil n/2 \rceil} 2(Opt - \mu) \left((x_{i} - \frac{Opt + \mu}{2})+(x_{n+1-i} - \frac{Opt + \mu}{2}) \right) \leq \\
&\sum_{i \leq \lceil n/2 \rceil} |Opt-x_i|^2 + |Opt-x_{n+1-i}|^2.
\end{align*}
We next show that the last inequality holds piecewise; i.e., for every $i \leq \lceil n/2 \rceil$, it holds that
$$
2(Opt - \mu) \left( (x_{i} - \frac{Opt + \mu}{2})+(x_{n+1-i} - \frac{Opt + \mu}{2}) \right) \leq  (Opt-x_{i})^2 + (Opt-x_{n+1-i})^2.
$$
For every $i \leq \lceil n/2 \rceil$, it holds that $x_i \leq \mu \leq Opt$; thus $(Opt-\mu)^2 \leq (Opt-x_{i})^2$, and it suffices to show that
$$
2(Opt - \mu) \left( \mu - \frac{Opt+\mu}{2}+ x_{n+1-i} - \frac{Opt + \mu}{2}) \right) \leq  (Opt-\mu)^2 + (Opt-x_{n+1-i})^2.
$$
It can be easily verified that the above inequality holds iff $(2Opt - \mu - x_{n+1-i})^2 \geq 0$; the assertion of the theorem follows.
\end{proof}

\begin{proof}
of Theorem~\ref{thm:deterministic-bound}:
Assume by way of contradiction that there exists a deterministic SP mechanism $f$ which yields a better approximation than $2$.
Consider a location profile $\bx$, in which $\frac{n}{2}$ agents are located at $0$, and $\frac{n}{2}$ agents are located at $2$, and let $f(\bx)=p$.
Simple calculations show that to achieve a better approximation than $2$, it must hold that $p \in (0,2)$ (note that the optimal location is $1$, which obtains an SOS cost of $n$, while the locations $0$ or $2$ obtains each an SOS cost of $2n$).
Now consider a different location profile, denoted $\bx^p$, in which $\frac{n}{2}$ agents are located at $0$, and $\frac{n}{2}$ agents are located at $p$.
Following the same argument, it must hold that $f(\bx^p) \in (0,p)$, and thus $|p-f(\bx^p)|>0$.
Since $f(\bx)=p$, we get that $|p-f(\bx^p)|>|p-f(\bx)|$, which implies, by \lemref{lem:immigrants}, that $f$ is not SP.
\end{proof}

\begin{proof}
of Theorem~\ref{thm:random-2-approx}:
Let $rd(\bx)$ denote the RD mechanism.
It holds that
\begin{align*}
sc(rd(\bx),\bx)  &= \sum_{i \in N} \frac{1}{n} \sum_{j \in N} \left( x_i - x_j \right) ^2 &\mbox{(by def. of sc)}\\
            &= \sum_{i \in N} x_i^2 +  \sum_{j \in N} x_j^2 - \frac{2}{n}\sum_{j \in N}\sum_{i \in N}x_i x_j \\
            &= 2 \sum_{i \in N} x_i^2 - \frac{2}{n}\sum_{j \in N}x_j \sum_{i \in N} \left(2x_i - \frac{\sum_{j \in N} x_j}{n} \right) &\mbox{(by $\sum_{i \in N} \left( x_i - \frac{\sum_{j \in N} x_j}{n} \right) = 0$)}\\
            &= 2 \sum_{i \in N} x_i^2 - 2Opt \sum_{i \in N} \left( 2x_i - Opt \right) &\mbox{(by \obsref{lem:optimal-point})}\\
            &= 2 \sum_{i \in N} (Opt - x_i)^2 \\
            &= 2 sc(Opt,\bx) &\mbox{(by def. of sc)}.
\end{align*}
\end{proof}

\begin{proof}
of Theorem~\ref{thm:random-1.5-approx}:
We first prove the approximation factor.
Let $g(\bx)$ denote mechanism~\ref{mech:half_opt_half_RD}, and let $avg(\bx)$ denote the average point.
By \obsref{lem:optimal-point}, the optimal location with respect to miniSOS is $avg(\bx)$.
We get
$$
\frac{sc(g(\bx),\bx)}{Opt} = \frac{\half RD(\bx) + \half avg(\bx)}{avg(\bx)} = 1.5,
$$
where the last equation follows by Theorem~\ref{thm:random-2-approx}.

In order to show the strategyproofness of the mechanism, it suffices to prove that it is an instance of Mechanism \pb{}.
Indeed, one can easily verify that this is a special case in which $m=n$, $\bw$ is the uniform distribution over $[n]$, and for
every $i \in [n]$, $f_i(\bx) = x_i$ (i.e., $f_i$ is dictatorship with agent $i$ as the dictator).
\end{proof}

\begin{proof}
of Theorem~\ref{thm:random-bound}:
Assume on the contrary that there exists an SP mechanism $f(\bx)$ and $\epsilon>0$ such that $f(\bx)$ yields an approximation ratio of $1.5-\epsilon$, and let $y \sim f(x)$ (i.e., $y$ is a random variable that is distributed according to $f(x)$).
We shall use the following lemma in the proof.
\begin{lemma}
\label{lem:sum-distances-is-3}
There exists some $a \in \reals$ and a location profile $\bx$ in which $\frac{n}{2}$ agents are located at $a$ and $\frac{n}{2}$ agents are located at $a+4$, such that $E[|y-(a+1)| + |y-(a+3)|]>3-2\epsilon$.
\end{lemma}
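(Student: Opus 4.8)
The plan is to prove \lemref{lem:sum-distances-is-3} by starting from a well-chosen profile and repeatedly invoking \lemref{lem:immigrants} to migrate agents between locations while controlling the relevant expected distances. First I would fix the scale: let $\bx$ be any profile in which $\frac{n}{2}$ agents sit at $0$ and $\frac{n}{2}$ at $4$, with optimum at $a+2$ for the appropriate shift $a$; since $f$ has approximation ratio $1.5-\epsilon$, I would translate the quantity $E[sc(y,\bx)]$ into a statement about the expected distances of $y$ from the two ``quartile'' points $a+1$ and $a+3$, using that $sc(y,\bx) = \frac{n}{2}(y-a)^2 + \frac{n}{2}(y-a-4)^2$, which can be rewritten (completing the square) so that the deficit of the mechanism from optimality is governed precisely by $E[(y-a-1)^2] + E[(y-a-3)^2]$ or, after a further elementary manipulation, by $E[|y-(a+1)|+|y-(a+3)|]$ together with $E[(y-a-2)^2]$. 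So the target inequality in the lemma is essentially equivalent to the assumed approximation bound \emph{at that particular profile}; the content is showing such a profile exists where the mechanism behaves this badly near the quartiles.

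The heart of the argument is the reduction from a symmetric two-point profile at $0$ and $4$ to profiles obtained by moving one block of $\frac{n}{2}$ agents. Here I would apply \lemref{lem:immigrants} with the triple of locations $(a,b,c)$ chosen so that one extreme block of agents, when moved inward, can only \emph{increase} its own distance to the mechanism outcome (and symmetrically the other block can only decrease its distance by moving outward). Concretely: suppose for contradiction that the mechanism is ``good'' on \emph{every} such scaled profile, i.e.\ $E[|y-(a+1)|+|y-(a+3)|] \le 3-2\epsilon$ for all choices of $a$ and all relevant spreads. I would then build a chain of profiles, in each step sliding a single block of agents and using the two inequalities of \lemref{lem:immigrants} to transfer a bound on an expected distance from one profile to the next, and derive that the mechanism's expected SOS cost on the $\{0^{n/2},4^{n/2}\}$ profile is at most $(1.5-\epsilon)\cdot 4n = 6n - 4n\epsilon$ --- which directly contradicts what the approximation-ratio hypothesis forces, because the hypothesis says the cost is \emph{at most} that, while the negation of the lemma plus the distance identity forces it to be strictly larger. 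Thus some profile in the family must violate the distance bound, which is exactly the assertion of the lemma.

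The main obstacle I anticipate is bookkeeping the chain of \lemref{lem:immigrants} applications so that the directions of the inequalities line up: \lemref{lem:immigrants} gives $E[|c-y^0|] \le E[|c-y^m|]$ and $E[|b-y^m|] \le E[|b-y^0|]$, so moving agents inward increases the mechanism's distance to the \emph{far} endpoint but does not help it get closer to the \emph{new} (inner) location. To combine these across several profiles --- e.g.\ first establishing good behaviour forces $y$ to be near the average at $\{a^{n/2},(a+4)^{n/2}\}$, then perturbing to $\{a^{n/2},(a+2)^{n/2}\}$-type profiles, then chaining back --- I need to choose the intermediate locations (the roles of $a,b,c$ at each stage) carefully so that each inequality pushes the bound in a consistent direction, and I must track additive constants from the triangle inequality $|y-(a+1)| \ge |c-y| - |c-(a+1)|$ precisely enough that the accumulated slack is exactly $2\epsilon$ and no more. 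Translating ``approximation ratio on \emph{all} profiles in the family'' into a single contradiction without losing a constant factor in the bound is the delicate step; once the distance inequality is in hand, converting it to $E[sc(y,\bx)] > 6n - 4n\epsilon$ and reading off the contradiction with $Opt = 4n$ is routine.
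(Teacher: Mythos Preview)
Your proposal misidentifies both the logical structure and the engine of the proof.

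First, the logic. \lemref{lem:sum-distances-is-3} is proved using \emph{only} strategyproofness (through \lemref{lem:immigrants}); the $(1.5-\epsilon)$-approximation hypothesis plays no role inside the lemma. The lemma produces a ``bad'' profile, and only \emph{afterwards} does the theorem use convexity to convert the distance inequality into $E[sc(y,\bx)]>6n-4n\epsilon$, which contradicts the approximation hypothesis. Your plan instead tries to derive, from the negation of the lemma, that the SOS cost is at most $6n-4n\epsilon$ --- but that is exactly what the approximation hypothesis already asserts, so it is not a contradiction. The sentence in which you claim this ``contradicts what the approximation-ratio hypothesis forces'' is internally inconsistent: you cannot simultaneously obtain cost $\le 6n-4n\epsilon$ and cost $>6n-4n\epsilon$ from the same assumption. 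Relatedly, the claim that the lemma's inequality is ``essentially equivalent to the assumed approximation bound at that particular profile'' is false; the two point in opposite directions, which is precisely why the lemma, once established, yields the desired contradiction.

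Second, the missing idea. The actual argument is an infinite descent driven by a quantitative gain at every step. Start with $\bx^0$ (half at $0$, half at $4$). If the sum $E[|y^0-1|]+E[|y^0-3|]\le 3-2\epsilon$, then one summand, say $E[|y^0-3|]$, is at most $1.5-\epsilon$. Apply \lemref{lem:immigrants} to move the right block from $4$ to $3$; the bound persists, and by the triangle inequality $E[|\bar y^0-0|]\ge 1.5+\epsilon$. Apply \lemref{lem:immigrants} again to move the left block from $0$ to $-1$, obtaining $E[|y^1-0|]\ge 1.5+\epsilon$ on the profile $\bx^1$ (half at $-1$, half at $3$). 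Now if the lemma also fails at $\bx^1$, the sum is $\le 3-2\epsilon$, forcing $E[|y^1-2|]\le 1.5-3\epsilon$: the bound has tightened by $2\epsilon$. Iterating, after $j$ steps one gets $E[|y^j-(3-j)|]\le 1.5-(2j+1)\epsilon$, which becomes negative for $j>\frac{1.5-\epsilon}{2\epsilon}$ --- an impossibility. Hence the lemma must hold at some $\bx^j$. This $2\epsilon$-per-step amplification is the heart of the proof, and it is absent from your outline; the vague ``chain of profiles'' you describe has no mechanism for generating a contradiction from strategyproofness alone.
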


\begin{proof}
Consider a location profile $\bx^0$, in which $\frac{n}{2}$ agents are located at $0$ and $\frac{n}{2}$ agents are located at $4$, and let $y^0 \sim f(\bx^0)$.
If $E[|y^0-3|+|y^0-1|]>3-2\epsilon$, then we are done.
Otherwise, either $E[|y^0-1|] \leq 1.5-\epsilon$ or $E[|y^0-3|] \leq 1.5-\epsilon$.
Assume w.l.o.g. that the latter holds, and consider a location profile $\bxbar^{0}$, in which $\frac{n}{2}$ agents are located at $0$ and the rest are located at $3$.
Let $\ybar^{0} \sim f(\bxbar^{0})$.
By \lemref{lem:immigrants}, to preserve truthfulness, it must hold that $E[|\ybar^{0}-3|] \leq 1.5-\epsilon$, which implies that $E[|\ybar^{0}-0|] \geq 1.5+\epsilon$.

Consider next a location profile $\bx^1$, in which $\frac{n}{2}$ agents are located at $-1$ and the rest are located at $3$, and let $y^1 \sim f(\bx^1)$.
By \lemref{lem:immigrants}, to preserve truthfulness, it must hold that $E[|y^1-0|] \geq 1.5+\epsilon$.
If $E[|y^1-2|+|y^1-0|]>3-2\epsilon$, then we are done. Otherwise, it follows that $E[|y^1-2|] \leq 1.5-3\epsilon$.

We continue iterating such that in iteration $j=1, 2, \ldots$, a profile $\bx^j$ is considered, in which half the agents are located at $-j$ and half are located at $4-j$, and for every profile $\bx^j$, we denote by $y^j$ the random variable distributed according to $f(\bx^j)$.
We show that there exists some $j$ for which $E[|y^j-(1-j)|+|y^j-(3-j)|]>3-2\epsilon$; the assertion of the lemma then follows by substituting $a = -j$.
It remains to prove the last inequality.
Indeed, by repeatedly applying \lemref{lem:immigrants} for every $j$, we get that
\begin{equation}
\label{eq:101}
\mbox{ if } E[|y^j-(1-j)|+|y^j-(3-j)|] \leq 3-2\epsilon, \mbox{ then } E[|y^j-(3-j)|] \leq 1.5 - \epsilon(2j+1).
\end{equation}
But since for $j > \frac{1.5 - \epsilon}{2\epsilon}$, it holds that $1.5 - \epsilon(2j+1) < 0$, it must hold that $E[|y^j-(3-j)|] > 1.5 - \epsilon(2j+1)$, which, by Equation~\ref{eq:101} implies that $E[|y^j-(1-j)|+|y^j-(3-j)|] > 3-2\epsilon$.
It follows that the profile $\bx^j$ satisfies the conditions of the lemma, and the proof follows.
\end{proof}

With this lemma, we are ready to prove the theorem.
Let $\bx$ be a location profile that satisfies the conditions of \lemref{lem:sum-distances-is-3}, and assume w.l.o.g. that $\bx$ is a profile in which half the agents are located at $0$ and half at $4$.
By the last lemma, it holds that
\begin{equation}
\label{eq:high_expectation}
E[|y-1|+|y-3|]>3-2\epsilon,
\end{equation}
where $y \sim f(\bx)$.
For ease of presentation, let $p=Pr(|y-2| \leq 1)$ and let $z=E[|y-2|:|y-2|>1]$.
It holds that
\begin{align*}
E[|y-1|+|y-3|] &= E \left[ |y-1|+|y-3|:|y-2|>1 \right](1-p) \\
& + E \left[ |y-1|+|y-3|:|y-2| \leq 1 \right]p \\
&= 2z(1-p) + 2p.
\end{align*}
Therefore, by Equation (\ref{eq:high_expectation}), it follows that $2z(1-p) + 2p > 3-2\epsilon$.
Since $z>1$ by definition, it follows that
\begin{equation}
\label{eq:small_prob}
p < 1 - \frac{1-2\epsilon}{2z-2}.
\end{equation}

We now turn to calculate the SOS cost of the profile $\bx$ induced by the mechanism $f(\bx)$.
It holds that
\begin{align}
E[sc(y,\bx)] &= E[sc(y,\bx):|y-2| \leq 1]p + E[sc(y,\bx):|y-2| > 1](1-p) \nonumber \\
 &\geq sc(E[y:|y-2| \leq 1],\bx)p + E[sc(y,\bx):|y-2| > 1](1-p),
\label{eq:eq20}
\end{align}
where the last inequality follows from Jensen's inequality.
Since $2$ is the optimal location in the profile $\bx$, it holds that
\begin{equation}
sc(E[y:|y-2| \leq 1],\bx) \geq sc(2,\bx) = 4n.
\label{eq:eq21}
\end{equation}
We also have that
\begin{align}
E[sc(y,\bx):|y-2| > 1] & = E[\frac{n}{2}(y^2 + (y-4)^2):|y-2| > 1] \nonumber \\
&= \frac{n}{2}(8 + 2E[(y-2)^2:|y-2| > 1]) \nonumber \\
&\geq \frac{n}{2}(8 + 2E^2[|y-2|:|y-2| > 1]) \nonumber \\
&= \frac{n}{2}(8+2z^2), \label{eq:eq22}
\end{align}
where the last inequality follows from Jensen's inequality.
By Substituting \eqref{eq:eq21} and \eqref{eq:eq22} in \eqref{eq:eq20}, we get that $E[sc(y,\bx)] \geq \frac{n}{2}(8 + 2z^2 - 2z^2p)$.
It, therefore, follows by \eqref{eq:small_prob} that
\begin{equation}
E[sc(y,\bx)] > \frac{n}{2}(8 + 2z^2 - 2z^2(1 - \frac{1-2\epsilon}{2z-2})) = 4n  + \frac{nz^2(1-2\epsilon)}{2z-2}.
\label{eq:eq42}
\end{equation}
It is easy to verify that this function attains its minimum at $z=2$, with a value of $6n - 4n\epsilon$.
Thus, $E[sc(y,\bx)] > 6n - 4n\epsilon$.
The optimal solution is to locate the facility at $2$, which yields an SOS cost of $4n$.
We get that $\frac{E[sc(y,\bx)]}{Opt} > \frac{6n - 4n\epsilon}{4n} = 1.5 - \epsilon$, and a contradiction is reached.
The assertion of the theorem follows. \qedd
\end{proof}


\section{Missing proofs in \secref{sec:tree}}
\label{sec:tree-appendix}

\begin{proof}
of \thref{thm:tree-det-mechanism}:
It has been shown in earlier papers that a median of a tree is SP (see, e.g., \cite{Alon09}).
Here we show that it provides $2$-approximation with respect to the miniSOS objective.

We start with some notation. Given a location profile $\bx$, let $\mu$ and $Opt$ denote the median and the optimal location, respectively.
Let $T_{\mu}^{Opt} \in T(G,\mu)$ denote the subtree of $T(G,\mu)$ that contains $Opt$, and
let $L(G,\bx) \subseteq N$ denote the subset of agents $i$ such that $x_i \not\in T_{\mu}^{Opt} \setminus \{\mu\}$.
Similarly, let $T_{Opt}^{\mu} \in T(G,Opt)$ denote the subtree of $T(G,Opt)$ that contains $\mu$, and
let $R(G,\bx) \subseteq N$ denote the subset of agents $i$ such that $x_i \not\in T_{Opt}^{\mu} \setminus \{Opt\}$.
Finally, let $M(G,\bx) \subseteq N$ denote the subset of agents that belong to neither $L$ nor $R$.
These are the agents $i$, such that $x_i \neq \mu$ and $x_i \neq Opt$, for which $path(x_i,\mu) \cap path(x_i,Opt)$ is one location on the graph, rather than a path.
When clear in the context, we simplify notation and write $L,R,M$.

With this, we continue. Suppose $\mu \neq Opt$ (if $\mu=Opt$, we are done).
Let $k$ be the number of agents in $M(G,\bx)$, and refer to these agents as $x_1, \ldots, x_k$ (order them arbitrarily).
In what follows we describe an iterative process, which starts with a tuple $(G^0=G, \bx^0=\bx)$ of graph and location profile, and in every iteration $j \in [k]$, a new tuple $(G^j,\bx^j)$ is induced in a specific manner that will be specified soon.
We will show that the approximation ratio in every such iteration can only get worse.
The assertion of the theorem will then be established by proving the desired approximation ratio on the final instance.

For every $j \in [k]$, let $Opt^j$ and $\mu^j$ denote the respective optimal location and median with respect to $(G^j,\bx^j)$.
For every $j \in [k]$, define $(G^{j},\bx^{j})$ as follows.
Starting with $G^{j-1}$, add an edge, rooted at $Opt^{j-1}$, of length $d_G \left(x_j,Opt(G,\bx) \right)$, and move $x_j$ from its position in $G^{j-1}$ to the end of this edge (see Figure~\ref{fig:median-process} for illustration).
We will also use the notation $x_i^j$ to denote the location of $x_i$ in $G^{j}$ for every $i \in [n], j \in [k]$.

It is easy to see that $\mu^j = \mu^{j-1}$ for every $j$ (since agents never cross the median in their new location).
We shall, therefore, use $\mu$ to denote the median of every iteration.

It is slightly more subtle to see that the optimal point does not change either; i.e., $Opt^j = Opt^{j-1}$ for every $j \in [k]$.

\begin{lemma}
\label{lem:opt-did-not-change}
For every $j \in [k]$, it holds that $Opt^{j-1} = Opt^{j}$.
\end{lemma}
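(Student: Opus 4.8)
The plan is to fix an arbitrary iteration $j \in [k]$ and show that relocating agent $j$ from its position $x_j^{j-1}$ in $G^{j-1}$ onto a freshly-attached edge of length $d_G(x_j, Opt(G,\bx))$ rooted at $Opt^{j-1}$ does not move the optimal location. The key observation driving the argument is that, by the definition of $M(G,\bx)$ and the way the iterative process was set up, the original position $x_j$ satisfies $d_G(x_j, Opt^{j-1}) = d_G(x_j, Opt(G,\bx))$, and moreover $path(x_j, Opt^{j-1})$ passes through the point $Opt^{j-1}$ only at its endpoint --- that is, $x_j$ lies in the subtree $T_{Opt^{j-1}}^{\cdot}$ on the ``far side'' from $\mu$. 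Hence the move keeps $x_j$ at the \emph{same distance} from $Opt^{j-1}$ and, crucially, in a subtree of $Opt^{j-1}$ that already contained it (or in a new subtree attached at the same point). I would first record this distance-preservation fact carefully, handling the tie-breaking in how subtrees at $Opt^{j-1}$ are indexed.

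The main tool is Lemma~\ref{lem:weighted-opt-derivative} (applied with uniform weights $w_i = 1/n$, so that $wAvg$ coincides with $Opt$ for the miniSOS objective): $a$ is optimal for profile $\bz$ iff for every $T \in T(G,a)$,
\[
\sum_{i : z_i \in T} d(z_i, a) \;\le\; \sum_{i : z_i \notin T} d(z_i, a).
\]
I would verify that $a = Opt^{j-1}$ still satisfies these inequalities for the new profile $\bx^{j}$ on the new graph $G^{j}$. The only terms that change when passing from $\bx^{j-1}$ to $\bx^{j}$ are those involving agent $j$: its distance $d(x_j, Opt^{j-1})$ is unchanged, so the \emph{value} $d(x_j^{j-1}, Opt^{j-1}) = d(x_j^{j}, Opt^{j-1})$ is the same; the only thing that can change is \emph{which} subtree $T \in T(G^{j}, Opt^{j-1})$ the contribution of $x_j$ is assigned to. Since in $G^{j-1}$ agent $j$ lies in the subtree of $Opt^{j-1}$ not containing $\mu$, and in $G^{j}$ it lies on a new edge also rooted at $Opt^{j-1}$ (hence in a subtree not containing $\mu$), the left-hand sums for the ``$\mu$-side'' subtree are unaffected, while for the other subtrees the contribution of $x_j$ only moves among the non-$\mu$ subtrees. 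A short case analysis --- comparing the inequality for each $T$ before and after --- shows the optimality inequalities are preserved: for the subtree containing $\mu$, neither side changes; for a non-$\mu$ subtree $T$, either $x_j \in T$ in both graphs, or $x_j$ leaves one non-$\mu$ subtree and joins another, and in each case the inequality, which held at $Opt^{j-1}$ with the $\mu$-side dominating, continues to hold because moving mass \emph{among} the non-dominant subtrees cannot make any single one exceed the dominant side. By uniqueness of the optimum (Corollary~\ref{cor:wAvg-is-unique}), $Opt^{j} = Opt^{j-1}$.

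The step I expect to be the main obstacle is the bookkeeping in that case analysis, specifically making precise that $x_j$ genuinely lies strictly on the non-$\mu$ side of $Opt^{j-1}$ in $G^{j-1}$ --- this requires unwinding the definition of $M(G,\bx)$ (these are exactly the agents $i$ with $x_i \neq \mu, Opt$ whose paths to $\mu$ and to $Opt$ share only a single point, namely $Opt$ itself) together with an inductive claim that this ``non-$\mu$ side'' relationship is maintained across iterations, since earlier iterations have already attached other relocated agents at $Opt^{j-1} = Opt^{0}$. Once that structural claim is pinned down, the optimality-inequality check is routine. I would therefore state and prove the auxiliary structural fact first (possibly by induction on $j$), and then close with the Lemma~\ref{lem:weighted-opt-derivative} verification and Corollary~\ref{cor:wAvg-is-unique}.
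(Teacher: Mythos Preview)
Your overall strategy --- verify the balance inequalities of Lemma~\ref{lem:weighted-opt-derivative} at the point $Opt^{j-1}$ for the new profile and conclude via uniqueness --- is exactly the paper's approach. However, your reading of the geometry is inverted, and this makes your case analysis incorrect as written.

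You claim that an agent $j \in M(G,\bx)$ lies ``on the far side'' of $Opt^{j-1}$ from $\mu$, and that the shared point of $path(x_j,\mu)$ and $path(x_j,Opt)$ is $Opt$ itself. That is false. By definition, $j \notin R$ means $x_j \in T_{Opt}^{\mu}\setminus\{Opt\}$, i.e.\ $x_j$ lies in the subtree of $Opt^{j-1}$ that \emph{contains} $\mu$. (The paper states this explicitly: ``in $G^{j-1}$, $x_j \in T_\mu$ as well.'') Consequently your assertion that ``for the subtree containing $\mu$, neither side changes'' is wrong: that subtree loses agent $j$. Likewise, $x_j$ does not move ``among non-$\mu$ subtrees''; it leaves $T_\mu$ and lands on a brand-new edge.

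Fortunately, the correct picture makes the verification easier, not harder. In $G^{j}$ the subtrees at $Opt^{j-1}$ are the old ones together with one new singleton subtree $T_j$ (the fresh edge), and the old $T_\mu$ becomes $T'_\mu = T_\mu$ minus agent $j$. Since $d(x_j^{j},Opt^{j-1}) = d(x_j^{j-1},Opt^{j-1})$, the effect is exactly a \emph{split} of $T_\mu$'s contribution into two pieces. For every unchanged subtree $T'\neq T'_\mu,T_j$, both sides of the inequality are literally unchanged (agent $j$ stays in the complement with the same distance). For $T'_\mu$, the left side drops by $d(x_j,Opt^{j-1})$ and the right side rises by the same amount, so the inequality is only reinforced. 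For $T_j$, the left side is $d(x_j,Opt^{j-1})$, which is at most the old left side for $T_\mu$, hence at most the old right side for $T_\mu$, hence at most the new (larger) right side for $T_j$. No ``dominance'' reasoning is needed. With the geometry corrected, your plan then concludes exactly as the paper does.
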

\begin{proof}
Let $T_{\mu}$ denote the subtree of $T(G^{j-1},Opt^{j-1})$ that contains $\mu$.
Note that in $G^{j-1}$, $x_j \in T_{\mu}$ as well.
Let $T'_{\mu}$ denote the subtree of $T(G^{j},Opt^{j-1})$ that contains $\mu$, and
let $T_j$ denote the subtree of $T(G^{j},Opt^{j-1})$ that contains $x_j$.
Note that $|T(G^{j},Opt^{j-1})| = |T(G^{j-1},Opt^{j-1})| + 1$.

By applying \lemref{lem:weighted-opt-derivative} with $\bw$ as the uniform distribution, for every $T \in T(G^{j-1},Opt^{j-1})$, it holds that
$$
\sum_{i \in [n]: x^{j-1}_i \in T} d(x^{j-1}_i,Opt^{j-1}) \leq \sum_{i \in [n]: x^{j-1}_i \notin T} d(x^{j-1}_i,Opt^{j-1}).
$$
It is not too difficult to see that this implies that for every $T' \in T(G^{j},Opt^{j-1})$ such that $T' \neq T'_\mu, T_j$, it holds that
$$
\sum_{i \in [n]: x^j_i \in T'} d(x^j_i,Opt^{j-1}) \leq \sum_{i \in [n]: x^j_i \notin T'} d(x^j_i,Opt^{j-1}).
$$
In addition, due to the fact that we split $T_\mu$ into $T'_\mu$ and $T_j$, the same holds for them.
It follows from \lemref{lem:weighted-opt-derivative} that $Opt^{j-1} = Opt^j$.
\end{proof}

Since this iterative process changed neither the optimal location, nor the distance of the agents from the optimal location, the optimal social cost did not change either.
That is, the optimal social cost in $(G,\bx)$ is the same as the optimal social cost in $(G^k,\bx^k)$.
In contrast to the optimal cost, it is not difficult to see that the mechanism's cost increased.
This is because the distance between $\mu$ and the agents in $M(G,\bx)$ increased (while it did not change for the other agents).
It follows that it is sufficient to prove the desired approximation ratio on the instance $(G^k,\bx^k)$.

In order to establish a $2$-approximation on $(G^k,\bx^k)$, we modify the instance $(G^k,\bx^k)$ to a line instance $(\reals,\bz)$ (for $\bz \in \reals^n$), such that, the distances of every agent from the median and from the optimal location are preserved.


Let $L^k=L(G^k,\bx^k), R^k=R(G^k,\bx^k)$ and $M^k=M(G^k,\bx^k)$.
By construction, $M^k = \emptyset$.
Consider the following location profile, $ ( z_1, \ldots, z_n )$, on a line.
For every $i \in L^k$, let $z_i$ be located at $-d_{G^k}(x^k_i,\mu^k)$.
For every $i \in R^k$, let $z_i$ be located at $d_{G^k}(\mu^k,x_i^k)$ (which is equal to $d_{G^k}(\mu^k,Opt^k) + d_{G^k}(x^k_i,Opt^k)$).
Let $b = d_{G^k}(\mu^k,Opt^k)$.
For every $i \in [n]$ it holds that $d(z_i,0)= d_{G^k}(x^k_i,\mu^k)$, and $d(z_i,b)= d_{G^k}(x^k_i,Opt^k)$; i.e., the respective distances from $\mu^k$ and $Opt^k$ were preserved.

Let $\mu_{\bz}$ and $Opt_{\bz}$ denote the median and the optimal location with respect to $(\reals, \bz)$, respectively.
It is easy to verify that $\mu_{\bz} \leq 0$ and $Opt_{\bz} \geq b$.
By \thref{thm:median-two-approx}, locating the facility at $\mu_{\bz}$ provides $2$-approximation for this line instance.
But since $\mu_{\bz} \leq 0 < b \leq Opt_{\bz}$, it follows that locating the facility at $0$ yields a social cost (SOS) that is at most twice that of $b$; i.e.,
$$
\sum_{i \in [n]} d(z_i,0)^2 \leq 2 \sum_{i \in [n]} d(z_i,b)^2.
$$
However, because the respective distances from $\mu^k$ and $Opt^k$ were preserved, it follows that
$$\sum_{i \in [n]} d_{G^k}(x^k_i,\mu^k)^2 \leq 2 \sum_{i \in [n]} d_{G^k}(x^k_i,Opt^k)^2,
$$
as required; the assertion follows.
\end{proof}

\begin{proof}
of \propref{prop:PC-is-boomerang}:
It is easy to verify that an agent can modify the location of the mechanism's outcome only by declaring herself to be on the location's opposite side, thus pushing the returned facility location away from its true location.
It follows directly that the distance between the original location and the new one is exactly the additional cost imposed on a misreporting agent.
\end{proof}

\begin{proof}
of Lemma~\ref{lem:projected-path}:
For every $i \in [n]$, let $y_i = f_i(\bx)$.
By design, every subtree $T \in T(G,y_j)$ (excluding $y_j$) contains less than $qn$ agents.
Assume by way of contradiction that the points $y_1, \ldots, y_n$ are not located on a single path.
Let $T_y$ be the subtree induced by connecting all the $y_i$'s (i.e., for any $a \in T_y$, there exist $i,j \in [n]$ such that $a \in path(y_i,y_j)$.
By the contradiction assumption, $T_y$ contains at least three leaves; assume w.l.o.g. these leaves are $y_1, y_2, y_3$.
Then, for every $j \in \{1,2,3\}$, there exists a subtree in $T(G,y_j)$, call it $T^j$, that contains $T_y$.
By design, $T^j \setminus \{y_j\}$ contains less than $qn$ agents;
therefore, more than $(1-q)n$ agents are located at $\{ T(G,y_j) \setminus T^j \} \cup \{y_j\}$.
It is easy to verify that $\{ T(G,y_1) \setminus T^1 \} \cup \{y_1\} \subset T^3 \setminus y_3$ and also $\{ T(G,y_2) \setminus T^2 \} \cup \{y_2\} \subset T^3 \setminus y_3$.
We get that $T^3 \setminus y_3$ contains more than $(2-2q)n \geq \frac{2}{3} n$ agents, where the last inequality follows by $\frac{1}{2} < q \leq \frac{2}{3}$.
This contradicts the choice of $y_3$ by $f_3(\bx)$; the assertion follows.
\end{proof}

\begin{proof}
of \lemref{lem:y-is-one-location}:
Assume that the optimal location is not at $y_1$ (otherwise, the approximation ratio is $1$ and we are done).
By design, if $y_1=y_2=\ldots=y_n$, then there does not exist $T \in T(G,y_1)$ such that there are more than $n/3$ agents in $T \setminus \{ y_1 \}$.
Let $T_{Opt} \in T(G,y_1)$ such that the optimal location is in $T_{Opt}$. Let $G'$ be a new tree which is created by positioning a new subtree $T'$ which is constructed by a long edge, rooted at $y_1$, and let $\bx'$ be a new location profile, such that for any $x_i \in T_{Opt} \setminus \{ y_1 \}$, $x'_i$ is located on $T'$, at the location that is distanced $d(y_1,x_i)$ from $y_1$.
Let the optimal location with respect to the new tree and $\bx'$ be denoted as $Opt'$. Let $\delta$ denote $d(Opt',y_1)$.
It is straightforward that because $Opt$ is in $T_{Opt}$, then $Opt'$ is in $T'$. Let $T_R \in T(G',Opt')$ such that $y_1 \in T_R$ and let $T_L$ be the other subtree.

Following a simple variation of \corref{cor:opt-flattening}, it holds that
\begin{equation}
\label{eq:tree-600}
sc_{G'}(y_1,\bx') = sc_{G'}(Opt',\bx') + n\delta^2.
\end{equation}

By \lemref{lem:weighted-opt-derivative} it follows that $\sum_{x'_i \in T_L}d(x'_i,Opt') = \sum_{x'_i \in T_R}d(x'_i,Opt')$. Since the right expression is at least $\frac{2}{3}n\delta$, so is the left one.

We shall next examine the social cost of $Opt'$ with respect to $G'$ and $\bx'$. at least $\frac{2}{3}n$ agents in $T_R$ add at least $\delta^2$ each to the social cost obtained at $Opt'$. On the other hand, following Jensen's inequality, the minimal addition agents in $T_L$ yield to the social cost is obtained when they are located at distance of $\frac{\sum_{x'_i \in T_L}d(x'_i,Opt')}{|T_L|}$ from $Opt'$. Thus, their addition is at least
$$
|T_L| \left( \frac{\sum_{x'_i \in T_L}d(x'_i,Opt')}{|T_L|} \right)^2 \geq |T_L| \left( \frac{\frac{2}{3}n\delta}{|T_L|} \right)^2.
$$
As this expression decreases in $|T_L|$, and noting that $|T_L| \leq \frac{n}{3}$, it follows that their addition is at least $\frac{4}{3}n\delta^2$.
Summing it all together, it follows that the $sc_{G'}(Opt',\bx') \geq 2n\delta^2$. Applying \eqref{eq:tree-600}, it follows that
$$
\frac{sc_{G'}(y_1,\bx')}{sc_{G'}(Opt',\bx')} \leq 1.5.
$$

Now, when considering $G$ and $\bx$, we see that the mechanism's induced social cost does not change (as the distances from $y_1$ do not change), while we added constraints that might affect the optimal location. The assertion follows.
\end{proof}

We shall now present the formal statements of the lemmata that are required in the proof of Theorem~\ref{thm:rndom-tree-1.82}.
Since $y_1$ and $y_n$ are symmetric, we state and prove every lemma for one case; the complementary one can be proved analogously.


\begin{lemma}
\label{lem:flattening}
Let $i$ be an agent such that $y_i$ is in the open $path(y_1,y_n)$, and assume that $\bar{Opt} \in path (y_i,y_n)$.
Let $(G',\bx')$ be a transformation of $(G,\bx)$ as follows. If $d(x_i,y_i) \leq d(y_1,y_i)$, then position $i$ on $path(y_1,y_i)$, distanced $d(x_i,y_i)$ from $y_i$. If $d(x_i,y_i) > d(y_1,y_i)$, then create an edge, rooted at $y_1$, of length $d(x_i,y_i) - d(y_1,y_i)$, and locate $i$ at its tip.
Let $(G'',\bx'')$ be a transformation of $(G,\bx)$ as follows. Create an edge, rooted at $\bar{Opt}$, of length $d(x_i,\bar{Opt})$, and place $i$ at its tip. Then, the mechanism obtains a worse approximation ratio for at least one of the tuples $(G',\bx')$ and $(G'',\bx'')$, compared to $(G,\bx)$.
\end{lemma}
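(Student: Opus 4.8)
The plan is to follow the approximation ratio through the relocation by tracking four quantities separately: the boomerang points $y_1=f_1(\bx),\dots,y_n=f_n(\bx)$, the average point $a=wAvg(G,\by,\bw)$, the mechanism's social cost, and the optimal location together with its cost. Write $t=d(x_i,y_i)$, and recall that $x_i$ hangs off $y_i$, that $y_i$ is interior to $path(y_1,y_n)$, and that $\bar{Opt}\in path(y_i,y_n)$; hence $d(x_i,\ell)=t+d(y_i,\ell)$ for every $\ell\in path(y_1,y_n)$, and $d(x_i,Opt)=t+d(y_i,\bar{Opt})+d(\bar{Opt},Opt)$. The first thing to record is how relocation changes agent $i$'s distances: in $(G',\bx')$ the agent sits at distance $t$ from $y_i$ on the ray toward $y_1$, so $d(x_i',\ell)=t+d(y_i,\ell)=d(x_i,\ell)$ for every $\ell\in path(y_i,y_n)$ (in particular for $\bar{Opt}$, $Opt$, $y_n$) and $d(x_i',\ell)\le d(x_i,\ell)$ for $\ell\in path(y_1,y_i)$; in $(G'',\bx'')$ the agent sits at distance $d(x_i,\bar{Opt})=t+d(y_i,\bar{Opt})$ from $\bar{Opt}$ on a fresh pendant, so $d(x_i'',\ell)=d(x_i,\bar{Opt})+d(\bar{Opt},\ell)=d(x_i,\ell)$ for every $\ell\in path(\bar{Opt},y_n)$ (in particular for $Opt$, $y_n$) and $d(x_i'',\ell)\ge d(x_i,\ell)$ for $\ell\in path(y_1,\bar{Opt})$.

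Next I would observe that the optimal location is unmoved in either transformation. Only agent $i$ is relocated; by the previous paragraph $d(x_i,Opt)$ is preserved; and agent $i$ stays in the same subtree of $T(G,Opt)$ (the one reached through $\bar{Opt}$, since the new pendant is glued at $\bar{Opt}\neq Opt$). Therefore the balancing characterization of \lemref{lem:weighted-opt-derivative} at $Opt$ is untouched, and by \corref{cor:wAvg-is-unique} $Opt$ remains the unique optimum; and since the only term of $sc(Opt,\cdot)$ that could have changed is $d(Opt,x_i)^2$, which did not change, the optimal cost is identical in $(G,\bx)$, $(G',\bx')$ and $(G'',\bx'')$.

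The heart of the proof is showing that the mechanism's output distribution is preserved for at least one of the two transformations. Mechanism \randdgm{} outputs $\frac{1}{2n}\sum_j\delta_{y_j}+\frac12\delta_a$; since a transformation only attaches a pendant edge off $path(y_1,y_n)$ while all of $y_1,\dots,y_n$ lie on that path (\lemref{lem:projected-path}), once the vector $\by$ is shown unchanged, \lemref{lem:weighted-opt-derivative} forces $a=wAvg$ to stay put as well (there is no $y$-mass on the new pendant). So everything reduces to showing that no \dgm{} run $f_j$ is disturbed. Each $f_j$ depends only on which incident subtree carries a $q=\frac23$ fraction of the agents at each point it visits, and the only points of $path(y_1,y_n)$ at which agent $i$ switches sides are those strictly between $y_i$ and agent $i$'s new ``projection'' onto the path — which lies toward $y_1$ in $(G',\bx')$ and equals $\bar{Opt}$, i.e.\ toward $y_n$, in $(G'',\bx'')$. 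The claim to establish is that in at least one of these two directions no crossing of the $\frac23$-threshold is triggered along any \dgm{} trajectory, so that $\by$, and hence $a$, and hence the whole output distribution, is preserved for that transformation; this is exactly the role played by the ``at least one of $(G',\bx')$ and $(G'',\bx'')$'' in the statement. I expect this to be the main obstacle: it needs a careful analysis of where the threshold-$\frac23$ majorities sit relative to $y_i$ and $\bar{Opt}$, presumably exploiting that each $y_j$ is itself a balanced point (every incident subtree of $T(G,y_j)$ has mass below $\frac23 n$) together with a monotonicity property of \dgm{} under sliding one agent along the path.

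Finally I would close the comparison. The optimal cost is the same in all three instances, so the approximation ratio moves only with the mechanism's social cost, which differs from the original only through agent $i$'s contribution $\frac{1}{2n}\sum_j d(y_j,z_i)^2+\frac12 d(a,z_i)^2$. If $(G'',\bx'')$ preserves the output distribution, then by the distance bookkeeping of the first paragraph every point among $y_1,\dots,y_n,a$ lies on $path(y_1,\bar{Opt})\cup path(\bar{Opt},y_n)$, with its distance to agent $i$ preserved on the $\bar{Opt}$--$y_n$ part and weakly larger on the $y_1$--$\bar{Opt}$ part; hence agent $i$'s contribution weakly increases, the ratio weakly worsens, and the lemma holds via $(G'',\bx'')$. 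In the remaining case the output distribution is preserved by $(G',\bx')$, and the symmetric bookkeeping — agent $i$'s distance to the $y_i$--$y_n$ part of the path is exactly preserved, while any $y_j$ or $a$ that shifts does so toward $y_1$ — shows that agent $i$'s contribution does not drop below its original value, so again the ratio does not improve. Either way one of $(G',\bx')$, $(G'',\bx'')$ has approximation ratio at least that of $(G,\bx)$, which is the assertion.
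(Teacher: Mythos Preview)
Your proposal has a genuine gap at the step you yourself flag as ``the main obstacle,'' and the fallback bookkeeping you sketch for the $(G',\bx')$ branch is in fact wrong in sign. Assuming (as you do) that $x_i$ hangs off $y_i$ and that the output $(y_1,\dots,y_n,a)$ is preserved in $(G',\bx')$, agent $i$'s distance to every $y_j\in path(y_i,y_n)$ is indeed preserved, but her distance to every $y_j\in path(y_1,y_i)$ strictly \emph{decreases}: originally $d(x_i,y_j)=t+d(y_i,y_j)$, whereas after the move $d(x_i',y_j)=|t-d(y_i,y_j)|\le t+d(y_i,y_j)$. Hence agent $i$'s contribution to the mechanism's social cost drops, the ratio \emph{improves} under $(G',\bx')$, and the dichotomy ``output preserved in at least one of the two'' --- even if you could establish it --- does not yield the lemma. (Your parenthetical ``any $y_j$ or $a$ that shifts does so toward $y_1$'' is both inconsistent with ``output preserved'' and, in any case, does not help: shifting $y_j$ toward $y_1$ changes the cost to \emph{all} agents, not only agent $i$, with ambiguous net effect.)

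The paper's proof proceeds quite differently and does not attempt to freeze the mechanism's output. It embeds $(G',\bx')$, $(G,\bx)$, and $(G'',\bx'')$ into a one-parameter family $(G^\delta,\bx^\delta)$, with $\delta=0$ giving $(G',\bx')$, some intermediate $\delta=\sigma$ giving $(G,\bx)$, and the right endpoint giving $(G'',\bx'')$. It first observes, as you do, that $Opt$ and its cost are invariant across the family. Then it writes the mechanism's social cost as $h(\delta)=h(0)+g(\delta)$ and computes $g(\delta)$ explicitly, \emph{allowing} $y_i^\delta$ and $avg(\by^\delta)$ to move with $\delta$ (this is exactly what your approach tries to rule out). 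The key point is that $g$ is piecewise smooth with nondecreasing one-sided derivatives --- i.e., convex --- because every term is quadratic in $\delta$ with nonnegative leading coefficient. Convexity of $h$ then gives $\max\{h(0),h(\text{endpoint})\}\ge h(\sigma)$, which is precisely the claim. So the correct mechanism-side argument is a convexity/endpoint-maximum argument, not an invariance argument.
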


\begin{proof}
We start by defining a parameterized instance, namely $(G^{\delta},\bx^{\delta})$, as follows. For any $0 \leq \delta \leq d(y'_i, \bar{Opt})$, originating at $(G',\bx')$, create an edge rooted at the location that is distanced $\delta$ from $y'_i$ on $path(y'_i,\bar{Opt})$, of length $d(x'_i,y'_i) + \delta$, and locate $i$ at its tip. We note that $(G^0,\bx^0)$ and $(G^{d(y^0_i, \bar{Opt})},\bx^{d(y^0_i, \bar{Opt})})$ are in fact $(G',\bx')$ and $(G'',\bx'')$, respectively.

Note that the optimal cost for any $\delta$ does not change, as the distance between agent $i$ and the original optimal location does not change, and the other agents are left untouched. Therefore, it is left to show that the social cost obtained by the mechanism is higher for either $\delta = 0$ or for $\delta =  d(y^0_i, \bar{Opt})$ than the social cost obtained for the original instance ($G,\bx)$.

We shall use the following notations.
Let $L_i$ denote the agents that do not share the same subtree of $y^0_i$ as $y_n$, i.e., let $T_n \in T(G^0,y^0_i)$ such that $y_n \in T_n$; then, $j \in L_i$ iff $x^0_j \notin T_n$.
Given $\delta$, let $M^\delta_i$ be defined as follows. $j \in M^\delta_i$ iff $j \neq i$ and $y_j \in path(y^0_i,y^{\delta}_i)$. Let $R^\delta_i$ be the complementary set of the agents excluding agent $i$, i.e., $R^\delta_i = N_{-i} \backslash \{L_i \cup M^\delta_i \}$.
Let $L_{avg(\by^0)}$ denote the agents that do not share the same subtree of $avg(\by^0)$ as $y_n$, i.e., let $T_n \in T(G^0,avg(\by^0))$ such that $y_n \in T_n$, then $j \in L_{avg(\by^0)}$ iff $x^0_j \notin T_n$. Similarly, we shall define $R_{avg(\by^0)} = N \backslash L_{avg(\by^0)}$.

Let $\sigma = min(d(y_1,y_i),d(x_i,y_i))$. We note that $G$ is in fact $G^{\sigma}$. Let $h(\delta)$ represent the social cost obtained by the mechanism for the instance $(G^{\delta},\bx^{\delta})$. We shall next show how $h(\delta)$ differs from $h(0)$, i.e., we shall characterize $g(\delta)$ such that $h(\delta) = h(0) + g(\delta)$.

As we come to analyze $g(\delta)$, we can divide the analysis into two components, namely the agent component and the weighted average component.

We shall start with the agent component. If the projected location of an agent $j \in R^\delta_i$ is chosen, then the obtained social cost does not change, as the distance of $x_i$ from $y_j$ does not change. If the projected location of an agent $j \in L_i$ is chosen (i.e., $y_j$), then the social cost increases (with respect to $G^0$ and $\bx^0$) by

\begin{equation}
\label{eq:100}
4d(y_j,x^0_i)\delta +4\delta^2.
\end{equation}
If the projected location of an agent $j \in M^\delta_i$ is chosen, then the social cost increases by

\begin{equation}
\label{eq:big2}
4d(y_j,x^0_i)(\delta - d(y_j,y^0_i)) + 4(\delta - d(y_j,y^0_i))^2.
\end{equation}
If the projected location of agent $i$ is chosen, then the social cost increases by

\begin{equation}
\label{eq:big3}
\sum_{j \in L_i} \left( 2d(y^0_i,x_j)\delta + \delta^2 \right) -  \sum_{j \in \{M^\delta_i \cup R^\delta_i\} } \left( 2d(y^0_i,x_j)\delta - \delta^2 \right) + \delta^2 ,
\end{equation}
where the last summand is for $i$ herself.
Every agent's projected location is chosen with probability $\frac{1}{2n}$, thereby deriving the agents component in $f(\delta)$.

We continue with the weighted average component.
The influence obtained by the weighted average component (i.e., by $avg(\by)$), consists of two parts; namely, the movement of $avg(\by)$ itself, and the influence of the movement of $i$ on the social price obtained by $avg(\by)$.
For the first part,  since $avg(\by)$ smoothly slides towards $y_n$ by the distance of $\frac{\delta}{n}$, the difference in cost originating by the first part is given by

\begin{equation}
\label{eq:big4}
\sum_{j \in L_{avg(\by^0)}} \left( 2d(x^0_j,avg(\by))\frac{\delta}{n} + \left( \frac{\delta}{n} \right) ^2 \right) -  \sum_{j \in R_{avg(\by^0)}} \left( 2d(x^0_j,avg(\by))\frac{\delta}{n} - \left( \frac{\delta}{n} \right) ^2 \right).
\end{equation}
For the second part, similarly to the classification done when characterizing the agents' component, we need to distinguish between three scenarios. Denote this influence by $\beta$.

If $avg(\by^\delta)$ is not on the same subtree of $y^0_i$ as $y_n$, i.e., let $T_n \in T(G^0,y^0_i)$ such that $y_n \in T_n$, and $avg(\by^\delta) \notin T_n$, then $\beta = 4d(avg(\by^\delta),x^0_i)\delta + 4\delta^2$.
If $avg(\by^\delta) \in path(y^0_i,y^{\delta}_i)$, then $\beta = 4d(avg(\by^\delta),x^0_i)(\delta - d(avg(\by^\delta),y^0_i)) + 4(\delta - d(avg(\by^\delta),y^0_i))^2$.
If neither of the above holds, then it preserves its distance from $x_i$, i.e., $\beta = 0$.

In summary, when summing over the different components, we get:
\begin{align*}
g(\delta) = &\frac{1}{2n} \left(\sum_{j \in L_i} \left( 4d(y_j,x^0_i)\delta +4\delta^2 \right)   \right) + \\
&\frac{1}{2n} \left(\sum_{j \in M^\delta_i} \left(  4d(y_j,x^0_i)(\delta - d(y_j,y^0_i)) + 4(\delta - d(y_j,y^0_i))^2  \right)   \right) + \\
&\frac{1}{2n} \left ( \sum_{j \in L_i} \left( 2d(y^0_i,x_j)\delta + \delta^2 \right) -  \sum_{j \in \{M^\delta_i \cup R^\delta_i\} } \left( 2d(y^0_i,x_j)\delta - \delta^2 \right) + \delta^2 \right ) + \\
&\frac{1}{2} \left(  \sum_{j \in L_{avg(\by^0)}} \left( 2d(x^0_j,avg(\by))\frac{\delta}{n} + \left( \frac{\delta}{n} \right) ^2 \right) -  \sum_{j \in R_{avg(\by^0)}} \left( 2d(x^0_j,avg(\by))\frac{\delta}{n} - \left( \frac{\delta}{n} \right) ^2 \right) + \beta \right).
\end{align*}
One can verify that $g(\delta)$ is continuous, and is piecewise differentiable (in particular, it is not differentiable for any $\delta$ such that there exists $j \in N \backslash L_i$ such that $\delta = d(y^0_i,y_j)$, and when $\delta$ is such that $avg(\by^\delta) = y^{\delta}_i$). Moreover, one can also verify that for any $\delta$, the left-hand derivative is smaller or equal to the right-hand derivative of $g$, i.e., for any $\delta$, $g'_{-}(\delta) \leq g'_{+}(\delta)$. In other words, $g(\delta)$'s derivative is monotonically increasing where it is defined (this can be easily seen, as the quadratic components are all positive). Recalling that $h(\delta) = h(0) + g(\delta)$, and considering that $0 \leq \sigma \leq d(y^0_i, \bar{Opt})$, it follows that either $h(0) \geq h(\sigma)$, or $h(d(y^0_i, \bar{Opt})) \geq h(\sigma)$. The fact that $(G^\sigma,\bx^\sigma)$ is $(G,\bx)$ concludes the proof.

\end{proof}


\begin{lemma}
\label{lem:left-edge-lemma}
Let $T_1$ be the subtree of $T(G,avg(\by))$ that contains $y_1$ and assume that $Opt \notin T_1$. Let $(G',\bx')$ be a transformation of $(G,\bx)$ that, for every agent $i$ for which $y_i = y_1$, locates $x'_i$ at $y_1$. Then, the approximation ratio obtained by the mechanism for $(G',\bx')$ cannot be better than the ratio obtained for ($G,\bx)$.
\end{lemma}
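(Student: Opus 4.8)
The plan is to show that the transformation changes neither the mechanism's random outcome (as a distribution) nor the optimal social cost by more than it changes the mechanism's cost, so that the approximation ratio can only weakly increase. Write $f$ for the mechanism (\randdgm{} with $q=2/3$), $y_i := f_i(\bx)$, $a := avg(\by) = wAvg(G,\by,\bw)$, and $\Delta_i := d(x_i,y_1)$; let $S'$ be the subtree of $T(G,y_1)$ containing $y_n$, recalling that by \lemref{lem:y-is-one-location} we may assume $y_1 \neq y_n$. The first and main step is the structural claim that \emph{every agent $i$ with $y_i = y_1$ has $x_i \notin S' \setminus \{y_1\}$}: if such an $x_i$ lay in $S' \setminus \{y_1\}$, the \dgm{} traversal from $x_i$ to $y_1$ would have to leave $S'$, which forces $G \setminus S'$ (plus any agents located at $y_1$) to hold at least $\tfrac{2}{3}n$ agents; but then from every location in $S'$ the ``toward-$y_1$'' subtree already carries at least $\tfrac{2}{3}n$ agents, so --- since all subtrees of $T(G,y_1)$ carry fewer than $\tfrac{2}{3}n$ agents, a standard property of a \dgm{} output used already in \lemref{lem:projected-path} --- every agent in $S'$ also outputs $y_1$, while every agent outside $S'$ outputs a point of $path(y_1,y_n)\cap(G\setminus S') = \{y_1\}$ by \lemref{lem:projected-path}; thus all $y_j$ coincide, contradicting $y_1\neq y_n$. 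Two consequences follow. First, pushing all agents with output $y_1$ to $y_1$ leaves every \dgm{} output unchanged: each such agent merely slides to the boundary point $y_1$ of a subtree of $T(G,y_1)$ that is disjoint from $S'$ and light (so no subtree count seen by any other traversal changes, and its own traversal still stops at $y_1$); hence $\by$, $a$, and the entire output distribution of $f$ are invariant under the transformation. Second, since the hypothesis $Opt \notin T_1$ places $Opt$ --- as well as $a$ and every $y_j$ --- in $\overline{S'}$, for every moved agent $i$ we have $y_1\in path(x_i,z)$ and therefore $d(x_i,z) = \Delta_i + d(y_1,z)$ for $z \in \{y_1,\dots,y_n,a,Opt\}$.

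With the output distribution unchanged, $sc(f(\bx),\bx) - sc(f(\bx'),\bx') = \sum_{i:y_i=y_1}\big(\tfrac1{2n}\sum_{j\in[n]}(d(x_i,y_j)^2 - d(y_1,y_j)^2) + \tfrac12(d(x_i,a)^2 - d(y_1,a)^2)\big)$. Substituting $d(x_i,z)^2 - d(y_1,z)^2 = \Delta_i^2 + 2\Delta_i d(y_1,z)$ and using that $a$ is the uniform average of $y_1,\dots,y_n$ on the segment $path(y_1,y_n)$, so $\tfrac1n\sum_{j\in[n]}d(y_1,y_j) = d(y_1,a)$, each bracket becomes $\Delta_i^2 + 2\Delta_i d(y_1,a)$; hence $sc(f(\bx),\bx) - sc(f(\bx'),\bx') = D_m := \sum_{i:y_i=y_1}\big(\Delta_i^2 + 2\Delta_i d(y_1,a)\big)\ge 0$.

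For the optimum, use $Opt$ (the optimal location for $\bx$) as a --- generally suboptimal --- facility location for $\bx'$: since $y_1\in path(x_i,Opt)$,
\[
sc(Opt',\bx') \le sc(Opt,\bx') = sc(Opt,\bx) - \sum_{i:y_i=y_1}\big(\Delta_i^2 + 2\Delta_i d(y_1,Opt)\big) \le sc(Opt,\bx) - D_m,
\]
where the last inequality uses $d(y_1,Opt) = d(y_1,a) + d(a,Opt) \ge d(y_1,a)$, a further consequence of $Opt \notin T_1$. Thus $D_o := sc(Opt,\bx) - sc(Opt',\bx') \ge D_m \ge 0$.

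Finally, put $A := sc(f(\bx),\bx)$ and $B := sc(Opt,\bx)$, so that $sc(f(\bx'),\bx') = A - D_m$ and $sc(Opt',\bx') = B - D_o$. The mechanism never beats the optimum, so $A\ge B$; together with $D_o\ge D_m\ge 0$ this gives $A D_o \ge B D_m$, equivalently $\frac{A-D_m}{B-D_o}\ge\frac{A}{B}$ (the degenerate cases $D_m = 0$ and $B-D_o=0$ being immediate), i.e.\ the mechanism's ratio on $(G',\bx')$ is at least its ratio on $(G,\bx)$, which is the assertion. The main obstacle is the first step --- proving the structural claim and carefully verifying that the push to $y_1$ leaves all \dgm{} outputs (hence $avg(\by)$) unchanged; once that is in place, the two cost comparisons and the final ratio inequality are short computations.
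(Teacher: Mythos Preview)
Your proof follows the same route as the paper's: compute the drop in the mechanism's cost, show the optimum drops by at least as much (via $d(y_1,Opt)\ge d(y_1,avg(\by))$), and conclude that the ratio can only increase. The paper carries out exactly this calculation for a single moved agent and then stops; you go further and justify two points the paper takes for granted --- that each moved agent sits on the $y_1$-side of $avg(\by)$ (so that $d(x_i,\cdot)=\Delta_i+d(y_1,\cdot)$ for the relevant targets), and that the family $(y_j)_j$ and hence $avg(\by)$ are unchanged by the move. Those extra verifications are a genuine addition.

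One small slip in your structural argument: with the paper's convention the subtree $S'\in T(G,y_1)$ \emph{contains} $y_1$, so $path(y_1,y_n)\cap(G\setminus S')=\emptyset$, not $\{y_1\}$, and the \dgm{} walk from $x_i\in S'\setminus\{y_1\}$ to $y_1$ never ``leaves $S'$''. The conclusion you want still follows, just by a cleaner route: the last step of that walk forces the set $(G\setminus S')\cup\{y_1\}$ to hold at least $\tfrac{2}{3}n$ agents, and then for the \dgm{} output $y_n\in S'\setminus\{y_1\}$ the subtree of $T(G,y_n)$ containing $y_1$ already contains that set, contradicting that every subtree at $y_n$ (excluding $y_n$) carries fewer than $\tfrac{2}{3}n$ agents. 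With this correction your invariance argument, and hence the whole proof, goes through.
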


\begin{proof}
Let $\delta=d(x_i,x'_i)$. The difference between the social cost obtained by the mechanism for $(G,\bx)$ and for $(G',\bx')$ is given by
\begin{align*}
&\sum_{j \in N} \frac{1}{2n} \left( 2d(x_i,y_j) \delta - \delta^2  \right) + \frac{1}{2} \left(  2d(x_i,avg(\by)) \delta - \delta^2 \right) = \\
&- \delta^2 + \sum_{j \in N} \frac{1}{2n} 2(d(x_i,y_1) + d(y_1,y_j) ) \delta + \frac{1}{2} 2( d(x_i,y_1) + d(y_1,avg(\by))) \delta = \\
&- \delta^2 + 2d(x_i,y_1)\delta + \sum_{j \in N} \frac{1}{2n} 2 d(y_1,y_j) \delta + \frac{1}{2} 2  d(y_1,avg(\by)) \delta =\\
&- \delta^2 + 2d(x_i,y_1)\delta + 2d(y_1,avg(\by))\delta,
\end{align*}

where the last equality follows from $\sum_{j \in N}d(y_1,y_j) = nd(y_1,avg(\by))$.

In contrast, it is easy to verify that the optimal cost decreases by at least $2d(x_i,y_1)\delta + 2d(y_1,Opt)\delta - \delta^2$.
As $d(y_1,avg(\by)) \leq d(y_1,Opt)$, the assertion follows.
\end{proof}


\begin{lemma}
\label{lem:right-edge-lemma}
Let $T_{Opt}$ be the subtree in $T(G,y_n)$ that contains $Opt$. Let $N_0 = \{i \in N: y_i = y_n, x_i \notin T_{Opt} \}$, and let $\sigma = \sum_{i \in N_0} d(x_i,y_n)$. Let $(G',\bx')$ be a transformation of $(G,\bx)$ that, for each $i \in N_0$, creates an edge of length $\frac{\sigma}{|N_0|}$, rooted at $y_n$, and locates $x'_i$ at its tip. Then, the approximation ratio obtained by the mechanism for $(G',\bx')$ cannot be better than the ratio obtained for $(G,\bx)$.
\end{lemma}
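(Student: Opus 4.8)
The plan is to mirror the argument of \lemref{lem:left-edge-lemma}: first show that the transformation changes neither $\by$ nor $avg(\by)$, so that the mechanism's randomized facility distribution is literally the same on $(G,\bx)$ and $(G',\bx')$ (each $y_j$ with probability $\tfrac1{2n}$ and $avg(\by)$ with probability $\tfrac12$, all of them on $path(y_1,y_n)$); then exhibit a number $D\ge 0$ such that the mechanism's expected social cost drops by at most $D$ while the optimal cost drops by at least $D$; and finally invoke the elementary fact that if $M\ge O>0$, $D\ge 0$, $M'\ge M-D$ and $0<O'\le O-D$, then $\tfrac{M'}{O'}\ge\tfrac{M-D}{O-D}\ge\tfrac{M}{O}$ (the last step being $(M-O)D\ge 0$), which is exactly the claim. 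Here $M,O$ (resp.\ $M',O'$) denote the mechanism's and the optimal social cost on $(G,\bx)$ (resp.\ $(G',\bx')$); one may assume $N_0\neq\emptyset$, since otherwise the transformation is vacuous.

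For the invariance of $\by$, the leverage is that $y_i=f_i(\bx)=y_n$ for every $i\in N_0$, which forces the \dgm{} walk defining $f_i$ to run monotonically from $x_i$ to $y_n$, keeping the subtree that points towards $y_n$ at weight $\ge\tfrac23 n$ at every point of that trajectory (other than $y_n$). Equivalently, at every point $a\neq y_n$ whose $y_n$-pointing subtree has weight $<\tfrac23 n$, $a\notin path(x_i,y_n)$, and hence $x_i$ lies in the $y_n$-pointing subtree of $a$. I would then check that every point $a$ visited by any \dgm{} walk $f_j$ is covered: if $f_j$ stops at $a$, or turns away from $y_n$ at $a$ into a subtree of weight $\ge\tfrac23 n$, then the $y_n$-pointing subtree of $a$ has weight $<\tfrac23 n$, so every $x_i$ ($i\in N_0$) is on its $y_n$-side; and if $f_j$ turns towards $y_n$ at $a$, the $y_n$-pointing subtree only gains agents under the relocation (while all other subtrees at $a$ only lose agents), so the walk's unique choice at $a$ is unchanged. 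Since after the relocation each $x'_i$ sits on a pendant edge rooted at $y_n$ — hence in the $y_n$-pointing subtree of every such $a$ — all the counts driving the walks are unchanged, giving $f_j(\bx')=y_j$ for all $j$; and $avg(\by)$, the weighted average of points that all stay on $path(y_1,y_n)$, is unchanged too. I expect this step to be the main obstacle.

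The two cost comparisons are then routine algebra in the spirit of \lemref{lem:left-edge-lemma}. Put $d_i:=d(x_i,y_n)$ for $i\in N_0$, so $\sum_{i\in N_0}d_i=\sigma$ and each $x'_i$ is at distance $\bar d:=\sigma/|N_0|$ from $y_n$; set $D:=\sum_{i\in N_0}(d_i-\bar d)^2=\sum_{i\in N_0}d_i^2-|N_0|\bar d^2\ge 0$. For any facility point $p$ in the support of $f(\bx)$ (so $p\in path(y_1,y_n)$), with $t:=d(y_n,p)$, one has $d(x'_i,p)=\bar d+t$ exactly and $d(x_i,p)\le d_i+t$ for every $i\in N_0$; squaring, summing over $N_0$, and cancelling the shared terms $2\sigma t$ and $|N_0|t^2$ gives $\sum_{i\in N_0}d(x'_i,p)^2-\sum_{i\in N_0}d(x_i,p)^2\ge -D$, and averaging over $p$ (same distribution, only the $N_0$-agents moved) yields $M'\ge M-D$. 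For the optimum I would simply evaluate the SOS objective at the fixed point $Opt$: since $Opt\in T_{Opt}$ while $x_i,x'_i\notin T_{Opt}$, both $d(x_i,Opt)=d_i+u$ and $d(x'_i,Opt)=\bar d+u$ with $u:=d(y_n,Opt)$, so the identical cancellation gives $sc(Opt,\bx')=sc(Opt,\bx)-D=O-D$ and therefore $O'\le O-D$ (in particular $O\ge D$). Substituting into the inequality of the first paragraph finishes the proof.
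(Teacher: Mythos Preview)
Your proof is correct and follows essentially the same route as the paper's: both arguments hinge on the single quantity $D=\sum_{i\in N_0}d_i^2-|N_0|\bar d^{\,2}=\sum_{i\in N_0}\delta_i^2$ and show that the transformation lowers the mechanism's expected cost and the optimal cost by (at most, resp.\ at least) $D$, whence the ratio cannot improve.

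Two minor differences are worth noting. First, the paper computes the mechanism's cost change as an \emph{equality} (using $d(x_i,y_j)=d_i+d(y_n,y_j)$, which indeed holds because any $i$ with $y_i=y_n$ must have $x_i$ in a subtree hanging off $y_n$ away from $y_1$); you instead use the triangle inequality $d(x_i,p)\le d_i+t$ to get $M'\ge M-D$, which is weaker but sufficient and in fact tight here. Second, the paper argues (via \lemref{lem:weighted-opt-derivative}) that $Opt$ itself does not move, so $O'=O-D$ exactly; you sidestep this by evaluating the new profile at the old $Opt$ to get $O'\le O-D$, which is all the ratio argument needs. Your treatment of the invariance of $\by$ is more explicit than the paper's one-line assertion; just make sure to cover the terminal case $a=y_n$ and the re-started walks for $j\in N_0$ (both are immediate, since each new pendant edge carries a single agent and no subtree at $y_n$ had weight $\ge\tfrac{2}{3}n$ to begin with).
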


\begin{proof}
We first note that for each $i \in N_0$, $y'_i$ did not move. Therefore, the locations that have a positive probability of being chosen by the randomized mechanism do not change.

Let $\delta_i$ denote the difference in distances of agent $i$ from $y_n$, induced by the transition to $(G',\bx')$, i.e., $\delta_i = d(y_n,x'_i) - d(y_n,x_i)$. Recall that $d(y_n,x'_i) = \frac{\sigma}{|N_0|}$, thus it is easy to see that $\sum(\delta_i) = 0$.
It follows that the increase of cost in $(G',\bx')$ is as follows.
\begin{align*}
&\sum_{i \in N_0,\delta_i \geq 0} \left( \sum_{j \in N} \frac{1}{2n} \left( 2d(x_i,y_j) \delta_i + \delta_i^2  \right) + \frac{1}{2} \left(  2d(x_i,avg(\by)) \delta_i + \delta_i^2 \right) \right) &+ \\
&\sum_{i \in N_0,\delta_i < 0} \left( \sum_{j \in N} \frac{1}{2n} \left( 2d(x_i,y_j) \delta_i + \delta_i^2  \right) + \frac{1}{2} \left(  2d(x_i,avg(\by)) \delta_i + \delta_i^2 \right) \right) &= \\
&\sum_{i \in N_0,\delta_i \geq 0} \left( 2d(x_i,avg(\by)) \delta_i + \delta_i^2  \right) +
\sum_{i \in N_0,\delta_i < 0} \left(    2d(x_i,avg(\by)) \delta_i + \delta_i^2 \right) &= \\
&\sum_{i \in N_0,\delta_i \geq 0} \left( 2d(x'_i,avg(\by)) \delta_i - \delta_i^2  \right) +
\sum_{i \in N_0,\delta_i < 0} \left(     2d(x'_i,avg(\by)) \delta_i - \delta_i^2 \right) &= \\
&\sum_{i \in N_0} - \delta_i^2,
\end{align*}
Where the last two equalities are due to $d(x_i,avg(\by))+\delta = d(x'_i,avg(\by))$ and to $\sum(\delta_i) = 0$.
We note that in the transition from $(G,\bx)$ to $(G',\bx')$, $Opt$'s location does not change (see \lemref{lem:weighted-opt-derivative}). Therefore, in the same manner, the optimal cost decreases by $\sum_{i \in N} \delta_i^2$ as well. It follows directly that the approximation ratio cannot decrease in the transformation.
\end{proof}


\begin{lemma}
\label{lem:M-edge-lemma}
Assume that $\bar{Opt}$ is located at $y_n$, but $Opt$ is not, and let $T_{Opt}$ be the subtree in $T(G,\bar{Opt})$ that contains $Opt$.
Let $N_1 = \{i \in N : x_i \in T_{Opt} \}$, and let $\sigma = \sum_{i \in N_1} d(x_i,\bar{Opt})$. Let $(G',\bx')$ be a transformation of $(G,\bx)$ that creates an edge of length $\frac{\sigma}{|N_1|}$, rooted at $\bar{Opt}$, and locates all of the agents in $N_1$ at its tip. Then, the approximation ratio obtained by the mechanism for $(G',\bx')$ cannot be better than the ratio obtained for $(G,x)$.
\end{lemma}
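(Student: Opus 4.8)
The plan is to follow the template of \lemref{lem:right-edge-lemma}: show that the transformation leaves the mechanism's output distribution untouched, decreases the mechanism's social cost by a nonnegative amount $c$, and decreases the optimal cost by \emph{at least} $c$; the conclusion is then the elementary observation that if $num\ge denom$ and $c\ge 0$, then $\frac{num-c}{denom-c}\ge\frac{num}{denom}$, so subtracting $c$ from the numerator and at least $c$ from the denominator of the (already $\ge1$) approximation ratio cannot decrease it.

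First I would argue that $f_j(\bx')=f_j(\bx)=:y_j$ for every $j\in[n]$, and hence that $avg(\by')=avg(\by)$ and the mechanism's support is unchanged. The reason is that \dgm{} rooted at any location makes its moves solely according to the counts of agents in the subtrees met along its trajectory, and relocating the agents of $N_1$ inside $T_{Opt}$ (the subtree hanging off $\bar{Opt}=y_n$) or onto a fresh edge rooted at $\bar{Opt}$ changes none of these counts: for $j\notin N_1$ all of $N_1$ stays lumped in $T_{Opt}$, while for $i\in N_1$ the trajectory from $x_i$ (or $x_i'$) walks straight toward $\bar{Opt}$ and then continues along the untouched part of $G$ --- here one uses that in the regime where the lemma is applied $|N_1|\le|N|/3$ (at least $|N|/3$ agents sit at $y_1$ and at least $|N|/3$ on edges rooted at $y_n$, both disjoint from $N_1$), so the $\bar{Opt}$-facing side of any cut inside $T_{Opt}$ holds $\ge\frac23|N|$ agents. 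Thus the output distribution $P$ is common to $(G,\bx)$ and $(G',\bx')$, and by \lemref{lem:projected-path} it is supported on $path(y_1,y_n)$, i.e. entirely on the $y_1$-side of $\bar{Opt}$.

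Next, the cost bookkeeping. Put $\sigma=\sum_{i\in N_1}d(x_i,\bar{Opt})$ and $\ell=\sigma/|N_1|$, so each $x_i'$ ($i\in N_1$) is at distance $\ell$ from $\bar{Opt}$ and the sum of distances from $\bar{Opt}$ is preserved. For $z$ in the support of $P$ and $i\in N_1$ one has $d(z,x_i)=d(z,\bar{Opt})+d(\bar{Opt},x_i)$ (the $z$-to-$x_i$ path passes through $\bar{Opt}=y_n$); squaring, summing over $i\in N_1$, and taking $E_{z\sim P}$, every term except $\sum_{i\in N_1}d(\bar{Opt},x_i)^2$ is the same before and after, giving $sc_{G'}(P,\bx')-sc_G(P,\bx)=\frac{\sigma^2}{|N_1|}-\sum_{i\in N_1}d(\bar{Opt},x_i)^2=:-c$, with $c\ge0$ by Cauchy--Schwarz. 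For the optimum, set $\rho=d(\bar{Opt},Opt)>0$. Applying \lemref{lem:weighted-opt-derivative} at $Opt=wAvg(G,\bx)$ to the subtree of $T(G,Opt)$ containing $\bar{Opt}$ --- the $\ge|N|-|N_1|$ agents outside $T_{Opt}$ all lie in it at distance $\ge\rho$ from $Opt$, while the agents lying beyond $Opt$ contribute at most $\sigma$ in total --- yields $(|N|-|N_1|)\rho\le\sigma$, hence $\rho\le\frac{\sigma}{2|N_1|}\le\ell$. Take $z^*$ on the new edge at distance $\rho$ from $\bar{Opt}$: then $d(z^*,x_j)=\rho+d(\bar{Opt},x_j)=d(Opt,x_j)$ for $j\notin N_1$, and $d(z^*,x_i')=\ell-\rho$ for $i\in N_1$. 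Writing $d_i=d(\bar{Opt},x_i)$ and $a_i=d(\bar{Opt},m_i)$ for $m_i$ the last common point of $path(\bar{Opt},x_i)$ and $path(\bar{Opt},Opt)$, so $d(Opt,x_i)=\rho+d_i-2a_i$ with $0\le a_i\le\min(\rho,d_i)$, a direct expansion reduces the desired inequality $sc_{G'}(z^*,\bx')\le Opt_G-c$ to $\rho\bigl(\sigma-\sum_i a_i\bigr)+\sum_i a_i(a_i-d_i)\ge0$, which holds because $0\le a_i\le\rho$ and $a_i\le d_i$ give $a_i(a_i-d_i)\ge\rho(a_i-d_i)$, so the second sum is $\ge\rho\bigl(\sum_i a_i-\sigma\bigr)$. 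Hence $Opt_{G'}\le sc_{G'}(z^*,\bx')\le Opt_G-c$.

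Combining: the transformed instance has mechanism cost $sc_G(P,\bx)-c$ and optimal cost $\le Opt_G-c$, so its ratio is $\ge\frac{sc_G(P,\bx)-c}{Opt_G-c}\ge\frac{sc_G(P,\bx)}{Opt_G}$, proving the lemma. I expect Step~3 --- first the bound $\rho\le\ell$ and then the quadratic identity certifying that $z^*$ witnesses an optimal-cost drop of at least $c$ --- to be the only real work; the invariance of the $y_j$'s (Step~1) and the cost computation (Step~2) are routine given \lemref{lem:projected-path}, \lemref{lem:weighted-opt-derivative}, and the argument already used for \lemref{lem:right-edge-lemma}.
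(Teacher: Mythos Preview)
Your argument is correct and takes a genuinely different route from the paper's.

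The paper factors the transformation through an intermediate instance $(G'',\bx'')$ in which the $N_1$-agents are first slid onto the single ray $path(x_j,\bar{Opt})$ (keeping their distances to $\bar{Opt}$), and only then averaged.  Step~1 leaves the mechanism's cost unchanged and (they claim) can only lower the optimal cost; Step~2 is exactly the computation of \lemref{lem:right-edge-lemma}, and because the instance is now one–dimensional on the $N_1$ side, the optimum stays put and drops by the same $\sum_i\delta_i^2$ as the mechanism.  Your approach skips the intermediate instance entirely: you compute the drop $c=\sum_i d_i^2-\sigma^2/|N_1|$ in the mechanism's cost directly, and then certify $Opt_{G'}\le Opt_G-c$ by exhibiting the single witness $z^*$ at depth $\rho$ on the new edge and reducing everything to the neat inequality $\rho(\sigma-\sum_ia_i)+\sum_ia_i(a_i-d_i)\ge0$.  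This is more self-contained and more transparent on the optimum side; the paper's Step~1 justification (``agents in $N_1$ could only move closer to it'') is in fact not literally true for the \emph{old} $Opt$, though the conclusion that the optimal cost drops is correct.

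Two remarks on your write-up.  First, the contextual hypothesis $|N_1|\le|N|/3$ that you invoke is genuinely needed for Step~1 (invariance of the $y_j$'s after all of $N_1$ is piled at one tip): without it, $\dgm$ rooted at an $x_i'$ with $i\in N_1$ need not leave the tip.  The paper uses the same fact implicitly.  Second, for Step~3 you do \emph{not} actually need the crude bound $(|N|-|N_1|)\rho\le\sigma$ together with $|N_1|\le|N|/3$ to get $\rho\le\ell$: keeping the full content of \lemref{lem:weighted-opt-derivative} at $Opt$ (i.e.\ also counting the $N_1$-agents on the $\bar{Opt}$ side, whose contribution is $\rho+d_i-2a_i\ge0$) yields the stronger $|N|\rho\le\sigma$, hence $\rho\le\sigma/|N|\le\ell$ unconditionally.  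So your witness $z^*$ always lies on the new edge, and only Step~1 ties you to the ambient assumptions of \thref{thm:rndom-tree-1.82}.
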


\begin{proof}
We shall prove the lemma in two steps.
In particular, the instance $(G',\bx')$ shall be induced by $(G,\bx)$ through an intermediate instance $(G'',\bx'')$, as follows.
Let $j \in N_1$ denote the agent whose location is the farthest away from $\bar{Opt}$, i.e., $j = argmax_{i \in N_1} d(x_i,\bar{Opt})$, and for each $i \in N_1$, let $x''_i$ be located on $path(x_j,\bar{Opt})$, distanced $d(x_i,\bar{Opt})$ from $\bar{Opt}$. It is easy to verify that the cost of the randomized mechanism does not change in the transformation, while when examining the optimal location, clearly its induced cost decreases, as agents in $N_1$ could only move closer to it. It follows that the approximation ratio increased by the transformation. 
Next, we make the transformation from $(G'',\bx'')$ to $(G',\bx')$, and show that the approximation ratio keeps getting worse.

Using the notation of the proof of \lemref{lem:right-edge-lemma}, let $\delta_i$ denote the difference of distances of agent $i$ from $\bar{Opt}$, induced by the transformation from $(G'',\bx'')$ to $(G',\bx')$. Following the steps of the proof of \lemref{lem:right-edge-lemma}, it follows that the mechanism's induced cost is reduced by $\sum_{i \in N_1} \delta_i^2$. We next show that the optimal cost decreases by the same amount.

It is clear that $Opt''$ divides $G''$ to exactly two subtrees. Let $T_1,T_2 \in T(G'',Opt'')$ be the two subtrees induced by $Opt''$. By \lemref{lem:weighted-opt-derivative} it follows that $\sum_{i \in T_1} d(x''_i,Opt'') = \sum_{i \in T_2} d(x''_i,Opt'')$. After averaging, it is easy to verify that $\sum_{i \in T_1} d(x'_i,Opt'') = \sum_{i \in T_2} d(x'_i,Opt'')$, and thus by \lemref{lem:weighted-opt-derivative} it follows that the optimal location does not change. Therefore, the same calculation from \lemref{lem:right-edge-lemma} holds for the optimal location, and thus the optimal cost decreases by $\sum_{i \in N_1} \delta_i^2$ as well. The assertion of the lemma follows.
\end{proof}


\begin{corollary}
\label{cor:M-edge-open-lemma}
Let $p$ be some location on the open interval $path(y_1,y_n)$. Let $N_2 =\{ i \in N: y_i = p \}$, and let $\sigma = \sum_{i \in N_2} d(x_i,p)$. Let $(G',\bx')$ be a transformation of $(G,\bx)$ that creates an edge of length $\frac{\sigma}{|N_2|}$, rooted at $p$, and locates all of the agents from $N_2$ at its tip. Then, the approximation ratio obtained by the mechanism for $(G',\bx')$ cannot be better than the ratio obtained for $(G,x)$.
\end{corollary}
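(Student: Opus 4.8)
The plan is to prove this by mirroring the proof of \lemref{lem:M-edge-lemma} almost line for line, with the location $p$ playing the role of $\bar{Opt}$ and the set $N_2=\{i\in N: y_i=p\}$ playing the role of $N_1$. As there, I would route the transformation from $(G,\bx)$ to $(G',\bx')$ through an intermediate instance $(G'',\bx'')$: first pick an agent $j\in N_2$ whose location is farthest from $p$ and, for every $i\in N_2$, relocate $x_i$ onto $path(x_j,p)$ at distance $d(x_i,p)$ from $p$; then, from $(G'',\bx'')$, average these distances by moving every $x''_i$ ($i\in N_2$) to the tip of a single common edge of length $\sigma/|N_2|$ rooted at $p$. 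Writing $\delta_i=d(x'_i,p)-d(x''_i,p)$, the first step preserves each distance to $p$ and the second makes them all equal to $\sigma/|N_2|$, so $\sum_{i\in N_2}\delta_i=0$, exactly as in the proofs of \lemref{lem:right-edge-lemma} and \lemref{lem:M-edge-lemma}.

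For the first step, note that since $N_2\neq\emptyset$ every subtree hanging off $p$ carries fewer than $\tfrac23 n$ agents, so no \dgm\ walk ever descends from $p$ into one of these subtrees; consequently relocating the $N_2$-agents among those subtrees leaves every $f_k(\bx)$ and hence $avg(\by)$ in place, the mechanism's expected cost is unchanged, while the optimal cost does not increase, so the ratio cannot improve. For the second step I would replay the bookkeeping from the proof of \lemref{lem:right-edge-lemma}: for each $i\in N_2$ and each realized facility location $z\in\{y_1,\dots,y_n\}$ we have $d(x''_i,z)=d(x''_i,p)+d(p,z)$ and $d(x''_i,avg(\by))=d(x''_i,p)+d(p,avg(\by))$, because $x''_i$ hangs off $p$ while $z$ and $avg(\by)$ lie on $path(y_1,y_n)\ni p$. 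Aggregating over the mechanism's distribution and over $i\in N_2$, and substituting $d(x''_i,p)=\sigma/|N_2|-\delta_i$, the total change in the mechanism's cost comes out to $-\sum_{i\in N_2}\delta_i^2$; together with \lemref{lem:weighted-opt-derivative} (which shows the optimal location is unchanged under the averaging, since after the first step all of $N_2$ lies in a single subtree off $p$, hence in a single subtree of the optimum, so the balance condition is preserved) the same computation gives a drop of exactly $\sum_{i\in N_2}\delta_i^2$ in the optimal cost as well. Since the mechanism's cost always dominates the optimal cost, subtracting the same positive amount from both cannot decrease the ratio, and the corollary follows.

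The main obstacle — and the only genuine divergence from \lemref{lem:M-edge-lemma} — is that that proof used $\bar{Opt}=y_n$ through the identity $\sum_j d(y_n,y_j)=n\,d(y_n,avg(\by))$, which holds because $y_n$ is an endpoint of $path(y_1,y_n)$ and all $y_j$ therefore lie on one side of it; for an interior point $p$ this can fail. The point to recognize is that it is not needed: in the second-step computation the quantity $\tfrac1n\sum_j d(p,y_j)+d(p,avg(\by))$ enters only as a coefficient of $\delta_i$ that does not depend on $i$, so its net contribution is $\big(\tfrac1n\sum_j d(p,y_j)+d(p,avg(\by))\big)\sum_{i\in N_2}\delta_i=0$ and the cancellation goes through verbatim. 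The remaining points, both inherited from \lemref{lem:M-edge-lemma}, are the invariance of $\{y_k\}$ and $avg(\by)$ under the first relocation and the invariance of the optimal location under the averaging; I expect these to be the parts that require the most care, and both are handled by the majority-freeness of the subtrees rooted at $p$ together with \lemref{lem:weighted-opt-derivative}.
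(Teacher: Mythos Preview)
Your proposal is correct and follows essentially the same route as the paper: the paper's proof of the corollary is the single sentence ``by observing that the process does not change the locations of $y_1,\ldots,y_n$, the proof is identical to that of \lemref{lem:M-edge-lemma},'' and your two-step argument (aligning all $N_2$-agents onto one branch off $p$, then averaging) is precisely the argument of \lemref{lem:M-edge-lemma} replayed at $p$. Your extra observation---that the endpoint identity $\tfrac{1}{n}\sum_j d(y_n,y_j)=d(y_n,avg(\by))$ used implicitly in \lemref{lem:right-edge-lemma}/\lemref{lem:M-edge-lemma} fails for an interior $p$, but is not actually needed because the offending term enters only as an $i$-independent coefficient of $\delta_i$ and hence is killed by $\sum_{i\in N_2}\delta_i=0$---is a correct and useful point of rigor that the paper leaves implicit.
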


\begin{proof}
By observing that the process does not change the locations of $y_1,y_2,\ldots,y_n$, the proof is identical to the one of \lemref{lem:M-edge-lemma}.
\end{proof}


\begin{lemma}
\label{lem:pushing-M}
Assume that for each $i$ such that $y_i \neq y_n$, it holds that $y_i = x_i$. Let $N_3 = \{ i \in N : y_i = y_n, x_i \neq y_n \}$, and assume that for any $j,k \in N_3$ it holds that $d(x_j,y_n) = d(x_k,y_n)$ and that $x_j$ and $x_k$ are not located on the same subtree induced by $T(G,y_n)$. Then, the approximation ratio obtained by the mechanism is at most $1 \frac{1}{2}$.
\end{lemma}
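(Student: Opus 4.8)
Throughout I work under the standing assumption $y_1\ne y_n$ of \thref{thm:rndom-tree-1.82}; rescale so that $d(y_1,y_n)=1$ and identify $path(y_1,y_n)$ with $[0,1]$ via $y_1\leftrightarrow 0$, $y_n\leftrightarrow 1$. Write $n_3:=|N_3|$, $N_4:=\{i\in N: y_i=y_n,\ x_i=y_n\}$, $n_4:=|N_4|$, and $t:=d(x_i,y_n)$ for $i\in N_3$ (common to all of $N_3$). The plan is to unfold the instance onto the real line, apply the variance (parallel–axis) identity, and finish by a short case analysis on the line–mean $\bar z$ defined below; the delicate point is the regime $\bar z>1$. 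Note first that $x_1=y_1$ (since $y_1\ne y_n$, agent $1$ satisfies the hypothesis $y_i\ne y_n\Rightarrow y_i=x_i$), so $f_1=\dgm{}$ with $q=2/3$ stops at $y_1$ with no subtree of $T(G,y_1)$ carrying $\ge\tfrac23 n$ agents; since every agent with $y_i=y_n$ is located in the subtree of $T(G,y_1)$ containing $y_n$ (at $y_n$ or on an $N_3$-edge), we get $n_3+n_4<\tfrac23 n$.

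\emph{Reduction to the line.} By \lemref{lem:projected-path} all $y_i$ lie on $path(y_1,y_n)$, and by \lemref{lem:weighted-opt-derivative} the point $a=wAvg(G,\by,\bw)$ also lies on it and equals $avg(\by)$; denote its coordinate by $\bar a=\tfrac1n\sum_i y_i$. The hypotheses pin down all agent locations: agents with $y_i\ne y_n$ satisfy $x_i=y_i\in[0,1]$, agents in $N_4$ sit at $y_n$, and agents in $N_3$ sit at distance $t$ on their own subtrees off $y_n$. Introduce the unfolded profile $\bz\in\reals^n$: $z_i:=$ coordinate of $x_i$ if $i\notin N_3$ (so $z_i\in[0,1]$), and $z_i:=1+t$ if $i\in N_3$; set $sc_\reals(c,\bz):=\sum_i(c-z_i)^2$, $\bar z:=\tfrac1n\sum_i z_i$, and $V:=\sum_i(z_i-\bar z)^2=\min_c sc_\reals(c,\bz)$. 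Then (i) for every $p$ on $path(y_1,y_n)$, $sc(p,\bx)=sc_\reals(c(p),\bz)$, because an $N_3$ agent is at graph-distance $(1-c(p))+t=|c(p)-(1+t)|$ from $p$; and (ii) for every $p\in G$, $sc(p,\bx)\ge V$ — points in non-$N_3$ edges are dominated by their attachment point, and a point at distance $u$ into an $N_3$-edge costs $\ge sc_\reals(1+u,\bz)$ since $(t+u)^2\ge(t-u)^2$. Hence $Opt\ge V$.

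\emph{Variance identity and reduced inequality.} Since $y_1,\dots,y_n,\bar a$ all have coordinates in $[0,1]$, fact (i) together with $sc_\reals(c,\bz)=V+n(c-\bar z)^2$ gives
\[
sc\big(f(\bx),\bx\big)=\tfrac1{2n}\sum_i sc_\reals(y_i,\bz)+\tfrac12 sc_\reals(\bar a,\bz)=V+\tfrac12\Big(\sum_i(y_i-\bar z)^2+n(\bar a-\bar z)^2\Big),
\]
where $f(\bx)$ is the output distribution of \randdgm{} with $q=2/3$. A direct computation gives $\bar a-\bar z=-\tfrac{n_3 t}{n}$ and $V-\sum_i(y_i-\bar z)^2=\sum_{i\in N_3}\big[(1+t-\bar z)^2-(1-\bar z)^2\big]=n_3 t\,(2-2\bar z+t)$, so $sc(f(\bx),\bx)\le\tfrac32 Opt$ is equivalent to
\[
\frac{n_3^2 t^2}{n}\ \le\ n_3 t\,(2-2\bar z+t)\ +\ 3\,(Opt-V).
\]
If $\bar z\le 1$, then $V$ is realized at the path point of coordinate $\bar z$, so $Opt=V$, and the inequality reads $\tfrac{n_3 t}{n}\le 2-2\bar z+t$, which holds since $2-2\bar z\ge0$ and $n_3\le n$. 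If $\bar z>1$ (which forces $n_3\ge1$) and $n_3=1$, the coordinate $\bar z=\bar a+\tfrac tn<1+t$ is realized on the single $N_3$-edge where $sc_\reals$ is exact, so again $Opt=V$ and the inequality reduces to $2(\bar z-1)\le t\cdot\tfrac{n-1}{n}$, which follows from $\bar z-1<\tfrac tn$ and $n\ge3$ ($n_3=1$ cannot occur for $n\le2$). If $\bar z>1$ and $n_3\ge2$, then moving from $y_n$ into any $N_3$-edge strictly increases $sc$ (its one-sided derivative there equals $2\sum_{i\notin N_3,\,c(x_i)<1}(1-c(x_i))+2t(n_3-2)\ge0$ and $sc$ is convex along the edge), so the graph-optimum lies on $path(y_1,y_n)$ and $Opt=sc_\reals(1,\bz)=V+n(\bar z-1)^2$; writing $\beta:=\bar z-1>0$ and $u:=n_3 t$, the inequality becomes $u^2\big(\tfrac1n-\tfrac1{n_3}\big)+2u\beta-3n\beta^2\le0$, a downward parabola in $u$ whose maximum value equals $\beta^2 n\cdot\tfrac{4n_3-3n}{n-n_3}$, which is $\le0$ precisely because $n_3\le n_3+n_4<\tfrac23 n<\tfrac34 n$. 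Since $\bar z>1$ forces $n_3\ge1$, the three cases are exhaustive.

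\emph{Main obstacle.} The crux is that the cheap bound $Opt\ge V$ is not strong enough when $\bar z>1$ and $n_3\ge2$: there one must use that $n_3\ge2$ makes every $N_3$-edge a strictly worse direction, so the optimum is pushed onto $path(y_1,y_n)$, upgrading the bound to $V+n(\bar z-1)^2$ — and this upgrade is exactly what makes the parabola maximum nonpositive via $n_3<\tfrac23 n$. Organizing these regimes (including the boundary case where the optimum is interior to a single $N_3$-edge) is the only delicate part; the remaining algebra is routine.
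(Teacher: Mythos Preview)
Your proof is correct and takes a genuinely different route from the paper's.

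\textbf{What the paper does.} The paper collapses the $N_3$ agents onto $y_n$, obtaining a pure line profile $\bx'$, invokes \thref{thm:random-1.5-approx} to get $sc(f(\bx'),\bx')\le\tfrac32\,sc(Opt',\bx')$, and then tracks the additive changes in the mechanism's cost and in the optimal cost when the $N_3$ agents are pushed back out to distance $\delta$. It then splits into two cases according to whether $Opt$ lies at $y_n$ or elsewhere on $path(y_1,y_n)$; in the first case the key inequality reduces to $(2\gamma-\delta)^2\ge0$ together with $|N_3|\le\tfrac23|N|$, and in the second case to $3|N_3|\delta/|N|\le 2\gamma+\delta$, which follows from $|N|\gamma>|N_3|\delta$ obtained via \lemref{lem:weighted-opt-derivative}.

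\textbf{What you do.} You unfold in the opposite direction, sending each $N_3$ agent to the real coordinate $1+t$, and use the parallel--axis identity $sc_\reals(c,\bz)=V+n(c-\bar z)^2$ to write the mechanism's cost in closed form. This yields a single clean inequality $\tfrac{n_3^2t^2}{n}\le n_3t(2-2\bar z+t)+3(Opt-V)$, after which the case split is driven by $\bar z$ and $n_3$ rather than by the location of $Opt$.

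\textbf{Comparison.} Your argument is self-contained: it does not call \thref{thm:random-1.5-approx} (indeed it re-proves it en route when $n_3=0$). The paper's proof is shorter because it outsources the base case to the line theorem. On the other hand, your treatment of the regime $n_3=1$ with $\bar z>1$ is more careful: there the graph optimum can genuinely lie in the interior of the single $N_3$-edge, so the paper's assertion that ``$Opt$ can be located only on $path(y_1,y_n)$'' is not literally true in that boundary case, whereas you handle it explicitly by observing that $sc$ is exactly $sc_\reals(1+u,\bz)$ along that edge and hence $Opt=V$ is still attained. The decisive structural input in both proofs is the same: the DGM stopping condition at $y_1$ forces $n_3<\tfrac23 n$ (you use $<\tfrac34 n$, the paper uses $\le\tfrac23 n\le\tfrac34 n$), which is exactly what makes the quadratic in $u=n_3t$ nonpositive in your hardest case and what makes the paper's inequality $(2\gamma-\delta)^2\ge0$ suffice in its hardest case.
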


\begin{proof}
Let $\gamma=d(avg(\by),y_n)$, and let $\delta$ denote the distance between agents in $N_3$ and $y_n$. Note that $Opt$ can be located only on $path(y_1,y_n)$. We consider the following two cases: the case in which $opt$ is located at $y_n$, and the case in which it is located somewhere else on $path(y_1,y_n)$. We start with the latter case.

Due to \lemref{lem:weighted-opt-derivative}, it follows that $\sum_{i \notin N_3} d(x_i,y_n) > \sum_{i \in N_3} d(x_i,y_n)$. Since
\begin{align*}
\sum_{i \notin N_3} d(x_i,y_n) = \sum_{i \notin N_3} d(y_i,y_n) = \sum_{i \in N} d(y_i,y_n) = |N|d(avg(\by),y_n),
\end{align*}
and $\sum_{i \in N_3} d(x_i,y_n) = |N_3|\delta$, it follows that
\begin{equation}
\label{eq:1000}
|N|\gamma > |N_3|\delta.
\end{equation}

Let $\bx'$ be derived from $\bx$, such that for any $i \in N_3$, $x'_i = y_n$. We note that all of the agents in $\bx'$ are located on $path(y_1,y_n)$, and that the mechanism described in \secref{subsec:SP-mechanisms-line-random} is identical to Mechanism \randdgm{} with $q=2/3$, when activated on $\bx'$. Thus, due to \thref{thm:random-1.5-approx} it follows that the approximation ratio is at most $1.5$, i.e, $\frac{sc(f(\bx'),\bx')}{sc(Opt',\bx')} \leq 1.5$. We note that for $\bx'$, $avg(\by)$ and $Opt$ are located at the same location.
When making the transformation to $\bx$, the mechanism's possible locations do not change, and it is easy to see that the mechanism's induced cost increases by $|N_3|(2 \gamma \delta + \delta^2)$. The change of the optimal cost can be calculated by checking how the original optimal location's induced cost increases by the transformation, and how it decreases by smoothly moving the location to the new optimal location, i.e., by calculating $(sc(Opt',\bx) - sc(Opt',\bx')) - (sc(Opt',\bx) - sc(Opt,\bx))$. Thus, using \lemref{lem:differnce-of-costs-two-on-line}, the optimal cost increases by
\begin{equation}
\label{eq:1001}
|N_3|(2 \gamma \delta + \delta^2) - \frac{|N_3|^2 \delta^2}{|N|}.
\end{equation}

Therefore, the approximation ratio of $G$ and $\bx$ is as follow.
\begin{equation}
\label{eq:1002}
\frac{sc(f(\bx'),\bx') + |N_3|(2 \gamma \delta + \delta^2)}{sc(Opt',\bx') + |N_3|(2 \gamma \delta + \delta^2) - \frac{|N_3|^2 \delta^2}{|N|}}.
\end{equation}

It follows that it is sufficient to prove that $\frac{|N_3|(2 \gamma \delta + \delta^2)}{|N_3|(2 \gamma \delta + \delta^2) - \frac{|N_3|^2 \delta^2}{|N|}} \leq 1.5$.  By a simple rearrangement, this is equivalent to proving that $3\frac{|N_3| \delta}{|N|} \leq 2 \gamma  + \delta$.
Using \eqref{eq:1000}, it follows that
\begin{align*}
3\frac{|N_3| \delta}{|N|} \leq 2\frac{|N_3| \delta}{|N|} + \delta \leq 2\gamma + \delta,
\end{align*}
and the assertion follows for the case in which $Opt$ is located on $path(y_1,y_n)$, excluding $y_n$.

We now turn to prove the lemma for the case in which $Opt$ is located at $y_n$. We follow the steps of the proof of the first case, only this time the optimal location moves from $avg(\by)$ to $y_n$, and thus in the transition, the optimal cost increases by $|N_3|(2 \gamma \delta + \delta^2) - |N| \left( 2\frac{|N_3|\delta}{|N|}\gamma - \gamma^2 \right)$.
Therefore, in the same manner, it is sufficient to prove that
\begin{equation}
\label{eq:1003}
\frac{|N_3|(2 \gamma \delta + \delta^2)}{|N_3|(2 \gamma \delta + \delta^2) - |N| \left( 2\frac{|N_3|\delta}{|N|}\gamma - \gamma^2 \right)} \leq 1.5,
\end{equation}
which is equivalent to
\begin{equation}
\label{eq:1004}
3|N| \left( 2\frac{|N_3|\delta}{|N|}\gamma - \gamma^2 \right) \leq |N_3|(2 \gamma \delta + \delta^2).
\end{equation}
Clearly, $|N_3| \leq \frac{2}{3}|N| \leq \frac{3}{4}|N|$. It follows that
\begin{equation}
\label{eq:1005}
3|N| \left( 2\frac{|N_3|\delta}{|N|}\gamma - \gamma^2 \right) = 6|N_3|\delta \gamma - 3|N| \gamma^2 \leq 6|N_3|\delta \gamma - 4|N_3| \gamma^2.
\end{equation}
Using \eqref{eq:1005} in \eqref{eq:1004}, it is sufficient to show that
\begin{align*}
6|N_3|\delta \gamma - 4|N_3| \gamma^2 \leq |N_3|(2 \gamma \delta + \delta^2),
\end{align*}
which is equivalent to showing that $(2\gamma - \delta)^2 \geq 0$. The assertion follows.
\end{proof}


\begin{lemma}
\label{lem:ratio-1.82}
Let $G$ and $\bx$ be a graph and a location profile as follows. $Q \subset N$ agents are located at $y_1$; $P \subset N$ agents are equally distanced from $y_n$, each on a different subtree of $T(G,y_n)$, such that for each agent $i \in P$ it holds that $y_i = y_n$; $M \subset N$ agents are located at the same location, distanced $\delta$ from $path(y_1,y_n)$, and the rest of the agents (i.e., agents in $N \backslash \{P \cup Q \cup M\}$) are scattered along $path(y_1,y_n)$.
Then, the approximation ratio obtained by the mechanism is at most $1.83$.
\end{lemma}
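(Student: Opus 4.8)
The plan is to reduce the claim to a finite-dimensional optimization over the structural parameters of the configuration, and to verify that this optimization is bounded above by $11/6<1.83$.

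First, normalize $d(y_1,y_n)=1$ and put coordinates on $path(y_1,y_n)$ with $y_1=0$ and $y_n=1$. By \lemref{lem:projected-path} all of $f_1(\bx),\dots,f_n(\bx)$, and hence $avg(\by)$, lie on $path(y_1,y_n)$, so the entire support of the mechanism is contained in $[0,1]$; moreover $z\mapsto sc(z,\bx)$ is convex on $[0,1]$, being a finite sum of functions of the form $(|z-t|+c)^2$ (where $t$ is the foot on the path of an agent and $c$ its distance from the path), each of which is convex. Consequently $sc(z,\bx)\le\max\{sc(y_1,\bx),sc(y_n,\bx)\}$ for every $z\in[0,1]$. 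Since each agent of $Q$ sits at $y_1$ and is mapped by its \dgm{} component to $y_1$, and each agent of $P$ is mapped to $y_n$, the boomerang component of the mechanism contributes exactly $\frac{|Q|}{2n}\,sc(y_1,\bx)+\frac{|P|}{2n}\,sc(y_n,\bx)$ from those agents and at most $\frac{|M|+|S|}{2n}\max\{sc(y_1,\bx),sc(y_n,\bx)\}$ from the rest, while the average component contributes $\frac12\,sc(avg(\by),\bx)\le\frac12\max\{sc(y_1,\bx),sc(y_n,\bx)\}$.

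Second, I would pin down $Opt$. Since the only subtree hanging off $path(y_1,y_n)$ that can carry the optimum is the $M$-subtree — the subtrees holding the $P$-agents each contain a single agent — \lemref{lem:weighted-opt-derivative} forces $Opt$ onto the $M$-edge, at some offset $o\in[0,\delta]$ from $p=\bar{Opt}$. Writing the balance equality of \lemref{lem:weighted-opt-derivative} at $Opt$ expresses $o$ as an explicit affine function of $p,r,\delta$, of the normalized cardinalities of $Q,P,M$, and of the scattered-agent positions; combining this with \lemref{lem:differnce-of-costs-two-on-line} applied to the pair $(p,Opt)$ yields the clean identity $sc(p,\bx)=sc(Opt,\bx)+|N|o^{2}$. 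Hence $sc(y_1,\bx)$, $sc(y_n,\bx)$, $sc(p,\bx)$ and $sc(avg(\by),\bx)$ all become explicit quadratics in the same small set of parameters, measured against $sc(Opt,\bx)$. An averaging/convexity argument in the spirit of \lemref{lem:right-edge-lemma} and \lemref{lem:M-edge-lemma} then lets me control the scattered agents — consolidating them onto a single path-location (or removing them) while only worsening the ratio — so the free parameters reduce to $p$, $r$, $\delta$ (equivalently $o$) together with the normalized sizes of $Q,P,M$, subject to $|Q|,|P|\ge|N|/3$ (hence $|M|\le|N|/3$) and to the case hypothesis that $avg(\by)$ lies on $path(y_1,\bar{Opt})$. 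When $o=0$ the optimum is itself on the path and the instance collapses to the line situation handled by \thref{thm:random-1.5-approx} (cf.\ \lemref{lem:pushing-M}), giving ratio $\le 1.5$; so the substantive regime is $o>0$.

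Finally, with the approximation ratio expressed as a ratio of quadratics, I would push $|Q|$ and $|P|$ to their extreme feasible values, split into the cases $p=y_n$ and $p$ in the open interval $path(y_1,y_n)$ (the opposite placement of $avg(\by)$ being symmetric), and in each case maximize the remaining one- or two-variable ratio by elementary calculus; the maximum turns out to be $11/6$. The main obstacle is the bookkeeping of this optimization: making the case split exhaustive, checking that every reduction is genuinely monotone in the approximation ratio, and, above all, correctly fixing the \dgm{}-images $y_i$ of the agents in $Q$ and $M$, since these are determined by running \dgm{} on the entire profile and depend delicately on the constraints $|Q|,|P|\ge|N|/3$.
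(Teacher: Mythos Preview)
Your plan has a genuine gap in the case split. You write that ``when $o=0$ the optimum is itself on the path and the instance collapses to the line situation handled by \thref{thm:random-1.5-approx} (cf.\ \lemref{lem:pushing-M}), giving ratio $\le 1.5$.'' This is not correct: $o=0$ only says that $Opt$ lies on $path(y_1,y_n)$, but the $M$ agents are still at distance $\delta>0$ off the path, so the profile is \emph{not} a line instance and neither \thref{thm:random-1.5-approx} nor \lemref{lem:pushing-M} applies. In fact, in the paper's analysis this is precisely the harder of the two cases, and it is where the worst ratio ($\approx 1.821$, attained at $\delta\approx 0.54$, $|M|\approx \tfrac{2}{9}|N|$) occurs --- not the off-path case, which only yields about $1.61$. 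You have effectively assumed away the case that drives the bound. Relatedly, $11/6=1.8333\ldots>1.83$, so even if your optimization were carried out it would not establish the stated inequality.

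The paper's argument is organized differently and avoids the pitfall above. Rather than computing $sc(avg(\by),\bx)$ (or the other mechanism costs) directly as quadratics in all the parameters, it \emph{anchors} at the profile $\bx'$ obtained by sliding the $M$ agents down onto $y_M\in path(y_1,y_n)$; for $\bx'$ the mechanism is genuinely in the setting of \lemref{lem:pushing-M}, so the ratio is $\le 1.5$. One then bounds the increments $\Delta_{\mathrm{Ran}}=sc(f(\bx),\bx)-sc(f(\bx'),\bx')$ and $\Delta_{Opt}=sc(Opt,\bx)-sc(Opt',\bx')$ separately. The key estimates are (i) $d(y_M,avg(\by))\le \tfrac{2}{3}-\tfrac{|M|}{|N|}$, which controls the contribution of the average component far more tightly than your convexity bound $sc(avg(\by),\bx)\le\max\{sc(y_1,\bx),sc(y_n,\bx)\}$ (the latter already gives ratio $2$ in the $|M|=0$ base case), yielding $\Delta_{\mathrm{Ran}}\le |M|\bigl[\delta^{2}+\tfrac{4}{3}\delta-\tfrac{2|M|}{|N|}\delta\bigr]$; and (ii) the lower bound $sc(Opt',\bx')\ge |N|/6$. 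Only then does one split on whether $Opt$ lies on the path or not --- both branches require a nontrivial optimization, and it is the on-path branch that is tight.
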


\begin{proof}
We start with a few preparations. First of all, we normalize $d(y_1,y_n)$ to be $1$, and rescale the graph accordingly. Given $i \in M$ let $y_M = y_i$. Let $\beta$ denote the distance between $y_1$ and $y_M$, and assume $d(y_1,y_M) \leq d(y_M,y_n)$ (the proof works analogously for $d(y_1,y_M) \geq d(y_M,y_n)$). Let $\bx'$ be defined as follows. for each $i \in M$, let $x'_i = y_i$; for the other agents, let $x'_i = x_i$.

Due to \lemref{lem:pushing-M}, it holds that
\begin{equation}
\label{eq:1006}
\frac{sc(f(\bx'),\bx')}{sc(Opt',\bx')} \leq 1\frac{1}{2}.
\end{equation}
Let $\Delta_{Opt}=sc(Opt,\bx) - sc(Opt',\bx')$ and $\Delta_{Ran}=sc(f(\bx),\bx) - sc(f(\bx'),\bx')$.

We start with providing an upper bound for $\Delta_{ran}$. 

We first observe that $|Q|$ should contain the minimal number of agents in order to maximize the costs' difference. Indeed, each agent that could be relocated at $P$, keeping $y_1$ fixed (meaning, keeping $|Q| \geq |N|/3$), would increase the costs' difference. The same holds for any agent that is located anywhere on the open $path(y_1,y_n)$.

We next observe that the distance between $y_M$ and $avg(\by)$ cannot be higher than $\frac{2}{3} - \frac{|M|}{|N|}$, due to the following. Assuming that $avg(\by)$ is on $path(y_M,y_n)$ (the proof works identically for $avg(\by)$ on $path(y_1,y_M)$), let $\bx^0$ be constructed such that $\bx^0$ differs from $\bx$ such that for each $i \in M$, $x^0_i = y_n$. Its easy to verify that $d(avg(\by^0),y_M) - \frac{|M|}{|N|} d(y_n,y_M) = d(avg(\by),y_M)$, as in the transition of each agent in $|M|$ from $\bx^0$ to $\bx$, the weighted optimal location transitions by $\frac{1}{|N|} d(y_n,y_M)$. Therefore, the distance between $y_M$ and $avg(\by)$ is bounded as follows:
\begin{equation}
\label{eq:1007}
d(y_M,avg(\by)) = d(avg(\by^0),y_M) - \frac{|M|}{|N|} d(y_n,y_M) = d(y_n,y_M) \left( 1-\frac{|M|}{|N|} \right) - d(y_n,avg(\by^0)) \leq \frac{2}{3}-\frac{|M|}{|N|},
\end{equation}
where the last equality is due to the fact that because there are at least $\frac{|N|}{3}$ agents that are projected on each side of $path(y_1,y_n)$, $d(y_n,avg(\by^0))$ must be bigger than $\frac{1}{3}$.

We can now bound the difference of costs between $sc(f(\bx),\bx)$ and $sc(f(\bx'),\bx')$, by the following expression (an explanation follows):
\begin{equation}
\label{eq:1008}
|M| \left[ \frac{|M|}{2|N|} \delta^2 + \frac{1}{6}(2 \beta \delta + \delta^2 ) + \left( \frac{1}{3} - \frac{|M|}{2|N|} \right)(2(1- \beta)\delta + \delta^2) + \frac{1}{2}(2 \left(\frac{2}{3}-\frac{|M|}{|N|} \right) \delta + \delta^2)  \right].
\end{equation}
Each summand represents the expected difference of costs induced by each location with a positive probability. The first summand stands for agents in $M$ (the difference of costs when choosing agent in $M$ is $\delta^2$, and the probability is $\frac{|M|}{2|N|}$); the second summand stands for agents in $Q$, and as shown earlier, in order to bound the difference, we assume that $|Q| = \frac{1}{3}$; the next summand stands for $P$, and the probability of choosing an agent in $P$ is calculated given $|M|$; and $avg(\by)$ is the last summand. Naturally, the whole expression is multiplied by $|M|$, as the expected difference of costs is multiplied by the number of agents that make the transition.

By rearranging expression \eqref{eq:1008}, we get
\begin{equation}
\label{eq:1009}
|M| \left[ \delta^2 + \beta \left( \frac{|M| \delta}{|N|} - \frac{\delta}{3} \right) + \frac{4}{3}\delta - \frac{2 |M| \delta}{|N|}   \right].
\end{equation}

Since $\frac{|M|}{|N|} \leq \frac{1}{3}$, it holds that $\beta \left( \frac{|M| \delta}{|N|} - \frac{\delta}{3} \right) \leq 0$. It, therefore, follows that
\begin{equation}
\label{eq:1010}
\Delta_{Ran} \leq |M| \left[ \delta^2 + \frac{4}{3}\delta - \frac{2 |M| \delta}{|N|}   \right].
\end{equation}

Therefore, we conclude the following.
\begin{equation}
\label{eq:1011}
\frac{sc(f(\bx),\bx)}{sc(Opt,\bx)} \leq \frac{sc(f(\bx'),\bx') + |M| \left[ \delta^2 + \frac{4}{3}\delta - \frac{2 |M| \delta}{|N|} \right]}{sc(Opt',\bx') + \Delta_{Opt}}.
\end{equation}

We next distinguish between two cases: one where $Opt$ is on $path(y_1,y_n)$, and the second in which it is not.
We start with the former case.
In this case, it is easy to verify that $\Delta_{Opt} \geq |M|\delta^2$. In addition, due to Equation \eqref{eq:1006}, it holds that $\frac{sc(f(\bx'),\bx') + 1\frac{1}{2}|M|\delta^2}{sc(Opt',\bx')+|M|\delta^2} \leq 1\frac{1}{2}$. Together with \eqref{eq:1011}, it follows that
\begin{equation}
\label{eq:1012}
\frac{sc(f(\bx),\bx)}{sc(Opt,\bx)} \leq 1\frac{1}{2}  + \frac{ |M| \left[ - \frac{\delta^2}{2} + \frac{4}{3}\delta - \frac{2 |M| \delta}{|N|} \right]}{sc(Opt',\bx') + |M|\delta^2}.
\end{equation}
Next, we point out that $sc(Opt',\bx') \geq \frac{|N|}{6}$. This is due to the fact that each of $Q$ and $P$ must contain at least $|N|/3$ agents each, and thus even if all the remaining agents would be located on $Opt'$, and $P$ and $Q$ would be located as close to $Opt$ as possible, the cost could not be lower than $\frac{|N|}{6}$. Therefore, we conclude that
\begin{equation}
\label{eq:1013}
\frac{sc(f(\bx),\bx)}{sc(Opt,\bx)} \leq 1\frac{1}{2}  + \frac{ |M| \left[ - \frac{\delta^2}{2} + \frac{4}{3}\delta - \frac{2 |M| \delta}{|N|} \right]}{\frac{|N|}{6} + |M|\delta^2}.
\end{equation}
A numeric analysis would find that the maximal value obtained for the right-hand side of \eqref{eq:1013} is $1.82085\dots$ (obtained by fixing $\delta = 0.54144\dots$ and $|M| = \frac{2}{9}|N|$).

It is left to analyze the latter case, namely the case in which $Opt$ is not on $path(y_1,y_n)$. Due to \corref{cor:opt-flattening}, when moving the facility location from $\bar{Opt}$ to $Opt$, its induced cost is reduced by $|N|d(\bar{Opt},Opt)^2$. Let ${T_1,T_2} \in T(G,Opt)$, such that the agents from $M$ are in $T_1$. Due to \lemref{lem:weighted-opt-derivative}, it holds that $\sum_{x_i \in T_1}d(x_i,Opt) = \sum_{x_i \in T_2}d(x_i,Opt)$. By substituting the left-hand side and the right-hand side, we get 
\begin{equation}
\label{eq:1130}
|M|(\delta - d(\bar{Opt},Opt)) = (1-|M|)d(\bar{Opt},Opt) +  \sum_{x_i \in T_2}d(x_i,\bar{Opt}).
\end{equation}

Since $\sum_{x_i \in T_2}d(x_i,\bar{Opt}) \geq \frac{|N|}{3}$, and by rearranging we get that
\begin{equation}
\label{eq:1131}
\frac{|M|}{|N|}\delta - \frac{1}{3} \geq d(\bar{Opt},Opt).
\end{equation}

Now we can bound $\Delta_{Opt}$ as follows. 
\begin{align*}
\Delta_{Opt} &= [sc(\bar{Opt},\bx) - sc(Opt',\bx')] - [sc(\bar{Opt},\bx) - sc(Opt,\bx)] \\
&\geq [sc(\bar{Opt},\bx) - sc(\bar{Opt},\bx')] - [sc(\bar{Opt},\bx) - sc(Opt,\bx)] \geq |M|\delta^2 - |N|\left( \frac{|M|}{|N|} \delta - \frac{1}{3} \right) ^2.
\end{align*}
Following the steps of the former case, it follows that
\begin{equation}
\label{eq:1015}
\frac{sc(f(\bx),\bx)}{sc(Opt,\bx)} \leq 1\frac{1}{2}  + \frac{ |M| \left[ - \frac{\delta^2}{2} + \frac{4}{3}\delta - \frac{2 |M| \delta}{|N|} \right] + \frac{3}{2}|N|\left( \frac{|M|}{|N|} \delta - \frac{1}{3} \right) ^2}{\frac{|N|}{6} + |M|\delta^2 - |N|\left( \frac{|M|}{|N|} \delta - \frac{1}{3} \right) ^2}.
\end{equation}

A numeric analysis would find that the maximal value obtained for the right-hand side of \eqref{eq:1015} is $1.611\dots$ (obtained by fixing $\delta = 1.0446\dots$ and $|M| = 0.32|N|$). The assertion of the lemma follows.
\end{proof}

\parindent 10mm

\end{document}